\documentclass[11pt]{article}
\pdfoutput=1
\usepackage[utf8]{inputenc}
\usepackage{fullpage}

\usepackage{microtype}
\usepackage{graphicx}
\usepackage{booktabs} 

\usepackage{algorithm}
\usepackage{algorithmic}

\usepackage{hyperref}


\usepackage{amsmath}
\usepackage{amssymb}
\usepackage{mathtools}
\usepackage{amsthm}

\usepackage{physics}
\usepackage{caption}
\usepackage{subcaption}
\usepackage{complexity}

\usepackage[capitalize,noabbrev]{cleveref}

\usepackage{multirow}

\theoremstyle{plain}
\newtheorem{theorem}{Theorem}[section]

\newtheorem{lemma}[theorem]{Lemma}
\newtheorem{corollary}[theorem]{Corollary}
\newtheorem{claim}[theorem]{Claim}

\theoremstyle{definition}
\newtheorem{definition}[theorem]{Definition}

\theoremstyle{remark}


\newtheorem{example}{Example}
\crefname{problem}{Problem}{Problems}
\crefname{observation}{observation}{observations}

\newcommand{\eps}{\epsilon}

\usepackage[textsize=tiny]{todonotes}
\usepackage{natbib}

\title{Individual Fairness in Graph Decomposition}

\newcommand*\samethanks[1][\value{footnote}]{\footnotemark[#1]}

\author{
  Kamesh Munagala\thanks{Supported by NSF grant CCF-2113798.} \qquad Govind S. Sankar\samethanks[1] \\
  Computer Science Department \\
  Duke University \\
  \texttt{\{munagala,govind.subash.sankar\}@duke.edu} 
}

\makeatletter
\newcommand{\StatexIndent}[1][3]{%
}
\makeatother

\begin{document}
\maketitle




\begin{abstract}
In this paper, we consider classic randomized low diameter decomposition procedures for planar graphs that obtain connected clusters which are cohesive in that close-by pairs of nodes are assigned to the same cluster with high probability. We require the additional aspect of {\em individual fairness} -- pairs of nodes at comparable distances should be separated with comparable probability. We show that classic decomposition procedures do not satisfy this property. We present novel algorithms that achieve various trade-offs between this property and additional desiderata of connectivity of the clusters and optimality in the number of clusters. We show that our individual fairness bounds may be difficult to improve by tying the improvement to resolving a major open question in metric embeddings. We finally show the efficacy of our algorithms on real planar networks modeling congressional redistricting.
\end{abstract}

\section{Introduction}
In this paper, we study a clustering problem of partitioning users located in a metric space into components, each of diameter $O(R)$ for some input parameter $R$. Such partitioning has been widely studied as {\em low diameter decomposition} (LDD) and has myriad applications. In addition to being a stand-alone clustering procedure, they are, for example, used as a subroutine in metric embeddings
\cite{bartal96metricembedding, fakcharoenphol04tree-embedding}.

In this paper, we consider the question of fairness in how different pairs of nodes are separated in the decomposition. In keeping with terminology in the fairness literature, we term this ``individual fairness'' of the decomposition. 


\subsection{Motivation for Individual Fairness}
As motivation, consider the clustering problem of opening schools or libraries and assigning users to these locations. 
In such assignments, it is desirable  to preserve communities in the sense that nearby users should be assigned to the same cluster. For instance, in Congressional redistricting \cite{altman98redistricting,cohenaddad18redistricting}, it is preferable to keep communities from being fractured \cite{brubach2020pairwise}. Similarly, when assigning students to schools, students who live close to each other prefer to be assigned to the same school. This is termed {\em community-cohesion} \cite{ashlagi2014community-cohesion}. 
Of course, no deterministic solution can guarantee community cohesion for all user pairs since some close-by user pairs will be deterministically split. In view of this impossibility, we take the approach of \citet{brubach2020pairwise,Chance} and consider solutions that are a distribution over clusterings.  In such a distribution, different pairs of users are separated with different probabilities, and this probability will typically increase with the distance between the pair. This naturally leads to the notion of {\em pairwise envy-freeness} -- pairs of users with comparable distance should be separated with comparable probability. This ensures no pair of users envies a comparable pair in terms of how cohesive the pair is in the decomposition.   

As another motivation, in Congressional redistricting, one pertinent question is to generate a {\em random sample} of representative maps to audit a given map for unfair gerrymandering~\cite{deford2019recombination}. Ideally, in this random sampling distribution, the marginal probabilities of separation of pairs of precincts of comparable distance should be comparable. This means the probability that any pair of precincts is separated should be {\em both} lower and upper bounded by a value that depends on the distance between them. 

We term this notion of fairness as {\em individual fairness} due to its connection to a corresponding notion in fair machine learning \cite{DworkIndividual}. The main goal of our paper is to study the existence and computation of LDDs that are individually fair.

\subsection{Randomized LDD in Planar Graphs}
Before proceeding further, we present a formal model for randomized LDD. We consider the canonical case where the users (resp. neighborhoods or other entities) are nodes of a planar graph $G(V,E)$ and the distance function $d(u,v)$ is the unweighted shortest path distance between nodes $u$ and $v$.   We now define an LDD as follows:

\begin{definition} [Low Diameter Decomposition]
\label{def:ldd}
Given a graph $G(V,E)$ with shortest path distance $d(\cdot, \cdot)$, and given a parameter $R$ (where potentially $R \ge n$), a low-diameter decomposition (LDD) is a (possibly randomized) partition of $V$ into disjoint clusters $\pi_1, \pi_2, \ldots, \pi_r$, so that for $j$ and $u,v \in \pi_j$, $d(u,v) = O(R)$.
\end{definition}

In the applications previously mentioned, planar graphs are a natural model. For instance, in political redistricting, nodes correspond to precincts, and edges connect nodes whose corresponding precincts are adjacent. Similarly, in school assignment, nodes correspond to neighborhoods that cannot be split, with edges between nodes corresponding to adjacent neighborhoods. If the LDD partitions the planar graph into connected regions, these regions will also be connected when the nodes are expanded into the corresponding precincts or neighborhoods. This property is an advantage of planar graphs  over point set models, where a contiguous set of nodes need not be connected when they are expanded. Finally, the radius constraint on the LDD ensures the expanded regions are sufficiently compact.

Now consider a pair $(u,v)$ of users separated by distance $d(u,v)$. Several randomized LDD constructions lead to cohesive communities by making the probability that $(u,v)$ are separated an increasing function of $d(u,v)$. This ensures close-by pairs are together with a larger probability than pairs further away. The natural benchmark for this probability is the quantity $\rho_{uv} = \frac{d(u,v)}{R}$, as the following example shows.

\begin{example}
\label{eg:grid}
Consider a grid graph where vertex $v$ with coordinates $(i,j)$ has edges to the vertices $(i-1,j)$, $(i+1,j)$, $(i,j-1)$, and $(i,j+1)$. A simple LDD is to choose $(\ell,m)$ uniformly at random from $[R] \times [R]$ and cut along the $x$-axis at $\{\ldots, \ell - 2R, \ell -R,  \ell, \ell + R, \ldots \}$ and the $y$-axis at $\{\ldots, m - 2R, m -R,  m, m + R, \ldots \}$. Then, the probability any pair of vertices $(u,v)$ is separated is at most $\rho_{uv}$. 
\end{example}

Indeed, for any planar graph, the classic procedure of \citet{KPR} (KPR, see \cref{alg:kpr}) provides a practical LDD method for a planar graph into connected components of diameter $O(R)$ so that the probability of separation of any pair $(u,v)$ is $O(\rho_{uv})$. 

\subsection{Individual Fairness in LDD} 
We now come to the main focus of this paper: individual fairness. For pairs of nodes $(u_1, v_1)$ and $(u_2,v_2)$ with $d(u_1,v_1) \approx d(u_2,v_2)$, both pairs should be separated with comparable probability, so that neither pair envies the other. Ideally, $(u,v)$ are separated with probability $\Theta(\rho_{uv})$. 

In \cref{eg:grid}, the probability that $(u,v)$ are separated is at least $\rho_{uv}/2$, so that this probability is indeed $\Theta(\rho_{uv})$. This may give us hope that the classical KPR procedure also  recreates such an individual fairness guarantee for general planar graphs. Note that the classical analysis implies an upper bound of $O(\rho_{uv})$ on the probability $(u,v)$  is separated, but no corresponding lower bound.  

\paragraph{Existing Methods can Fail.} As our first contribution (\cref{thm:kpr-negative}), we somewhat counter-intuitively show that the KPR procedure (\cref{alg:kpr}) cannot yield an individual fairness guarantee -- for any distance $d$, some pair $(u,v)$ with distance $d(u,v) = d$ is never separated. 

\begin{theorem}[Proved in \cref{sec:kpr-negative}]
\label{thm:kpr-negative}
For any $n\in \mathbb{N}$ and any distance $1\leq d\leq \frac{\sqrt{n}}{3}$, there exists a planar graph $G_d$ and vertices $u,v\in V(G)$ with $d(u,v)=d$ such that \cref{alg:kpr} 
with
parameter $R > 2d$ never separates $u$ and $v$.
\end{theorem}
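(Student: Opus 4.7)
The plan is to exhibit a planar graph $G_d$ on $n$ vertices together with two vertices $u, v$ at distance exactly $d$, and argue that no execution of Algorithm~\ref{alg:kpr} separates $u$ and $v$. The classical KPR analysis only gives $\Pr[\text{sep}] = O(d/R)$, which leaves open whether this probability can collapse all the way to $0$ on an adversarial graph; our task is to realize such a collapse. A natural candidate for $G_d$ is a symmetric ``gadget'' around $u, v$, such as two internally disjoint paths of length $d$ between $u$ and $v$ (forming a cycle $C_{2d}$ with $u, v$ antipodal), padded with additional vertices attached at positions equidistant from $u$ and $v$ so that $|V(G_d)| = n$ and $G_d$ remains planar. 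The condition $d \le \sqrt{n}/3$ provides enough room to add the padding while keeping the central symmetric structure intact.

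My analysis would trace KPR's recursion round by round, maintaining the invariant that after each round the cluster containing $u$ also contains $v$. In each round, KPR picks a root $r$ in the current cluster, performs a BFS, and cuts at levels shifted by a random $\beta \in [0, R)$. For any root $r$ lying in the symmetric padding one has $d(r, u) = d(r, v)$ by construction, so $u$ and $v$ occupy the same BFS level and fall into the same post-cut cluster for every $\beta$. Iterating across KPR's constant recursion depth for planar graphs would then establish that $u$ and $v$ are never separated.

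The hard part is handling roots $r$ for which $d(r, u) \neq d(r, v)$, which are unavoidable whenever $r$ lies on some shortest $u$-to-$v$ path (any interior vertex of such a path breaks the $u$-$v$ symmetry). For such roots, some shifts $\beta$ would naively place $u$ and $v$ into different BFS level buckets, seemingly separating them. The construction must be engineered so that either Algorithm~\ref{alg:kpr}'s specific root-selection rule avoids such asymmetric $r$ in the cluster currently containing $u, v$, or so that the cycle $C_{2d}$ provides a parallel shortcut that places $u$ and $v$ in the same connected component of the offending level bucket even when the cut falls between their natural BFS levels. Verifying rigorously that one of these alternatives holds, and that the invariant survives every round of the recursion, constitutes the technical core of the proof; the constraint $R > 2d$ will be used crucially here to guarantee that the $u$-$v$ gadget is never chopped through by a single cut.
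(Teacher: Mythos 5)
Your high-level intuition matches the paper's: keep $u$ and $v$ at the same BFS level by rooting at vertices equidistant from both, and use $R>2d$ together with the fact that the BFS tree has depth at most $2d$ to ensure only one cut per phase can touch the gadget. You also correctly identify the two rescue mechanisms the paper actually deploys for asymmetric situations (a designated preference order over roots, and a second vertex-disjoint $u$--$v$ path that survives any single cut). However, the proposal stops exactly where the proof begins: you explicitly defer ``verifying rigorously that one of these alternatives holds'' to future work, and the candidate construction you do sketch --- a cycle $C_{2d}$ with padding attached at points equidistant from $u$ and $v$ --- does not survive even two phases. After the first cut, the component containing $u,v$ can be reduced to a single arc of the cycle (a path of length less than $2d$ containing $u$ and $v$ at distance $d$ along it). Any root in that path places $u$ and $v$ at levels differing by up to $d$, there is no second disjoint $u$--$v$ path left, and since the path has fewer than $R$ levels, the single cut falls strictly between their levels with probability about $d/R>0$ and separates them. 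So the invariant you want to maintain genuinely breaks for this gadget.

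The paper's construction is substantially more elaborate precisely to close this hole: it designates three roots $r_0,r_1,r_2$ with a preference order (which is legitimate because \cref{alg:kpr} chooses roots arbitrarily, and is later shown robust to uniformly random roots by attaching stars of decreasing sizes to the $r_i$), places $u,v,r_1,r_2$ all at distance $d$ from $r_0$, and adds a web of auxiliary length-$d$ paths so that in \emph{every} one of the three phases the surviving component either has $u,v$ at equal BFS level from the designated root, or contains two vertex-disjoint $u$--$v$ paths lying entirely above and entirely below level $d$, which a single cut cannot both destroy. Producing such a graph, checking planarity, and running the case analysis over the (constantly many) cut positions in each of the three phases is the actual content of the theorem; without it, the claim that the separation probability is exactly zero (rather than merely small) is not established.
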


This means that though the procedure guarantees that the probability that $(u,v)$ is separated is $O(\rho_{uv})$, the actual probability of separation -- zero -- need not be comparable to this quantity. Therefore, different vertex pairs of comparable distance can be separated with vastly different probabilities, and this holds even when the vertex pair is an edge.

\subsection{Achieving Individual Fairness: Our Results} 
Given this negative result, our main contribution is a set of algorithms for randomized planar graph partitioning that achieve various levels of individual fairness. We study the trade-offs between three natural desiderata in randomized LDDs  -- each part in the decomposition should be a connected component, these components should have non-trivial size, and the partitioning should be individually fair. 

We first state the above desiderata formally:

\begin{description}
\item[(CON):] Each cluster  forms a single connected component. This is important in applications such as school assignment or political redistricting~\cite{duchin}.

\item[(COMP):] The LDD should optimally compress the graph.  For a general unweighted planar graph with $n$ nodes, we say that a procedure satisfies (COMP) if (i) the expected number of components produced is $ O(n/R)$;\footnote{This bound cannot be improved for a line graph.} and (ii) for any outcome of randomness, the algorithm should achieve non-trivial compression: When $R = \frac{D}{\beta}$, where $D$ is the diameter of the graph and $\beta \ge 1$, there always exist $\Omega(\beta)$ components with diameter $\Omega(R)$.

\item[$(f,g)$-(IF):] This condition captures individual fairness. The probability that $(u,v)$ are separated is at most $f(\rho_{uv})$ and at least $g(\rho_{uv})$, for scalar functions $f,g$.
For an arbitrary function $h$, we say that $f(\rho)=O(h(\rho))$ if $f(\rho_{uv})=O(h(\rho_{uv}))$ for every pair $u,v$; similarly for $g$. In the ideal case, as in \cref{eg:grid}, $\Pr[(u,v) \mbox{ is cut}] = \Theta(\rho_{uv})$ for all $u,v$, so that $f(\rho) = O(\rho)$ and $g(\rho) = \Omega(\rho)$ simultaneously.
\end{description}

Note that the KPR procedure (\cref{alg:kpr}) achieves (CON), (COMP), and $f(\rho) = O(\rho)$. However, by \cref{thm:kpr-negative}, it has $g(\rho) = 0$, and is hence not individually fair. 

\paragraph{Fairness for Edges.}  We first show that if we only seek to be individually fair for pairs $(u,v) \in E$, that is, edges or $d(u,v) = 1$, then there is a modification of the KPR procedure via random vertex weights (\cref{alg:kpr-randwts}) that preserves (COMP) and (CON), and guarantees that any edge $(u,v)$ is separated with probability $\Theta\left(\frac{1}{R}\right)$.  We state the theorem in a bit more generality below.
We show a tighter bound for the case of $d(u,v)=1$ since we present experiments on this case in \cref{sec:implement1}.


\begin{theorem}[Proved in \cref{sec:alg-randwts}]
\label{thm:alg-randwts}
There is an LDD procedure (\cref{alg:kpr-randwts}) that produces connected components of diameter at most $43 \cdot (R+1)$, satisfies (COMP), and every pair $(u,v) \in V \times V$ satisfies
\begin{align*}
\frac{1}{2R}\leq \Pr[u,v \text{  separated}]\leq \begin{cases}
3\rho_{uv}+\frac{3}{R}  &\text{ if  $d(u,v)>1$}\\
\frac{3}{R}  &\text{ if  $d(u,v)=1$}\\
\end{cases}.
\end{align*}
\end{theorem}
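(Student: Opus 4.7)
The plan is to analyze a random-weight variant of the KPR ball-growing procedure: for each of the constant number of BFS iterations KPR uses on a planar (equivalently, $K_{3,3}$-minor-free) graph, each vertex $v$ is given an independent random offset $w_v$ drawn uniformly from a small interval, and the ball-growth radius is drawn uniformly from $\{1,\dots,R\}$ as in classical KPR. The clusters are then the balls produced across the (at most three) nested iterations.

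For the \emph{upper bound} on $\Pr[(u,v)\text{ separated}]$, the plan is to condition on the random weights and invoke the classical KPR cut-probability analysis within each iteration: since only $O(d(u,v))$ of the $R$ possible radii can slice the ball-boundary between $u$ and $v$, the per-iteration cut probability is $O(\rho_{uv})$. Summing (by union bound) over the three iterations yields the $3\rho_{uv}$ term. The additive $O(1/R)$ contribution comes from the event that the random offsets $w_u, w_v$ alone cross a radius threshold; this event has probability $O(1/R)$ and accounts for the term even when $d(u,v)$ is small. When $d(u,v)=1$, $\rho_{uv}=1/R$, and a careful accounting collapses both contributions to $3/R$.

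The main obstacle is the \emph{lower bound} $\Pr[(u,v)\text{ separated}]\ge 1/(2R)$, which is the new ingredient beyond classical KPR. Recall from \cref{thm:kpr-negative} that without the randomization, some pairs are never separated, so the weights must be doing real work. The plan is to fix an arbitrary pair $(u,v)$ and isolate a single ball-growing step in which, regardless of the local graph structure, one can lower-bound the probability that the weighted ball grown around the current root contains $u$ but not $v$ (or vice versa). Concretely, for any ordering in which the algorithm considers vertices, one shows that the random radius hits a specific unit-width window with probability $1/R$, and that the independent weight draws $w_u,w_v$ on a $[0,1]$-scale ensure that within that window, the two vertices fall on opposite sides of the ball boundary with probability at least $1/2$, giving the claimed $1/(2R)$ bound. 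The difficulty is that this argument must hold for every pair and every possible configuration of prior clustering decisions, so the weight distribution and ordering must be chosen carefully to decouple the randomness of the pair $(u,v)$ from the rest of the execution.

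Finally, the diameter bound of $43(R+1)$ follows by tracking radii through the three nested KPR iterations: each iteration grows balls of radius at most $R$ plus an additive $O(1)$ weight contribution, and the standard KPR recursion on planar graphs inflates the diameter by a constant factor per level; chasing constants gives $43$. For (COMP), I would establish the expected $O(n/R)$ bound by a charging argument showing each produced cluster contains $\Omega(R)$ vertices in expectation, and the deterministic $\Omega(\beta)$ bound when $R=D/\beta$ by observing that any diametral path of length $D$ intersects $\Omega(D/R)=\Omega(\beta)$ clusters, each of which must have diameter $\Omega(R)$ since its intersection with the path has length $\Omega(R)$.
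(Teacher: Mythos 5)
Your core argument for the fairness bounds is essentially the paper's: the paper's \cref{alg:kpr-randwts} perturbs the BFS level of each boundary-level vertex by an independent bit $b_v\in\{0,1\}$, and the lower bound comes from exactly the two-stage event you describe -- the random cut threshold lands at the right level with probability $1/R$, and the independent perturbations of $u$ and $v$ then put them on opposite sides of the boundary with probability $1/2$ (restricting attention to the first phase, so no decoupling from later phases is needed). The upper bound and the diameter bound (run the KPR analysis with parameter $R+1$) also match the paper. So on the main content of the theorem your plan is sound.

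The gap is in your treatment of (COMP). First, the claim that ``each produced cluster contains $\Omega(R)$ vertices in expectation'' is not a valid route to the expected $O(n/R)$ bound (you cannot divide $n$ by an expected cluster size), and its deterministic version is false: take a path of length $R$ whose far endpoint has $n-R$ pendant leaves; when the cut falls at the level just below that endpoint, every leaf becomes a singleton cluster. The paper instead bounds the expected number of \emph{new components created by the cuts}: if $a_j$ is the number of BFS-tree edges at level $j$, cutting level $j$ creates exactly $a_j$ new components, each level is cut with probability $1/R$, and $\sum_j a_j\le n$, giving $n/R$ per phase (\cref{lem:number}). Second, for the deterministic part of (COMP), your diametral-path argument does not work as stated: a shortest path of length $D$ can pass through many clusters with only a short intersection in each, so ``its intersection with the path has length $\Omega(R)$'' does not follow for each intersected cluster. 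The paper argues via BFS depth: the first-phase BFS tree has depth at least $D/2$, so the level cuts produce $\lfloor\beta/4\rfloor$ components each spanning $R$ consecutive levels and hence of diameter $\Omega(R)$, and this survives (with constant-factor loss) through the remaining phases.
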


\paragraph{Lower Bound on Individual Fairness.}  Moving on to pairs $(u,v)$ separated by general distances, we show that obtaining an individual fairness guarantee of $\Theta(\rho_{uv})$  resolves a major conjecture in metric embeddings and is hence likely difficult.  Formally, we tie individual fairness in LDD precisely to the distortion of embedding the graph metric into $\ell_1$. Such an embedding is a function $h$ that maps the vertices $v$ of $G$ to $s$-dimensional points $h(v)$ (for some $s$) with the guarantee that the $\ell_1$ distance between $h(u)$ and $h(v)$ is at least $d(u,v)$, and at most $\alpha \cdot d(u,v)$. The parameter $\alpha \ge 1$ is termed the {\em distortion}.

\begin{theorem}[Proved in \cref{sec:equiv}]
\label{thm:equiv}
If there is an LDD procedure for planar graph\footnote{This theorem holds for arbitrary metric spaces. } $G$ that with any parameter\footnote{Note that \cref{def:ldd} allows for the parameter $R$ to be larger than the diameter of $G$. In this regime, the constraint on the diameter is vacuous, and the only requirement is to satisfy the bounds on the separation probabilities.} $R = \omega(n)$ that achieves $f(\rho)/g(\rho) \le \gamma$ and $f(\rho) = O(\rho)$, then there is an $\ell_1$ embedding of $G$ with distortion $O(\gamma)$. 
\end{theorem}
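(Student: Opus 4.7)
The plan is to use the dual characterization of $\ell_1$ embeddability: a finite metric $(V,d)$ embeds into $\ell_1$ with distortion $\alpha$ iff there is a nonnegative combination $\sum_S \lambda_S \delta_S$ of cut semimetrics $\delta_S(u,v)=\mathbf 1[|\{u,v\}\cap S|=1]$ that approximates $d$ to within an aspect ratio of $\alpha$. I would obtain such a combination directly from the LDD's distribution over partitions.

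Concretely, invoke the procedure at a scale $R=\omega(n)$; since $R>\operatorname{diam}(G)$, the cluster-radius constraint is vacuous and only the separation bounds $g(\rho_{uv})\le\Pr[\text{sep}]\le f(\rho_{uv})$ are actually used. Let $\mathcal D$ denote the induced distribution over partitions. For each partition $\pi$ in the support of $\mathcal D$ and each cluster $C\in\pi$, introduce one coordinate $\phi_{(\pi,C)}(v)=\lambda\cdot\Pr[\pi]\cdot\mathbf 1[v\in C]$, where $\lambda$ is an overall scaling chosen below. Because any $\pi$ that separates $u$ from $v$ has exactly two clusters distinguishing them (the ones containing $u$ and $v$), a direct calculation gives
\[
\|\phi(u)-\phi(v)\|_1 \;=\; 2\lambda\cdot\Pr_{\pi\sim\mathcal D}[u,v\text{ separated by }\pi].
\]

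Setting $\lambda=R/(2\kappa)$, where $\kappa$ is the constant hidden in $f(\rho)=O(\rho)$, the upper-IF bound yields $\|\phi(u)-\phi(v)\|_1\le (R/\kappa)\,f(\rho_{uv})\le d(u,v)$, while the lower-IF bound combined with $g\ge f/\gamma$ gives $\|\phi(u)-\phi(v)\|_1\ge(R/\kappa)\,f(\rho_{uv})/\gamma$. Provided $f$ is taken as the tight linear envelope of separation probabilities so that $f(\rho)=\Theta(\rho)$, the right-hand side is $\Omega(d(u,v)/\gamma)$, giving an embedding into $\ell_1$ of dimension equal to the number of (partition, cluster) pairs in the support, with distortion $O(\gamma)$.

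The main obstacle is this last step: converting $g\ge f/\gamma$ into a bound of the form $\Omega(\rho/\gamma)$ on the separation probability requires $f$ to grow at least linearly in $\rho$, not only at most linearly. I would address this by taking $f(\rho)=\sup_{\rho_{uv}=\rho}\Pr[\text{sep}]$, i.e.\ the tightest upper bound achievable by the procedure, so that the stated linear growth hypothesis becomes genuinely two-sided once the $f/g\le\gamma$ constraint is applied. The link to the ``major conjecture in metric embeddings'' referenced in the paper is precisely this: achieving $\gamma=O(1)$ individually fair LDDs would yield constant-distortion $\ell_1$ embeddings for planar graphs, which remains a longstanding open problem.
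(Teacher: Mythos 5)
Your proof is correct and is, in my view, a cleaner route than the paper's. The paper's argument is probabilistic: it samples $\eta = mR$ independent partitions from the LDD, assigns each cluster a random $\pm 1/m$ token, and uses a Chernoff bound to show that $\norm{h(u)-h(v)}_1$ concentrates around $R \cdot \Pr[(u,v)\text{ sep}]/ \cdot m \approx d(u,v)$ simultaneously for all pairs. You instead go through the cut-cone characterization of $\ell_1$: one coordinate per (partition, cluster) pair in the support of the LDD distribution, weighted by the partition's probability. Your identity $\norm{\phi(u)-\phi(v)}_1 = 2\lambda\,\Pr[u,v\text{ sep}]$ is an exact combinatorial fact (exactly two clusters distinguish a separated pair), so you avoid sampling and concentration entirely. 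The paper's version buys a polynomial-dimension embedding for free; yours is potentially exponential-dimensional, though a Carath\'eodory reduction on the cut cone would bring it back to $O(n^2)$ dimensions if one cares.

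The subtlety you flag at the end is a genuine imprecision in the theorem statement, not a weakness of your argument relative to the paper's. To get distortion $O(\gamma)$ from $\norm{\phi(u)-\phi(v)}_1 = 2\lambda\,\Pr[\text{sep}]$, one needs the separation probability to be within a $\gamma$-factor window of a \emph{linear} function of $\rho$; the hypothesis ``$f/g\le\gamma$ and $f(\rho)=O(\rho)$'' does not by itself rule out, say, $\Pr[\text{sep}]\equiv c\rho^2$ with $f=g=c\rho^2$ and $\gamma=1$, which would give distortion proportional to the aspect ratio of the metric rather than $O(1)$. The paper's own proof sidesteps this by directly positing $\Pr[\text{sep}]\in[c_1\rho, c_2\rho]$, i.e.\ it silently assumes $f(\rho)=\Theta(\rho)$. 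Your proposed fix --- taking $f$ to be the pointwise supremum $\sup_{\rho_{uv}=\rho}\Pr[\text{sep}]$ --- does not actually force $f(\rho)=\Omega(\rho)$ (the $\rho^2$ example survives it). The right reading is simply that the theorem's hypothesis should be $g(\rho)=\Omega(\rho/\gamma)$, which is exactly what the intended application ($g(\rho)=\Omega(\rho)$, $f(\rho)=O(\rho)$, $\gamma=O(1)$) supplies; under that reading both your proof and the paper's go through without issue.
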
 

The best-known distortion bound for embedding a planar graph into $\ell_1$ is $O(\sqrt{\log n})$, and it is a major open question to improve this to a constant. This implies obtaining the ``optimal bound''  -- $f(\rho) = O(\rho)$ and $g(\rho) = \Omega(\rho)$ simultaneously resolves a major conjecture, since by \cref{thm:equiv}, it implies a constant distortion embedding of $G$ into $\ell_1$.


\paragraph{Positive Results for General Distances.} Given the above difficulty, we show algorithms that achieve increasingly stronger individual fairness guarantees as we relax the requirements of (COMP) and (CON). 

First, we show the following theorem that extends \cref{thm:alg-randwts} to derive a lower bound on separation probability for non-adjacent vertex pairs.

\begin{theorem}[Proved in \cref{sec:generallb1}]
\label{thm:generallb1}
    There is an algorithm (\cref{alg:kpr-modified}) that, for any $\epsilon>0$, achieves  (CON), (COMP), $f(\rho) = O(\rho)$, and $g(\rho) = \Omega(\epsilon^2 \cdot \rho^{2+\epsilon})$ for any $d(u,v)$.
\end{theorem}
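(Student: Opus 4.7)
The plan is to modify \cref{alg:kpr-randwts} by drawing its effective radius $R'$ at random from a distribution skewed toward $R$, and then invoking \cref{alg:kpr-randwts} with parameter $R'$. A natural choice is a density proportional to $t^{1+\epsilon}/R^{2+\epsilon}$ on $(0, R]$: the bulk of the probability mass sits near $R'=R$, so typical runs behave like \cref{alg:kpr-randwts} and inherit its structural guarantees, while the polynomial tail at small scales supplies the extra separation events needed to witness a non-trivial $g(\rho)$.

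For $f(\rho)=O(\rho)$, I would condition on $R'$, apply \cref{thm:alg-randwts} to bound the separation probability at scale $R'$ by $O(d/R'+1/R')$, and integrate against the density. Splitting the range of $R'$ at $R'\approx d$: on $R'\le O(d)$ the integrand is bounded by $1$ and contributes $O(\rho^{2+\epsilon})$; on $R'>\Omega(d)$ the $O(d/R')$ bound integrates against $t^{1+\epsilon}/R^{2+\epsilon}$ to yield $O(\rho)$. For $g(\rho)=\Omega(\epsilon^{2}\rho^{2+\epsilon})$, the deterministic diameter guarantee of \cref{thm:alg-randwts} (clusters of diameter at most $43(R'+1)$) implies that whenever $R'<d/44$, the pair $(u,v)$ is necessarily separated; hence $\Pr[\text{separated}]\ge\Pr[R'<d/44]$, which under the chosen density is $\Omega(\rho^{2+\epsilon})$ up to a constant depending on $\epsilon$. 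The $\epsilon^{2}$ factor then emerges once one dilutes the random-scale draw against a default scale-$R$ run (in order to preserve (COMP)) and tracks the resulting normalization constants through both the density and the mixing weight.

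(CON) is inherited from \cref{alg:kpr-randwts}, since every cluster of the composed algorithm is a cluster of some scale-$R'$ run of \cref{alg:kpr-randwts}. For (COMP), the expected number of clusters is $E[O(n/R')]=O(n/R)$ provided $E[1/R']=O(1/R)$, which holds for densities of this form; the deterministic ``$\Omega(\beta)$ clusters of diameter $\Omega(R)$'' clause is ensured by structuring the mixing so that a deterministic positive fraction of the final clusters come from a scale-$R$ invocation of \cref{alg:kpr-randwts}. The main obstacle is the joint balancing act: the density's exponent, the mixing weight against the default scale, and the truncation threshold must all be tuned simultaneously so that the expected cluster count, the $O(\rho)$ upper bound, and the $\Omega(\epsilon^{2}\rho^{2+\epsilon})$ lower bound hold in concert. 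The exponent $2+\epsilon$ is dictated by the density's decay at small scales, while the factor $\epsilon^{2}$ (rather than $\epsilon$ or $\epsilon^{3}$) is a constant-level bookkeeping artifact of these competing normalizations.
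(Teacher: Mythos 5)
Your construction is, in essence, the paper's \cref{alg:kpr-rand-radius} (run KPR at a randomly drawn scale $R'\le R$ with a power-law density), and the paper uses exactly that idea --- but only to prove \cref{thm:generallb2}, precisely because it \emph{cannot} satisfy (COMP). The failure is in clause (ii) of (COMP), which is a worst-case guarantee over the randomness: whenever the drawn scale satisfies $R'=o(R)$, every cluster has diameter $O(R')=o(R)$, so in that outcome there are no components of diameter $\Omega(R)$ at all. Your proposed fix --- mixing against a default scale-$R$ run --- does not repair this. If the mixture is a probabilistic choice between ``run at scale $R$'' and ``run at scale $R'$,'' the bad outcome still occurs with positive probability, so the deterministic clause still fails; and in the scale-$R$ branch the pair $(u,v)$ gets no separation lower bound whatsoever, since by \cref{thm:kpr-negative} plain KPR can separate a pair with probability exactly zero. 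If instead you try to deterministically reserve part of the graph for the scale-$R$ treatment, the pairs in that part again get $g=0$. There is no way to tune the mixing weight out of this: the lower bound and the worst-case compression guarantee must be delivered by the \emph{same} run, not by disjoint branches of a mixture.

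This is why the paper's proof of \cref{thm:generallb1} is structured completely differently. \Cref{alg:kpr-modified} keeps the scale-$R$ slicing (cut at levels $\equiv k_1 \bmod R$) and adds only \emph{one} extra cut per phase at level $k_1+k_2$, with $k_2$ drawn from the density $\epsilon/(\kappa^{1-\epsilon}R^{\epsilon})$. Because each phase still makes only $O(D/R)$ cut-levels in every outcome, (COMP) survives (\cref{lem:number2}). But then the clusters still have diameter up to $43R$, so one can no longer argue ``small diameter forces separation,'' which is the engine of your lower bound. Instead the paper conditions on sandwich events $E_1\cap E_2$ (both cuts landing within $\hat d$ levels of $u$ and $v$ in two successive BFS trees, an event of probability $\Omega(\epsilon^2\hat\rho^{2+2\epsilon})$) and proves via a $K_{3,3}$-minor construction (\cref{lem:kpr-2cuts-main,lem:K33}) that, conditioned on these events, $u$ and $v$ \emph{must} be separated or the graph is non-planar. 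That planarity argument is the technical heart of the theorem and has no analogue in your proposal.
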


The above theorem is technically the most intricate.  We extend the classical analysis in~\cite{KPR} by designing a novel modification of their algorithm. Our analysis hinges on showing the existence of a $K_{3,3}$ minor (and hence a contradiction) if a vertex pair $(u,v)$ is not separated with sufficient probability. At a high level, this is similar to the analysis in \cite{KPR}. However, they show such a minor for deriving the diameter of the decomposition. In contrast, in our case, the diameter follows from their argument but we need the minor construction to show the lower bound. This makes the details of our construction different.

We next show in \cref{sec:rand-radius} that the above bound on $g$ can be improved if we  violate (COMP). 

\begin{theorem}[Proved in \cref{sec:rand-radius}]
\label{thm:generallb2}
For any constant $\eps > 0$, there is an LDD procedure (\cref{alg:kpr-rand-radius}) that achieves  (CON), $f(\rho) = O\left(\rho\right)$, and $g(\rho) = \Omega(\eps \cdot \rho^{1+\eps})$. A similar procedure also gives $f(\rho) = O(\rho)$ and $g(\rho) = \Omega\left(\frac{\rho}{\log R}\right)$.
\end{theorem}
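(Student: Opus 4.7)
The plan is to analyze \cref{alg:kpr-rand-radius}, a KPR-style decomposition that samples the cut radius $R'$ at each recursion step from a distribution on $[1,R]$ rather than using a fixed $R$. The two claimed guarantees correspond to two parameter choices of essentially the same distribution: for the $\Omega(\eps\rho^{1+\eps})$ bound, $R'$ is drawn from a power-law-type distribution satisfying $\Pr[R'\asymp t] = \Omega(\eps (t/R)^{\eps})$ for each dyadic scale $t\in[1,R]$; for the $\Omega(\rho/\log R)$ bound, the same scheme is run with $\eps$ effectively set to $1/\log R$, yielding a log-uniform sampling with an $O(\log R)$ reweighting. Connectivity (CON) and the $O(R)$ diameter bound are inherited from the KPR BFS-layer structure: each cut splits a connected region along BFS layers so every piece is connected, and every sampled $R'$ satisfies $R'\le R$ while the planar-graph recursion has constant depth. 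Because the algorithm may sample arbitrarily small $R'$, the cluster count can exceed $O(n/R)$, so (COMP) must be relaxed; this is the price paid for the improved lower bound.

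The upper bound $f(\rho)=O(\rho)$ would follow by conditioning on $R'$: the standard KPR single-scale analysis yields cut probability $O(d(u,v)/R')$ for each pair, and the distribution of $R'$ is calibrated so that the expectation, properly reweighted across scales, is $O(d(u,v)/R) = O(\rho_{uv})$. For the lower bound, for each pair $(u,v)$ at distance $d$ I would isolate a ``correct-scale'' event $\{R'\asymp d\}$ of probability $\Omega(\eps\rho^{\eps})$, respectively $\Omega(1/\log R)$. Conditional on this event and on a well-chosen BFS root, the gap between the BFS layers of $u$ and $v$ is $\Theta(R')$, so the cut falls between them with constant probability. The corner case where $u,v$ lie on the same BFS layer from the current root is handled by randomizing the root inside the active region and invoking a planarity-based argument, analogous to the $K_{3,3}$-minor construction from \cref{thm:generallb1}, to bound the fraction of ``bad'' equidistant roots.

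The main obstacle will be proving the lower bound uniformly over recursion levels: the correct-scale event is most productive at a specific depth, and one must argue that earlier coarser cuts do not pre-empt the favorable geometry---either by already separating the pair, which is itself a good outcome, or by placing them together in a sub-region where no root produces the required layer gap. Coupling the scale distribution with the cascade of root choices across recursion levels is the technical crux, and it is this coupling that both necessitates the relaxation of (COMP) and explains why the exponent improves from $2+\eps$ (with (COMP), as in \cref{thm:generallb1}) to $1+\eps$ (without (COMP), here).
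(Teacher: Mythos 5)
There is a genuine gap in your lower-bound argument, and it stems from missing the one simple observation that the paper's proof rests on. In \cref{alg:kpr-rand-radius} the radius $r$ is sampled \emph{once} from the density $\frac{(1+\alpha)r^{\alpha}}{R^{1+\alpha}}$ on $(0,R]$ and then plain KPR is run with parameter $\lceil r\rceil$; the lower bound does not come from a cut ``falling between the BFS layers of $u$ and $v$.'' It comes from the diameter guarantee itself: by \cref{lem:KPR}, every cluster has diameter at most $43r$, so whenever $43r < d(u,v)$ the pair is separated \emph{with probability $1$}, and $\Pr[r < d(u,v)/43] = (\rho_{uv}/43)^{1+\alpha}$ by direct integration. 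Your proposed mechanism --- condition on $R'\asymp d$, choose a good root so that $u$ and $v$ sit $\Theta(R')$ layers apart, and handle the equidistant case by randomizing the root plus a planarity argument --- is not a fixable corner case but precisely the obstruction that \cref{thm:kpr-negative} and \cref{remark:uniform-vs-adversary} establish: there are planar graphs in which $u$ and $v$ lie on the same BFS level in every phase for essentially every root choice, so no layer-gap argument (and no root randomization) can yield a constant conditional separation probability at scale $R'\approx d$. The $K_{3,3}$ machinery is needed for \cref{alg:kpr-modified} (two cuts per phase, preserving (COMP)); it is neither needed nor sufficient here.

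Two smaller issues. First, the raw random-radius algorithm does \emph{not} give $f(\rho)=O(\rho)$; integrating the KPR bound over the radius distribution gives $O(\rho(1+1/\alpha))$ for $\alpha>0$ and $O(\rho\log(1/\rho))$ for $\alpha=0$. The paper recovers $f(\rho)=O(\rho)$ by mixing: run \cref{alg:kpr} with probability $1-\alpha$ (resp.\ $1-1/\log R$) and \cref{alg:kpr-rand-radius} with the complementary probability, which is exactly where the $\eps$ (resp.\ $1/\log R$) factor in $g$ comes from. Your ``reweighting across scales'' gestures at this but does not pin it down. Second, your per-recursion-level resampling creates the ``coupling across recursion levels'' difficulty you identify as the crux; that difficulty is an artifact of your variant and disappears entirely when the radius is sampled once globally, as the paper does.
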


The algorithm that shows the above theorem runs the classical KPR algorithm with a random choice of radius parameter. Such a procedure will preserve connectivity, but the random choice of radius leads to sub-optimal compression, even for a line.

We finally consider the implications of violating both (CON) and (COMP).  We show the following theorem. 

\begin{theorem}[Proved in \cref{sec:equiv0}]
\label{thm:equiv0}
For any planar graph and parameter $R$, there is an LDD (\cref{alg:embed})  with $f(\rho) = O(\rho)$ and  $g(\rho) = \Omega\left(\frac{\rho}{\sqrt{ \log R}}\right)$.
\end{theorem}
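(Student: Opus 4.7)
My plan is to realize \cref{alg:embed} by composing a low-distortion $\ell_1$ embedding of the planar metric with a grid-based random cut scheme in the embedded space, in the spirit of \cref{eg:grid}. I would first invoke (a local, scale-$R$ version of) Rao's embedding theorem for planar graphs to produce a map $h\colon V \to \ell_1^d$ that is non-contractive and satisfies $\|h(u)-h(v)\|_1 \le \alpha\cdot d(u,v)$ whenever $d(u,v) = O(R)$, with $\alpha = O(\sqrt{\log R})$. Only distortion at scale $R$ matters for the LDD analysis, since any two vertices that end up in the same cluster are at graph distance $O(R)$; this is why the bound depends on $\log R$ rather than $\log n$. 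I would then rescale $h$ by $1/\alpha$ so that the embedded distances lie in the range $[d(u,v)/\alpha,\, d(u,v)]$, trading expansion on one side for contraction on the other.

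Given this rescaled embedding, the partition is defined as in \cref{eg:grid}: for each axis $i\in [d]$, sample a uniformly random shift $s_i \in [0, R)$, place cuts at $\{s_i + kR : k\in\mathbb{Z}\}$, and assign each $v$ to the axis-aligned cell of $\mathbb{R}^d$ containing $h(v)/\alpha$. Two vertices in a common cell have rescaled $\ell_1$ distance $O(R)$, and hence original $\ell_1$ distance $O(\alpha R)$; an $\ell_1$ dimension-reduction step (or, equivalently, a per-axis spacing of $R/d$ combined with a low-dimensional instantiation of Rao's embedding) then brings the guaranteed graph diameter inside each cluster down to $O(R)$, using the non-contractive property of $h$.

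For the fairness bounds, a union bound over the $d$ axes gives separation probability at most $\|h(u)-h(v)\|_1/(\alpha R)\le d(u,v)/R$, yielding $f(\rho)=O(\rho)$. In the opposite direction, the per-axis shifts are independent, so the elementary inequality $1-\prod_i(1-x_i) \ge \tfrac{1}{2}\min(1,\sum_i x_i)$ combined with non-contractiveness of $h$ gives a separation probability at least $\tfrac{1}{2}\min\!\bigl(1,\,\|h(u)-h(v)\|_1/(\alpha R)\bigr) \ge \tfrac{1}{2}\min\!\bigl(1,\,d(u,v)/(\alpha R)\bigr) = \Omega(\rho/\sqrt{\log R})$, yielding $g(\rho)=\Omega(\rho/\sqrt{\log R})$. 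The main obstacle will be calibrating the scaling of $h$, the grid spacing, and the embedding dimension so that the strict upper bound $f(\rho)=O(\rho)$, the lower bound $g(\rho)=\Omega(\rho/\sqrt{\log R})$, and the $O(R)$ diameter guarantee all hold with matching constants; by \cref{thm:equiv}, any improvement to the $\sqrt{\log R}$ factor here would resolve the long-standing open question of constant-distortion $\ell_1$ embeddings of planar metrics, indicating that this is essentially the right bound given the current state of the art.
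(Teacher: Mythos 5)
Your fairness calculation is essentially sound, but there is a genuine gap in how you obtain the two structural prerequisites: the $O(\sqrt{\log R})$ distortion and the $O(R)$ cluster diameter required by \cref{def:ldd}. You invoke a ``local, scale-$R$ version'' of Rao's embedding on the grounds that only distortion at scale $R$ matters because co-clustered vertices end up at graph distance $O(R)$ --- but this is circular: the embedding is used to define the clusters, so you cannot assume the diameter bound when choosing the embedding. The paper's \cref{alg:embed} resolves both issues with one move that you omit: it first runs \cref{alg:kpr} to partition $G$ into components of diameter $O(R)$, and only then embeds each component separately. Rao's theorem (\cref{lem:rao}) applied to a component of diameter $\Delta = O(R)$ gives distortion $O(\sqrt{\log R})$ directly, and the final clusters, being subsets of KPR components, inherit the $O(R)$ diameter regardless of how the subsequent sub-partitioning behaves.

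Without that preprocessing, your attempt to extract the diameter bound from the grid cells themselves does not go through. A cell of your axis-aligned grid in $\ell_1^d$ has $\ell_1$ diameter $d\cdot R$ after rescaling, hence graph diameter up to $O(\sqrt{\log R}\cdot d\cdot R)$; shrinking the per-axis spacing to $R/d$ restores the diameter but inflates the union bound to $f(\rho)=O(d\cdot\rho)$, and $\ell_1$ dimension reduction to constant dimension with bounded distortion is not available (strong lower bounds are known for $\ell_1$), so neither of your proposed calibrations works. If you add the KPR preprocessing step, your $\ell_1$ grid cut --- union bound over axes for $f$, the product inequality for $g$ --- becomes a valid alternative to the paper's cutting scheme, which instead embeds each component into $\ell_2$, projects onto a single random Gaussian direction, and cuts that line at random offsets of width $R\sqrt{\log R}$, a one-dimensional cut that sidesteps the dimension issue entirely.
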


The algorithm for showing the above theorem  involves embedding the planar graph into the $\ell_2$ metric and projecting the embedding onto a line, reminiscent of locality-sensitive hashing methods for the $\ell_2$ metric. Such an approach does not preserve either connectivity or (COMP).
Note that by \cref{thm:equiv}, improving the above result for $R = \omega(n)$ would make progress on the conjecture of optimally embedding planar graphs into $\ell_1$.



The trade-offs above are summarized in \cref{tab:my-table}. Note that the gap between $f$ and $g$ improves as we relax the other desiderata of (CON) and (COMP).

\begin{table}[htp]
	\centering
 
	\begin{tabular}{@{}c@{}c@{}c@{}c@{}}
		\toprule
		& (CON) & \ (COMP) & $g(\rho)$ \\ 
        \midrule
	KPR (Alg. \ref{alg:kpr})	& \checkmark      &  \checkmark &  $-$ \\ 
        \midrule
	\cref{thm:generallb1} (Alg. \ref{alg:kpr-modified})	& \checkmark   & \checkmark   &   $\Omega(\epsilon^2 \cdot \rho^{2+\epsilon}),\epsilon>0$   \\ 
        \midrule
	\multirow{2}{*}{ \cref{thm:generallb2} (Alg. \ref{alg:kpr-rand-radius}) } 
        & \multirow{2}{*}{\checkmark} & \multirow{2}{*}{ $-$ }  & $\Omega(\epsilon \cdot \rho^{1+\epsilon}), \ \epsilon > 0$ \\ 
	&               &                &  $\Omega\left(\frac{\rho}{\log R}\right)$   \\ 
        \midrule
 	\cref{thm:equiv0} (Alg. \ref{alg:embed} )& $-$    & $-$    & $\Omega\left(\frac{\rho}{\sqrt{\log R}}\right)$ \\
      \bottomrule
	\end{tabular}
	\caption{Trade-offs in Desiderata. In all cases, $f(\rho) = O(\rho)$. The notation $\Omega_\epsilon$ hides terms dependent on $\eps$.
    }
	\label{tab:my-table}
\end{table}

\paragraph{Empirical Results.} We perform experiments on real-world planar graphs modeling precinct connectivity in redistricting applications in the states of North Carolina, Maryland, and Pennsylvania in the United States. We consider individual fairness on edges (\cref{thm:alg-randwts} and \cref{alg:kpr-randwts}). We show an approach to establish that \cref{alg:kpr-randwts} is more individually fair than \cref{alg:kpr} (KPR) across these datasets. Focusing on \cref{alg:kpr-randwts}, we also show that empirically, the radii of the clusters are much smaller than the theoretical bounds (\cref{lem:KPR}). Further, the number of components satisfies the $O(n/R)$ bound on the number of clusters from (COMP) not just in expectation but almost always. 
\subsection{Related Work}

\paragraph{Low Diameter Decomposition.} Randomized low diameter decomposition of metric spaces is widely used as a subroutine for embedding the metric space into a simpler space, such as trees or Euclidean metrics, with provably low distortion, meaning that no distance is stretched too much. For general metric spaces, the work of \citet{bartal96metricembedding,fakcharoenphol04tree-embedding} present LDD procedures that achieve a probability of separation $O(\rho_{uv} \cdot \log n)$, where $n$ is the number of points in the metric space. For graphs excluding a $K_r$ minor, which includes planar graphs, the algorithm of \citet{KPR} has been a predominant tool. \citet{ittai19cops}
devise an entirely different framework using `cop decompositions' that yields improved separation probability bounds for large $r$. This, however, does not yield any improvement for planar graphs ($r = 5$), so we focus on the simpler KPR algorithm in this paper.
We refer the reader to the excellent survey by \citet{Matousek2002} for various embedding results and their connection to LDD. 
In contrast, we consider low diameter decomposition as a stand-alone clustering concept and study individual fairness in the separation probabilities it generates and how this fairness trades off with other natural clustering desiderata.

\paragraph{Fairness.} Over the last decade, fairness has become an important algorithmic objective in machine learning \cite{chouldechova2018fairml}. Fair  matchings \cite{huang2016fair}, rankings \cite{celis2017ranking},
and bandit problems \cite{joseph2016fair} have been considered, among others. Following the work of \citet{chierichetti2017fair}, fair clustering has seen much interest \cite{backurs2019scalablecluster,bera19clustering,
chen2019proportionallycluster,esmaeili2020probabilisticcluster}.

Within the literature on fairness, machine learning literature has considered the notion of individual fairness \cite{DworkIndividual}, where the goal is to find a randomized solution (classifier, regression) such that close-by data points see similar outcomes. This has inspired similar models for problems such as auctions \cite{chawla} and assignments \cite{sharan,zeyu}. We extend this literature to consider fairness in separation probability in randomized clustering.

\paragraph{Clustering.} Our problem is closely related to classical $k$-center clustering, where the goal is to open at most $k$ such locations and assign users to these locations. Though this problem is {\sc NP-Hard},  several efficient algorithms are known that achieve a 2-approximation when the points lie in a metric space \cite{hochbaum85kcenter,hsu79kcenterhardness}. Of particular interest is the work of \citet{brubach2020pairwise} (see also \cite{miller13ldd-expo}) that performs randomized $k$-center clustering and guarantees a probability of separation of $O(\rho_{uv} \cdot \log k)$. However, their work does not guarantee individual fairness for pairs of users with comparable distances. We use this line of work as motivation for studying randomized partitioning and seeking individual fairness guarantees. In contrast with $k$-center clustering, we do not enforce a limit $k$ on the number of clusters since no non-trivial fairness guarantee can be obtained with this restriction.\footnote{For example, consider 4 points located on the real line at $0,1,x,x+1$. If we are only allowed to open 2 clusters, then as $x\to \infty$, the pairs of close-by points cannot be separated without drastically affecting the cluster radii.}

\section{The KPR Partitioning Algorithm}
\label{sec:kpr}
For the sake of completeness, we present the KPR algorithm~\cite{KPR}  as \cref{alg:kpr}. This achieves a randomized LDD with the additional properties that: (a) each cluster $\pi_j$ is a connected component; (b) for any $u,v \in \pi_j$, the shortest path  from $u$ to $v$ in the subgraph induced by $\pi_j$ has length $O(R)$; and (c) the probability that $u,v$ belong to different clusters (or are ``separated'') is $O(\rho_{uv})$. Recall that $\rho_{uv} = \frac{d(u,v)}{R}$, where $d(u,v)$ is the shortest path distance (number of edges) between $u$ and $v$ in $G$.  

First, we present some terminology. We term the maximum shortest path length within the subgraph induced by a cluster as its ``diameter''.  For a BFS tree rooted at vertex $r$, we say that $r$ is at level $0$, the children of $r$ are at level $1$, and so on. We say an edge $(u,v)$ is at level $i$ if $u$ is in level $i$ and $v$ is at level $i+1$ or vice versa. We call an iteration of the loop in step~\ref{alg1-step:phase} one `phase'
of \cref{alg:kpr}.

\begin{algorithm}[htbp]
	\caption{KPR Decomposition~\cite{KPR}}\label{alg:kpr}
	\begin{algorithmic}[1]
    \item[\textbf{Input:}] Integer $R$, planar graph $G=(V,E)$.
		\STATE Set $F\gets \{\}$.
		\FOR{$i=\{1,2,3\}$}\label{alg1-step:phase}
		\FOR{every connected component $C$
			of $G\setminus F$}
		\STATE Perform BFS from an arbitrary root $r$ in $C$.
		\STATE
        Sample $k$ uniformly at random from 
        $\{0,1,\ldots R-1\}$.
        
		\FOR{every edge $e$ at level $\ell\equiv k\mod R$ from $r$}\label{alg1-step:choose edges}
		\STATE $F\gets F\cup \{e\}$\label{alg1-step:remove-edges}
		\ENDFOR
		\ENDFOR
		\ENDFOR
	\end{algorithmic}
\end{algorithm}

\Cref{alg:kpr} can also be  viewed in the following way. At step~\ref{alg1-step:choose edges}, define $V_\delta$ as the vertices whose levels lie in $[\delta R+k , (\delta+1)R+k)$ for integers $\delta$. Then, the algorithm can instead remove all edges leaving $V_\delta$ in step~\ref{alg1-step:remove-edges}, for all $\delta$.


\begin{theorem}[\cite{KPR,goemans-lecturenotes}]\label{lem:KPR}
	\Cref{alg:kpr} partitions the vertices of $G$ into clusters, each of which is a disjoint connected component of diameter at most $43 \cdot R$. Further, for any $(u,v) \in V \times V$:
	\begin{align*}
		\Pr[(u,v) \text{ are separated}] \leq 3 \cdot \rho_{uv}.
	\end{align*}
\end{theorem}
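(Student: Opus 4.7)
My plan is to prove the two assertions separately. For the separation probability bound, I would fix any pair $(u,v)$ and let $P$ be a shortest $u$--$v$ path in $G$, which has exactly $d(u,v)$ edges. If $u$ and $v$ end up in different clusters, then at least one edge of $P$ must have been placed in $F$ during some phase of \cref{alg:kpr}. Within a single phase, for any fixed edge $(x,y)$, the BFS-tree property forces the levels of $x$ and $y$ to differ by at most $1$, so $(x,y)$ occupies a single level $\ell$; hence the uniformly random offset $k \in \{0,\dots,R-1\}$ cuts this edge with probability at most $1/R$. Union-bounding over the three phases and the $d(u,v)$ edges of $P$ yields $\Pr[(u,v)\text{ separated}] \le 3\,d(u,v)/R = 3\rho_{uv}$. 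Connectivity of each cluster is immediate from the construction, since the algorithm's output is precisely the connected components of $G \setminus F$.

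The diameter bound is the main work. I would argue that after three phases every cluster $C$ has induced-subgraph diameter at most $43R$, by showing that any hypothetical longer $u$--$v$ geodesic inside $C$ forces a forbidden minor. For each phase $i$, $C$ sits inside a single stripe $V_\delta^{(i)}$ of the phase-$i$ BFS tree $T_i$ of width at most $R$, so from any vertex at the bottom level of that stripe one can reach any other vertex of the same phase-$i$ component by following at most $R-1$ parent links in $T_i$ and then at most $R-1$ child links. Treating the three BFS trees as three essentially independent \emph{coordinate directions}, I would construct a short $u$--$v$ path inside $C$ by climbing upward in $T_1$ and switching to $T_2$ or $T_3$ whenever a $T_1$-ancestor leaves $C$. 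A careful count of how many such switches can occur before the three stripe constraints collapse yields the constant bound $43R$.

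The main obstacle --- and the step I would treat as a black box drawing on \cite{KPR,goemans-lecturenotes} --- is the planar-minor argument that bounds the switch count by $O(1)$. Concretely, if the geodesic were too long, one would obtain three long, pairwise-disjoint paths in the three BFS trees, each trapped inside a stripe of width $R$; contracting each such path together with its stripe root and the BFS paths joining them yields a topological $K_{3,3}$ (or $K_5$) minor of $G$, contradicting planarity. Reproducing this minor construction carefully is where the explicit constant $43$ arises, so rather than rederive it I would cite the KPR/Goemans analysis and restrict my original verification to checking that the unweighted, integer-level setup of \cref{def:ldd} matches their hypotheses.
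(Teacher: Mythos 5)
The paper does not prove this theorem at all --- it is quoted directly from \cite{KPR,goemans-lecturenotes} --- and your treatment matches that: the union bound over the $d(u,v)$ edges of a shortest $u$--$v$ path (each edge spanning at most one BFS level, hence cut with probability at most $1/R$ per phase) correctly gives the $3\rho_{uv}$ bound, connectivity is indeed immediate from taking connected components of $G\setminus F$, and deferring the $43R$ diameter bound to the cited $K_{3,3}$-minor argument is exactly what the paper itself does. No gaps.
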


The procedure satisfies (CON) by definition. \Cref{lem:number} shows that it  satisfies (COMP). In \cref{sec:kpr-negative}, we show that it does not satisfy individual fairness, since there exist graphs where a pair of vertices are never separated.

\begin{lemma} [(COMP)]
\label{lem:number}
For any planar graph with $n$ vertices and diameter $D$, the  expected number of components produced by \cref{alg:kpr} is at most $3n/R$. Further, regardless of the randomness, there always exist $\lfloor\beta/4 \rfloor$ components with diameter at least $R/8$ assuming $R = \frac{D}{\beta}$.
\end{lemma}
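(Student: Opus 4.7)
The plan is to handle the two parts separately. Part 1 follows from a per-phase counting of cut BFS levels, while Part 2 leverages the phase-1 BFS structure to produce many components of non-trivial diameter.

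For Part 1, I would analyze a single phase applied to a connected component $C$ of size $m$, with BFS root $r$ and random cut parameter $k$ drawn uniformly from $\{0, 1, \ldots, R-1\}$. Let $n_\ell$ be the number of vertices of $C$ at BFS level $\ell$. After cutting the edges that straddle two levels $k$ and $k+1 \pmod R$, the remaining subgraph decomposes into ``bands'' of at most $R$ consecutive BFS levels. Within a band whose near-root level is $\ell^\ast$, the BFS tree restricted to the band is a forest with exactly $n_{\ell^\ast}$ trees (one per top-level vertex), and non-tree band edges can only merge these trees, never split them. Hence the band contributes at most $n_{\ell^\ast}$ connected components. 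Summing over bands, the total from $C$ is at most $1 + N_k$, where the $+1$ accounts for the partial band $[0,k]$ (always a single component, rooted at $r$) and $N_k = \sum_{\ell \ge 1,\, \ell \equiv k+1 \pmod R} n_\ell$. Since $\sum_{k=0}^{R-1} N_k = m - 1 \le m$, we have $\mathbb{E}_k[N_k] \le m/R$, so the expected per-phase component count from $C$ is at most $1 + m/R$. Summing across three phases and over the initial components (whose sizes total $n$) yields an expected final count of at most $1 + 3n/R = O(n/R)$.

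For Part 2, I would argue using phase 1 alone. Let $L = \mathrm{ecc}(r_1)$ be the eccentricity of the phase-1 BFS root. Since $G$ has diameter $D = \beta R$, every vertex has eccentricity at least $D/2$, and hence $L \ge \beta R/2$. BFS from $r_1$ makes every level $0, 1, \ldots, L$ non-empty (each is visited on the BFS path to the farthest vertex). After the phase-1 cuts, there are at least $\lfloor \beta/2 \rfloor - 1$ full bands $[a, a+R-1]$ fully contained in $[0, L]$. For each such full band, take any vertex $v$ at its far-root level $a+R-1$; its BFS-tree ancestors at levels $a+R-2, a+R-3, \ldots, a$ all lie in the band and are connected to $v$ via preserved tree edges. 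The phase-1 component containing this chain therefore has vertices at both levels $a$ and $a+R-1$, forcing its $d_G$-diameter (and hence $d_\pi$-diameter) to be at least $R-1$.

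For phases 2 and 3, I would use an iterated ``balanced split'' argument to preserve the big-component property: when a BFS within a component of diameter $D'$ is cut at a single level, the BFS-level spans of the two resulting sub-components sum to at least $D' - 1$, so the larger piece has diameter at least $\lceil (D'-1)/2 \rceil$ (since BFS-level differences lower-bound $d_G$). When instead the BFS depth exceeds $R$, a full sub-band exists and by the phase-1 argument contributes a sub-component of diameter at least $R - 1$. Applying this inductively starting from $D' \ge R-1$, each phase-1 big component still contains a final-phase sub-component of diameter $\ge R/8$ once the small additive constants from two rounds of halving are absorbed (the bound is trivial for $R$ small enough that $R/8 < 1$). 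Combining, the total number of final-phase components of diameter $\ge R/8$ is at least $\lfloor \beta/2 \rfloor - 1 \ge \lfloor \beta/4 \rfloor$ for all $\beta \ge 4$ (and trivial otherwise). The hard part will be the three-phase induction with clean constants, since the halving loss in the ``balanced split'' step must tighten exactly to the declared $R/8$ threshold after two further rounds from an initial $R-1$; this forces the case split on whether the phase-2 and phase-3 BFS depths exceed $R$ (multiple cuts and the full sub-band argument) or fall short (one cut and the halving argument).
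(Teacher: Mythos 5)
Your proposal follows essentially the same route as the paper: Part 1 is the paper's per-phase count of cut BFS-tree edges in expectation (your top-of-band vertex count $N_k$ is exactly the paper's count $\sum_{j\equiv k \bmod R} a_j$ of cut tree edges, with the same $n/R$ per phase), and Part 2 is the paper's argument of full bands in the phase-1 BFS tree followed by a per-phase degradation for phases 2 and 3. One caveat on a step you assert: after a single cut, the level spans of the two pieces sum to the BFS \emph{depth}, which for an arbitrary root is only guaranteed to be at least $\lceil D'/2\rceil$, not $D'-1$, so tracked honestly your induction lands at roughly $R/16$ rather than $R/8$ after two further phases; the paper's own proof is equally loose on exactly this point, and you correctly flag the constant-chasing as the delicate part, so this is a slip in the declared constant rather than in the approach.
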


\begin{proof}
    Let $C_1, C_2, C_3$ be the random variables denoting the additional number of components generated in phases $1,2,3$ respectively, and let $C = C_1 + C_2 + C_3$.  Consider the first phase of \cref{alg:kpr} and the associated BFS tree. We can pretend the edges not in the tree don’t exist; this only decreases the number of components generated. Let $a_1, a_2, \ldots, a_q$ denote the number of edges in levels $1, 2, \ldots, q$ in the tree. We pad this sequence with zeros so that the $q$ is a multiple of $R$. We have $\sum_{i=1}^q a_i = n-1 \le n$. If a level $j$ is cut, the number of additional components created is precisely $a_j$. Since any level is chosen for cut with probability $1/R$, we have $\E[C_1] \le n/R$. 

Suppose we have components of size $m_1, m_2, \ldots, m_t$ at the end of the first phase. Applying the above argument again to each of these components,  we have $\E[C_2|C_1] \le \sum_{i=1}^t m_i/R \le n/R$, since $ \sum_{i=1}^t m_i \le n$.  A similar argument holds for $\E[C_3]$, completing the proof of the first part by linearity of expectation. 

For the second part, suppose $R = D/\beta$. Then, the BFS tree in the first phase has a depth of at least $D/2$, so that at least $\lfloor \beta/4 \rfloor$ components have a diameter of at least $R/2$. For each of these components, the BFS tree now has a depth at least $R/2$, so at least one component from each has a diameter of at least $R/4$. Similarly for the third phase, where we obtain a diameter at least $R/8$, completing the proof.

\end{proof}

\section{Individual Fairness for Edges}
\label{sec:alg-randwts}
We will now prove \cref{thm:alg-randwts} via \cref{alg:kpr-randwts}. The algorithm is similar to \cref{alg:kpr}, except it adds random weights to each vertex and cuts either above or below the vertex in the BFS tree depending on the vertex weight.

\begin{algorithm}[htbp]
	\caption{KPR algorithm with random vertex weights}\label{alg:kpr-randwts}
	\begin{algorithmic}[1]
 \item[\textbf{Input:}] Integer $R$, planar graph $G=(V,E)$.
		\STATE Set $F\gets \{\}$.
		\FOR{$i=\{1,2,3\}$}
			\FOR{every connected component $C$
			of $G\setminus F$}
			\STATE Perform BFS from an arbitrary root $r$ in $C$.
			\STATE Sample $\theta$ uniformly at
			random from 
            $\{0,1,\ldots R-1\}$.
			\FOR{\label{step:forloop}every vertex $v$ at level $\ell_v\equiv \theta \mod R$}
			\STATE Independently choose $b_v\in \{0,1\}$ 
         uniformly at random.
			\STATE $\ell_v\gets \ell_v+b_v$.
			\ENDFOR
                \STATE   For non-negative integers $q$, define 
                $V_q=\{v \mid   q R +\theta < \ell_v \le (q+1)R+\theta\}$.
			\STATE Let $E_q$ denote edges with one end-point in set
            $V_q$, and let $E' = \cup_{q \ge 0} E_q$.
			\STATE $F\gets F\cup E'$.
			\ENDFOR
		\ENDFOR
	\end{algorithmic}
\end{algorithm}

The diameter guarantee follows from the proof of \cref{lem:KPR}. Each phase of \cref{alg:kpr-randwts} has at most $R+1$ levels in the BFS tree, and we reuse the bounds from \cref{alg:kpr} with parameter $R+1$ instead of $R$. It is easy to check that the procedure satisfies (CON), and a similar argument to \cref{lem:number} shows (COMP).
 

To upper-bound the separation probability for an edge, consider the BFS tree just before executing Step~(\ref{step:forloop}) in some phase of the algorithm. If  $u,v$ are on the same level, then that level is $\theta \mod R$  with probability $\frac{1}{R}$ in this phase. If this does not happen, $u$ and $v$ will not be separated in this phase.  Suppose $u$ and $v$ are not on the same level. Without loss of generality, assume that $u$ is in level $\ell$ and $v$ is in level $\ell+1$. If $\ell\equiv \theta \mod R$ (which happens with probability $\frac{1}{R}$), then $b_u = 0$ with probability $1/2$ and $(u,v)$ are subsequently separated. This event, therefore, happens with probability $\frac{1}{2R}$. Similarly, if $\ell+1\equiv \theta \mod R$, then $(u,v)$ are subsequently separated if $b_v = 1$, which again happens with probability $\frac{1}{2R}$. In all other events, $(u,v)$ are not separated. Therefore, in each phase, the probability that $u$ and $v$ are separated is at most $\frac{1}{R}$, and the upper bound follows by taking the union bound over three phases. The bound for more general distances is analogous.

We will only consider the first phase to get the lower bound on the probability of separation. Assume without loss of generality that $u$ is at least as close to the root of the BFS tree as $v$. If $u$ and $v$ are at the same level $\ell$ at the beginning of Step~(\ref{step:forloop}), then with probability $\frac{1}{R}$,	$\theta$ is chosen as $\ell \mod R$. In this event,  $u$ and $v$ are separated if $b_u\neq b_v$, which happens with probability $\frac{1}{2}$. This gives us a probability of $\frac{1}{2R}$ that they are separated. Similarly, if $\ell_u<\ell_v$ at the beginning of Step~(\ref{step:forloop}), then with probability $\frac{1}{2R}$, $\theta\equiv \ell_v \mod R$ and $b_u=0$, in which case $u$ and $v$ are  separated. This completes the proof of \cref{thm:alg-randwts}.
\section{Lower Bound on Individual Fairness}
\label{sec:equiv}

We next prove \cref{thm:equiv}. Let  $n = |V|$. Suppose for any parameter $R = \omega(n)$, there is an LDD procedure that produces clusters such that the probability that $u,v$ are separated lies in $\left[ c_1 \cdot \frac{d(u,v)}{R}, c_2 \cdot \frac{d(u,v)}{R} \right]$ where $c_2 = O(1)$. Note that assuming $R = \omega(n)$ means the diameter of the resulting clusters are irrelevant, and all we care about is the separation probability.

Create an embedding $h$ of $G$ into $\ell_1$ as follows. Sample $\eta = m \cdot R$ such partitions $\{\sigma_i\}_{i=1}^{\eta}$, where $m = \omega(n^2)$. For each cluster $C$ in partitioning $\sigma_{i}$, choose $q^i_C = +1/m$ with probability $1/2$ and $q^i_C = -1/m$ with probability $1/2$. For each $v \in C$, set $t_{iv} = q^i_C$. The embedding has $\eta$ dimensions and the $i^{th}$ dimension of $h(v)$ is $t_{iv}$. 

We have the following lemma for any $(u,v) \in V \times V$:

\begin{lemma}
For any $(u,v)$,  with probability $1 - \frac{1}{m}$,
$$ \frac{c_1}{2} \cdot d(u,v) \cdot \le  \norm{h(u)-h(v)}_1 \le 4 c_2 \cdot d(u,v).$$
\end{lemma}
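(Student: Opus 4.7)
The plan is to compute $\mathbb{E}[\|h(u)-h(v)\|_1]$ exactly in terms of the single-partition separation probability $p := \Pr[u,v \text{ separated in } \sigma_i]$, and then invoke a Chernoff bound on a sum of i.i.d.\ Bernoullis to upgrade this expected-value calculation into a two-sided high-probability statement.

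First I would analyze a single coordinate. If $u, v$ fall in the same cluster of $\sigma_i$, then $|t_{iu} - t_{iv}| = 0$ deterministically; otherwise the two cluster signs $q^i_{C(u)}$ and $q^i_{C(v)}$ are independent uniform $\pm 1/m$, so $|t_{iu} - t_{iv}|$ equals $2/m$ with probability $1/2$ and $0$ with probability $1/2$. Since the partitions and the sign families are drawn jointly independently across $i$, the ``hit'' indicators $Y_i := \tfrac{m}{2}|t_{iu}-t_{iv}|$ are i.i.d.\ Bernoulli$(p/2)$. Setting $N := \sum_i Y_i \sim \mathrm{Binomial}(\eta, p/2)$, we have $\|h(u)-h(v)\|_1 = (2/m) N$, and
\begin{align*}
\mathbb{E}\bigl[\|h(u)-h(v)\|_1\bigr] \;=\; \tfrac{2}{m} \cdot \tfrac{\eta p}{2} \;=\; R p \;\in\; \bigl[c_1\, d(u,v),\; c_2\, d(u,v)\bigr]
\end{align*}
by the hypothesis on $p$.

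Next I would apply the multiplicative Chernoff bound with relative deviation $1/2$, giving $\Pr\!\left[|N - \eta p/2| \ge \eta p/4\right] \le 2\exp(-\eta p/24)$. On the complementary event, $\|h(u)-h(v)\|_1 = (2/m) N$ lies in $[Rp/2,\,3Rp/2] \subseteq [c_1 d(u,v)/2,\,3 c_2 d(u,v)/2] \subseteq [c_1 d(u,v)/2,\,4 c_2 d(u,v)]$, exactly matching the claim. To finish, I would verify that the failure probability is at most $1/m$: using $\eta = mR$, $p \ge c_1 d(u,v)/R \ge c_1/R$ (since $d(u,v) \ge 1$ in the unweighted graph), and $m = \omega(n^2)$, we get $\eta p \ge m c_1 = \omega(n^2)$, which dominates $\ln(2m) = O(\log n)$ by orders of magnitude, so $2\exp(-\eta p/24) \le 1/m$ for all sufficiently large $n$.

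I do not anticipate any substantive obstacle: the argument is a direct moment computation followed by textbook concentration, and the generous slack in the upper constant ($4 c_2$ in place of the expected $c_2$) gives ample room for the Chernoff multiplicative error. The one point to be careful with is confirming that the $Y_i$ are genuinely i.i.d.; this is built into the construction because both the partitions $\sigma_i$ and the cluster sign families $\{q^i_C\}_C$ are drawn jointly independently across $i$, so the joint distribution of $(S_i, B_i)$ — whether $u,v$ are separated and whether their signs disagree — factorizes across coordinates.
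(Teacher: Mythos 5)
Your proposal is correct and follows essentially the same route as the paper: a per-coordinate expectation computation showing $\E[\norm{h(u)-h(v)}_1] = R\,p \in [c_1 d(u,v), c_2 d(u,v)]$, followed by a multiplicative Chernoff bound with relative deviation $1/2$ on the sum of i.i.d.\ $\{0,2/m\}$-valued coordinates. Your bookkeeping of the failure probability via $\eta p \ge m c_1$ is in fact slightly more careful than the paper's, which silently uses $\mu_{uv}\ge 1$.
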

\begin{proof}
Let $h_{ui}$ be the $i^{th}$ coordinate of $h(u)$. In each dimension, if $(u,v)$ are separated, they get different tokens with probability $1/2$, in which case their distance is $2/m$ in that dimension. Otherwise, their distance is zero. This implies
$$ \E[|h_{ui} - h_{vi}|] = \frac{\Pr[(u,v) \mbox{ are separated }]}{m}.$$
Therefore, 
\begin{align*}
     \E [ \norm{h(u)-h(v)}_1 ] &= \eta \cdot \frac{\Pr[(u,v) \mbox{ are separated } ]}{m}\\
     &=c \cdot d(u,v) \ \mbox{ for } c \in [c_1 , c_2].
\end{align*}
Let $\mu_{uv} = \E [ \norm{h(u)-h(v)}_1 ]$. Note now that $ \norm{h(u)-h(v)}_1$ is the sum of $R$ independent random variables each in $\{0,2/m\}$ and further that $\mu_{uv} \ge c_1 \cdot d(u,v) \ge d(u,v) \ge 1$. Using Chernoff bounds, we have:
$$ \Pr[ \norm{h(u)-h(v)}_1 \notin \mu_{uv} \cdot (1 \pm 0.5) ] \le 2 \cdot e^{-m/8} \le \frac{1}{m} $$
for large enough $m$. 
\end{proof}

Since $m = \omega(n^2)$, the previous lemma implies that with constant probability, for all $(u,v) \in V \times V$, we have
$$ \frac{c_1}{2} \cdot d(u,v) \cdot \le  \norm{h(u)-h(v)}_1 \le 4 c_2 \cdot d(u,v). $$
Scaling the coordinates up by $2/c_1$, this yields a non-contractive embedding into $\ell_1$ with distortion $O(c_2/c_1)$, completing the proof of \cref{thm:equiv}.

\section{Individual Fairness for General Distances}
\label{sec:generallb1}
We now prove \cref{thm:generallb1} via  \cref{alg:kpr-modified}. The key idea of the algorithm is to use two cut sequences in each phase of \cref{alg:kpr}.

\begin{algorithm}[htbp]
	\caption{KPR algorithm with two cuts per phase}\label{alg:kpr-modified}
	\begin{algorithmic}[1]
 \item[\textbf{Input:}] Integer $R$, planar graph $G=(V,E)$.
		\STATE Run the iteration  of \cref{alg:kpr} once (for $i = 1$) on $G$ from an arbitrary root $r_0$. \label{alg:phase0}
		\STATE Let $F$ be the set of removed edges.
		\FOR{$i=\{1,2\}$}\label{alg2-step:phase}
		\FOR{every connected component $C$
			of $G\setminus F$}
		\STATE Perform BFS with root $r_i$ set to be the vertex 
        closest to the previous root $r_{i-1}$ used for 
        creating the component, breaking ties 
        arbitrarily.
		\STATE\label{alg2-step:sample-k1k2}Sample $k_1$ uniformly at  random from
		  $\{0,1,\ldots R-1\}$.
        \STATE Sample $\kappa$ from density $\dfrac{\epsilon}{\kappa^{1-\epsilon} R^{\epsilon}}$; set $k_2=\lceil \kappa\rceil$.
		\FOR{every edge $e$ at levels $\ell\equiv k_1\mod R$ or
        $\ell \equiv k_1+k_2\mod R$ from $r_i$}\label{alg2-step:cuts}
		\STATE $F\gets F\cup \{e\}$.
		\ENDFOR
		\ENDFOR
		\ENDFOR
	\end{algorithmic}
\end{algorithm}

To prove \cref{thm:generallb1}, first, note that it satisfies (CON) by definition. The diameter guarantee of $43 \cdot R$ follows directly from \cref{lem:KPR}. Simply observe that if we just consider the cuts induced by $k_1$ in Step~\ref{alg2-step:cuts}, we have effectively run three phases of \cref{alg:kpr}. Because of $k_2$, we make some additional cuts in the second and third phases, which can only decrease the diameter.

\begin{lemma}
$\Pr[u,v \text{ are separated}] = O\left(  \frac{ d(u,v)}{R}\right)$.
\end{lemma}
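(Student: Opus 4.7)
The plan is to adapt the classical KPR upper-bound argument, running a union bound over the edges of a fixed shortest $u$-$v$ path and over the three phases of \cref{alg:kpr-modified}, absorbing the only new feature (an extra cut per phase in phases $1$ and $2$) into a constant factor.

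First, I would fix a shortest path $P$ from $u$ to $v$ in $G$ of length $d = d(u,v)$. The key reduction is that $u$ and $v$ can only end up in different clusters if at least one edge of $P$ is added to $F$ during the execution, for otherwise $P$ survives intact and keeps $u,v$ connected in the final partition. Hence
\[
\Pr[u,v \text{ separated}] \;\leq\; \sum_{e \in P}\, \sum_{i \in \{0,1,2\}} \Pr[e \text{ added to } F \text{ in phase } i].
\]

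Next I would bound each term on the right. Fix an edge $e=(x,y)\in P$ and a phase $i$, and condition on the entire history of random choices made strictly before phase $i$, assuming $e \notin F$ going into phase $i$ (otherwise the summand is $0$). Since $e$ is still present, $x$ and $y$ lie in a common component $C$, so their BFS levels from the root of $C$ differ by at most one; call the smaller level $\ell$. In phase $0$ the edge $e$ is cut only when the single random level equals $\ell \bmod R$, which has probability $1/R$. In phases $1$ and $2$, $e$ is cut only when $\ell \equiv k_1 \pmod R$ or $\ell \equiv k_1 + k_2 \pmod R$; since $k_1$ is uniform on $\{0,\ldots,R-1\}$ and independent of $k_2$, for any fixed realization of $k_2$ both residues $k_1 \bmod R$ and $(k_1+k_2) \bmod R$ are uniform on $\{0,\ldots,R-1\}$, so a union bound gives cut probability at most $2/R$ regardless of the density from which $\kappa$ is drawn.

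Taking expectation over the history and summing across phases, each edge $e \in P$ contributes at most $1/R + 2/R + 2/R = 5/R$, and summing over the $d$ edges of $P$ yields $\Pr[u,v \text{ separated}] \leq 5\,d(u,v)/R = O(d(u,v)/R)$, as required. I do not expect a real obstacle here: the only nuance is verifying that the distribution from which $\kappa$ (and hence $k_2$) is sampled does not interact badly with the uniformity of $k_1$, which is immediate by translation invariance of the uniform measure on $\mathbb{Z}/R\mathbb{Z}$. The argument is structurally identical to the standard KPR proof of \cref{lem:KPR}, just with the ``$3$'' replaced by ``$5$'' to account for the additional cut levels in phases $1$ and $2$.
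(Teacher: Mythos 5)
Your proof is correct and follows the same overall union-bound skeleton as the paper's, with one genuinely different (and cleaner) ingredient: the treatment of the extra cut $k_1+k_2$. The paper bounds the probability that the $k_2$-cut hits the $u$--$v$ path by integrating the density of $\kappa$ over the interval of levels spanned by the path and then averaging over the uniform shift coming from $k_1$, which requires the explicit computation $\int_x^{x+d}\frac{\epsilon}{\kappa R^{\epsilon}}\,d\kappa$ followed by a sum over $x$. You instead observe that for any fixed realization of $k_2$, the residue $(k_1+k_2)\bmod R$ is uniform on $\mathbb{Z}/R\mathbb{Z}$ by translation invariance, so each edge of a fixed shortest $u$--$v$ path is hit by either cut residue with probability at most $2/R$ per phase, independently of the density from which $\kappa$ is drawn. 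This is exactly the trick the paper itself uses later in the proof of \cref{lem:number2}, and it buys you a distribution-free argument with an unambiguous constant ($5/R$ per edge across the three phases), whereas the paper's density computation is tailored to the specific choice of $\kappa$'s law. Both arguments are valid; yours is the more elementary and more general of the two.
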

\begin{proof}
We use union bounds. The three cuts made in the first step and by $k_1$ in Step~\ref{alg2-step:cuts} add $3 \cdot \rho_{uv}$ to the probability of separation. We now bound the probability that $u,v$ are separated by $k_2$ in Step~\ref{alg2-step:cuts}. Consider the path of length $d=d(u,v)$ between $u,v$ in the tree. They are separated only if at least one edge of this path is at some level $\ell\equiv k_1+k_2 \mod R$.
Suppose that the edges of this path lie between levels $\ell_1$ and $\ell_2$ where we know that $\ell_1-\ell_2\leq d-1$. If $\ell_1\equiv k_1+x\mod R$, then the probability that some edge of this path is cut is the probability that $\kappa\in [x,x+\ell_2-\ell_1)\subseteq [x,x+d)$. This happens with probability
\[
\int_{x}^{x+d}\frac{\epsilon}{\kappa R^{\epsilon}} d\kappa=\frac{\epsilon((x+d)^{\epsilon}-x^{\epsilon} )}{R^{\epsilon}}.
\]
Since $k_1$ is chosen uniformly at random, $x$ is also chosen uniformly at random from $\{0,1,\ldots,R-1\}$. Thus, a cut is made with probability
\begin{align*}
    \sum_{x=0}^{R-1} \frac{\epsilon((x+d)^{\epsilon}-x^{\epsilon} )}{R^{1+\epsilon}}&\leq \int_{0}^{R}\frac{\epsilon((x+d)^{\epsilon}-x^{\epsilon} )}{R^{1+\epsilon}} dx\\
    &\leq \frac{\epsilon((R+d)^{1+\epsilon}-R^{1+\epsilon})}{(1+\epsilon)R^{1+\epsilon}}.
\end{align*}
Choosing $\epsilon\leq 1$, this is at most $2 d/R = 2\rho_{uv}$, giving an overall probability of separation of at most $5\rho_{uv}$.
\end{proof}

We now prove that it satisfies (COMP).
\begin{lemma}
\label{lem:number2}
\cref{alg:kpr-modified} satisfies (COMP).
\end{lemma}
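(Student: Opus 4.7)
The plan is to follow the argument of \cref{lem:number}, modified to account for the two-cut structure of phases 1 and 2 in \cref{alg:kpr-modified}. The two parts of (COMP) are handled separately.

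For part (i) --- expected number of components is $O(n/R)$ --- the key observation is that in every phase, each individual cut is at a level that is marginally uniform in $\{0,1,\ldots,R-1\}$ modulo $R$: this is immediate for the cut at level $\equiv k_1 \pmod R$, and it also holds for the cut at level $\equiv k_1 + k_2 \pmod R$, because $k_1$ is drawn uniformly and independently of $k_2$. Applying the counting argument from the proof of \cref{lem:number} to each BFS tree of size $m$, the expected number of tree edges cut by any single cut is at most $m/R$. Summing over all components in a phase (whose sizes total $\le n$) and over the $1+2+2 = 5$ cuts made across the three phases, the expected total number of additional components created is at most $5n/R = O(n/R)$.

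For part (ii) --- existence of $\Omega(\beta)$ components of diameter $\Omega(R)$ when $R = D/\beta$ --- we follow the depth-halving argument of \cref{lem:number}. Phase 0 is a standard KPR phase and so leaves at least $\lfloor \beta/4 \rfloor$ components of diameter $\ge R/2$ by the same reasoning. For each such component, the BFS tree rooted at $r_1$ has depth at least the component radius, i.e., $\ge R/4$. Phase 1 makes at most two cuts in this tree, dividing the level range into at most three contiguous pieces whose widths sum to $\ge R/4$. By pigeonhole, the largest piece has width $\ge R/12$, and the sub-component containing the portion of the root-to-deepest-vertex path within that piece has diameter $\ge R/12$. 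Repeating this step in phase 2 yields at least one sub-sub-component of diameter $\ge R/72 = \Omega(R)$ from each surviving component. We thus retain $\lfloor \beta/4 \rfloor = \Omega(\beta)$ components of diameter $\Omega(R)$.

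The main point to verify is that the extra $k_2$-cut does not invalidate either argument. For part (i), we only use the marginal uniformity of $(k_1 + k_2) \bmod R$, not any joint independence with $k_2$, so the skewed density of $\kappa$ plays no role. For part (ii), the only change from \cref{lem:number} is that two cuts may split a BFS tree into three pieces rather than two, worsening the per-phase constant from a factor of $2$ to a factor of $3$ but still leaving the largest piece of width $\Omega$ of the depth. I do not foresee any deeper obstacle; the main task is to track constants carefully through the two depth-reduction steps.
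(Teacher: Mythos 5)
Your proposal is correct and takes essentially the same approach as the paper's proof: for part (i), both arguments rely on the marginal uniformity of $(k_1+k_2)\bmod R$ (which follows from $k_1$ being uniform and independent of $k_2$) to get the $n/R$ per-cut bound — the paper writes this out as a conditional double sum — and for part (ii) both adapt the depth-halving pigeonhole of \cref{lem:number} to the two-cut-per-phase setting. One minor wording caution: the phase-1 (or phase-2) BFS tree can have depth exceeding $R$, so ``at most two cuts in this tree'' is not literally true for the whole tree; either restrict attention to the first $R/4$ (resp.\ $R/24$) levels, where at most one cut per residue class can occur, or observe that for depth $D'$ the pigeonhole yields a piece of width at least $D'/(2\lceil D'/R\rceil+1)$, which is minimized at the smallest admissible depth and recovers your $R/12$ bound.
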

\begin{proof}
    The proof is similar to that of \cref{lem:number}. The same arguments follow for the first cut created.
For the cuts in Step~\ref{alg2-step:cuts}, we separately argue below. Let $ a_1, a_2,\ldots, a_q$ denote the number of edges in levels $1,2, \ldots,q$ of the tree. The expected number of additional components generated by the cut corresponding to $k_1$ is at most $\frac{\sum_{i=1}^q a_i}{R}\leq \frac{n}{R}$.

To bound the expected number of components generated by the cut corresponding to $k_2$, observe that the unconditional probability that any edge of the BFS tree is cut is again $1/R$. Formally, the expected number of components is:

\begin{align*}
    &\sum_{x=0}^{R-1}\Pr[k_1=x]\sum_{i=1}^{q} a_i \Pr[k_2\equiv x-i\mod R | k_1 = x]\\
    &=\sum_{i=1}^{q}a_i\sum_{x=0}^{R-1}\frac{1}{R}\Pr[k_2\equiv x-i\mod R | k_1 = x]=\frac{\sum_{i=1}^q a_i}{R} \le \frac{n}{R}
\end{align*}

The second requirement of (COMP) follows similarly as the proof of \cref{lem:number}, with the observation that we only have $\frac{2D}{R}$ cuts per phase in total for any outcome of randomness.
\end{proof}

\newcommand{\const}{59}

We will now show the following theorem, which yields \cref{thm:generallb1} for $d(u,v) \ge \const$.

\begin{theorem}\label{thm:kpr-2cuts}
	For $d(u,v) \ge \const$, \Cref{alg:kpr-modified} satisfies
	\[
	 \Pr[u,v \text{ are separated}] \ge \frac{\epsilon^2}{16} \cdot \left( \frac{\lfloor d(u,v)/\const \rfloor}{R}\right)^{2+2\epsilon}.
	\]
\end{theorem}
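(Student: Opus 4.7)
The plan is to lower-bound $\Pr[u, v \text{ are separated}]$ by identifying, within the joint randomness of phases~1 and~2 of \Cref{alg:kpr-modified}, a family of outcomes whose total probability meets the claimed bound and under which $u$ and $v$ provably lie in different final clusters. The forced separation is obtained by a $K_{3,3}$-minor argument in the spirit of the KPR diameter analysis, but now deriving a contradiction \emph{from the failure to separate} rather than from excessive cluster diameter.

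First I would condition on phase~0's randomness; if $u, v$ are split by phase~0 we are done, so assume they lie in a common component $C_0$. Let $D := \lfloor d(u,v)/\const \rfloor$. In each subsequent phase $i \in \{1, 2\}$, the algorithm performs BFS from a root $r_i$ and removes edges at BFS levels $\equiv k_1^{(i)} \bmod R$ and $\equiv k_1^{(i)} + k_2^{(i)} \bmod R$, with $k_2^{(i)}$ drawn from the density $\epsilon/(\kappa^{1-\epsilon}R^\epsilon)$. Let $a_i, b_i$ denote the BFS levels of $u, v$ from $r_i$. I would design a per-phase \emph{good event} requiring that $k_1^{(i)}$ place the first cut between a designated pair of vertices along a fixed $u$-to-$v$ path in $C_{i-1}$ (a residue window of width $\Theta(D)$), and that $k_2^{(i)}$ place the second cut at offset $\Theta(D)$ from the first (i.e.\ $k_2^{(i)} \in [c_1 D, c_2 D]$ for appropriate constants). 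Using the density, the $k_2^{(i)}$ condition has probability $\Theta\bigl(\epsilon (D/R)^\epsilon\bigr)$; combined with the $\Theta(D/R)$ from $k_1^{(i)}$, the per-phase good event has probability $\Omega\bigl(\epsilon (D/R)^{1+\epsilon}\bigr)$. Since phases~1 and~2 use independent randomness, the joint good event has probability $\Omega\bigl(\epsilon^2 (D/R)^{2+2\epsilon}\bigr)$, matching the claimed bound up to the constant $1/16$.

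The crucial step is to argue that under the joint good event, $u$ and $v$ must lie in different final clusters. Suppose for contradiction they do not. Then for each phase $i$ there is a path in the final cluster that connects $u$ and $v$ while avoiding both of phase~$i$'s cut levels. Combining (i)~the BFS $u$- and $v$-paths to each of the three roots $r_0, r_1, r_2$, (ii)~these two ``avoidance'' paths, and (iii)~a short $u$-$v$ connection, I would extract six disjoint connected branch sets, together with the nine branch-to-branch connections required to exhibit a $K_{3,3}$ minor in $G$. Planarity excludes this and so forces separation under the good event. The constant $\const = 59$ and the $\lfloor \cdot / \const \rfloor$ slack are needed to guarantee sufficient ``room'' along the $u$-$v$ path so that the four cut levels in phases~1 and~2 yield disjoint branch sets in the minor.

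The main obstacle will be the $K_{3,3}$ construction itself. Unlike the KPR analysis, which extracts a minor from a single long path to bound the diameter, here the minor must be assembled simultaneously from cut positions, hypothetical avoidance paths, and BFS paths of two phases. The key design question is the relative placement of the two cuts in a single phase: taking $k_2^{(i)}$ at offset $\Theta(D)$ (rather than $O(1)$) ensures that the two cuts per phase play distinct structural roles in the minor while both still fall between the relevant BFS levels of $u$ and $v$, and it is precisely this requirement that introduces the $\epsilon (D/R)^\epsilon$ factor per phase, and hence the overall $\epsilon^2 (D/R)^{2+2\epsilon}$, into the final bound.
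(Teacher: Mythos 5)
Your probabilistic skeleton matches the paper exactly: condition out phase~0, define a per-phase ``good event'' (the paper calls it a \emph{sandwich event} $E_i$) in which $k_1$ lands in a $\Theta(D)$ window and $k_2 \in \Theta(D)$, compute its probability as $\Omega(\epsilon(D/R)^{1+\epsilon})$ from the heavy-tailed density, take the product over the two phases, and then argue that a failure to separate under $E_1 \cap E_2$ forces a $K_{3,3}$ minor. That is precisely \Cref{lem:kpr-2cuts-main} and the surrounding calculation in \Cref{sec:generallb1}.

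The gap is in your sketch of the $K_{3,3}$ construction. You propose to build the six branch sets from (i)~BFS paths from $u,v$ to the three roots $r_0, r_1, r_2$, (ii)~two avoidance paths, and (iii)~a short $u$--$v$ connection. This does not give a bipartite $3\times 3$ structure: paths from three roots to only two terminals $\{u,v\}$ furnish just six of the nine required cross-connections, so you are a ``third terminal'' short. Also, $r_0$ plays no role at all --- once you condition on $(u,v)$ surviving phase~0, its randomness is spent and it is not useful for the minor. The paper's fix is to identify two \emph{new} vertices: $w$, a roughly central vertex of the surviving $u$--$v$ path in $G_3$, which serves as the \emph{third source} alongside $r_1, r_2$; and $z$, a vertex on the BFS path in $T_2$ from $r_2$ to $w$ at distance about $9\hat d$ from $w$ (so $z \notin G_3$), which serves as the \emph{third terminal} alongside $u, v$. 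The nine paths $\{r_1, r_2, w\} \times \{u, v, z\}$ are each split into a ``far'' piece and a ``near'' piece (by $G_1$-vs-$G_2$ membership, by $\ell_2$-level, and by distance along the surviving path, respectively), yielding $A_1,\dots,A_6$; the disjointness proofs rely on the sandwich events $E_1, E_2$ pinning every relevant vertex within $O(\hat d)$ levels of $u,v$ in both $T_1$ and $T_2$ (\Cref{claim:levels}--\Cref{claim:levels-z}). Without introducing $w$ and $z$, and without the level-control furnished by $L_1, L_2, E_1, E_2$, the proposed branch sets will not be disjoint, and you will not have nine pairwise connections. So the plan's first two paragraphs are essentially the paper's argument, but the minor construction as stated would not go through without these two additional anchor vertices and the level-pinning claims that support them.
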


For $(u,v)$ with $d(u,v) < \const$,  we first run \cref{alg:kpr-randwts} and note that \cref{thm:alg-randwts} implies 
 $\Pr[u,v \mbox{ separated}] = \Omega(1/R) = \Omega(\rho_{uv}) = \Omega(\rho_{uv}^{2+2\epsilon}). $
We subsequently run \cref{alg:kpr-modified} on each resulting component, hence yielding \cref{thm:generallb1}.

\subsection{Proof of \cref{thm:kpr-2cuts}}

\newcommand{\dhat}{\hat{d}}
\newcommand{\rhohat}{\hat{\rho}}
 
Fix some pair $(u,v)$ of vertices and let $d = d(u,v)$. We assume $d$ is a multiple of $\const$; otherwise, we can use $\const\lfloor\frac{d}{\const}\rfloor$ as a lower bound
for $d$. First, note that we can assume $d \le 43 \cdot R$, else $u,v$ are separated with probability $1$. Define $\rhohat=\frac{\rho_{uv}}{\const}$ and  $\dhat=\frac{d}{\const}$. From above, we get $\rhohat \le 1$. Further, assume w.l.o.g. that $(u,v)$ are not cut in Step~(\ref{alg:phase0}). 

Conditioned on this event, let $G_1$ denote the component to which $(u,v)$ belong.  In the next iteration (Step~(\ref{alg2-step:phase}) for $i=1$), let the root chosen for this component be $r_1$, and the corresponding BFS tree be $T_1$.  Let $d_1(x,y)$ denote the distance between $x,y$ in $G_1$ and let $\ell_1(u) = d_1(r_1,u)$ denote the level of the tree $T_1$ (measured from the root) that $u$ belongs to.

\paragraph{Events $L_1$ and $E_1$:} Let  $L_1$ be the event that $\abs{\ell_1(u)-\ell_1(v)}<\dhat$. If $L_1$ does not happen, then $(u,v)$ are cut in Step~(\ref{alg2-step:phase}) for $i = 1$ with probability at least $\rhohat$ and we are done. 
We proceed assuming event $L_1$ happens and define the first sandwich event $E_1$ as follows. 

\begin{definition}
\label{def:e1}
In event $E_1$, in Step~(\ref{alg2-step:phase}) for $i = 1$, the $(k_1, k_2)$ chosen satisfy:  
\begin{itemize}
\item $k_1\equiv \ell \mod R$ for some $\ell$ satisfying 
$$\min (\ell_1(u),\ell_1(v))\geq \ell \geq \min(\ell_1(u),\ell_1(v))-\dhat$$
\item $k_1+k_2\equiv \ell \mod R$ for some $\ell$ satisfying 
$$\max (\ell_1(u),\ell_1(v))\leq \ell \leq \max(\ell_1(u),\ell_1(v))+\dhat$$
\end{itemize}
\end{definition}

\begin{figure}[htbp]
    \centering
    \includegraphics[width=0.25 \textwidth]{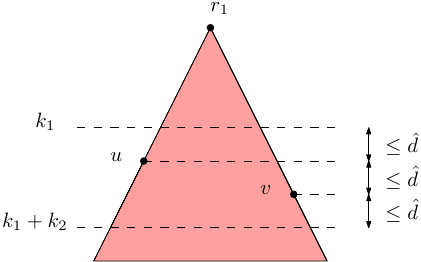}
\caption{An example of the event $E_1\cap L_1$ in $T_1$ (red). Both the cuts made by $k_1,k_2$ are within $2\dhat$ levels of both $u,v$, which are themselves within $\dhat$ levels of each other.}
    \label{fig:firstcuts}
\end{figure}
In other words, the first sandwich event happens when the first cut (corresponding to $k_1$) for that iteration $i = 1$ is made within $\dhat$	levels above the higher of the levels of $u,v$ in $T_1$ and the second cut (corresponding to $k_2$) in that iteration is made within $\dhat$ levels below  the level of the lower of $u,v$ in $T_1$. This is illustrated in \cref{fig:firstcuts}. Observe that for a sandwich event to happen, $k_1$ must take one of $\dhat$ possible values, which happens with probability $\rhohat$,
and $\kappa$ must lie in the range 
$[\max(\ell(u),\ell(v))-k_1,\max(\ell(u),\ell(v))-k_1+\dhat)]$. Since $\max(\ell(u),\ell(v))-k_1 \in [\dhat,2\dhat]$, 
this probability is at least $\displaystyle \frac{(3\dhat)^{\epsilon}-(2\dhat)^{\epsilon}}{R^{\epsilon}} \ge \frac{\epsilon}{4} \cdot \rhohat^{\epsilon}$.
This gives an overall probability of 
$\Pr[E_1] = \frac{\epsilon}{4} \cdot \rhohat^{1+\epsilon}.$ Conditioned on $L_1 \cap E_1$, the pair $(u,v)$ could still be connected. If they are not connected, then we have shown the theorem. Suppose they are still connected, then let the resulting component be $G_2$. Define the root $r_2$ of the BFS tree $T_2$ on $G_2$ analogous to above, and let $\ell_2(x)$ denote the level of node $x \in V(G_2)$ in this tree. 

\paragraph{Events $L_2$ and $E_2$:} We assume the event $L_2$ that $\abs{\ell_2(u)-\ell_2(v)}<\dhat$ happens. Suppose not, then $(u,v)$ are cut in Step~(\ref{alg2-step:phase}) for $i = 2$ with probability at least $\rhohat$. Combining this with the bound for $\Pr[E_1]$ above, we are done. Therefore, assume $L_2$ happens. Then, define the second sandwich event $E_2$ analogous to \cref{def:e1}, with $i=2$ and $\ell_2$ used instead of $i=1$ and $\ell_1$.  We have,   
$$\Pr[E_2 | E_1] = \frac{\epsilon}{4} \cdot \rhohat^{1+\epsilon}\implies  \Pr[E_1 \cap E_2] \ge \frac{\epsilon^2}{16} \cdot \rhohat^{2+2\epsilon}.$$
The following lemma completes the proof of \cref{thm:kpr-2cuts}.

\begin{lemma} [Proved in \cref{app:twocuts}]
\label{lem:kpr-2cuts-main}
Conditioned on $E_1\cap E_2$ (assuming $L_1 \cap L_2$ happens), $u$ and $v$ are always split.
\end{lemma}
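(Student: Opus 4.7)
The plan is to argue by contradiction: assume $u$ and $v$ lie in a common component $G_3$ after the cuts of both sandwich phases, and exhibit a $K_{3,3}$ minor of $G$, contradicting planarity. First, I would record what the sandwich events give us geometrically. Under $E_1\cap L_1$, the component $G_2$ containing $u,v$ after phase $i=1$ lies strictly between an upper cut at level $\ell_1^-\le \min(\ell_1(u),\ell_1(v))$ and a lower cut at level $\ell_1^+\ge \max(\ell_1(u),\ell_1(v))$ of $T_1$; moreover, since \cref{alg:kpr-modified} chooses $r_2$ to be the vertex of $G_2$ closest to $r_1$, the root $r_2$ must sit at the very top of this slab, at level $\ell_1^-+1$ of $T_1$. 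The analogous slab structure holds for $G_3\subseteq G_2$ with respect to $T_2$.

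Next, I would extract three pairwise internally vertex-disjoint $u$--$v$ paths living inside $G_2$: (a) the $T_1$-tree path through the nearest common ancestor of $u,v$ in $T_1$, which lies in the $T_1$-slab; (b) the $T_2$-tree path through $r_2$, which lies in $G_2$; and (c) the hypothesized surviving path $Q\subseteq G_3$, which lies in the interior of the $T_2$-slab. That (a), (b), and (c) leave $u$ and $v$ along different edges---through $T_1$-parents, $T_2$-parents, and interior $G_3$-edges respectively---makes disjointness near the endpoints immediate; a short local-surgery argument trims any remaining overlap in the interior.

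Finally, using these three $u$--$v$ paths together with the ``vertical handles'' that the sandwich structure exposes---ancestor paths from $u$ and $v$ in $T_1$ running up to the upper cut and down to the lower cut, and the analogous paths in $T_2$---I would designate six pairwise disjoint connected subgraphs of $G$ realizing a $K_{3,3}$ minor. One natural choice places $\{u\}$, $\{v\}$, and an internal vertex of $Q$ on one side, and contractions of the $T_1$ upper-cut fringe (containing $r_2$), the $T_1$ lower-cut fringe, and the $T_2$ upper-cut fringe on the other side. Verifying that these six super-nodes and the nine interconnecting paths can simultaneously be chosen pairwise vertex-disjoint is the crux and the main obstacle, and is precisely where both the sandwich structure (which provides room above and below $u,v$ in each BFS tree) and the closeness of $r_1$ to $r_2$ (enforced by the root-selection rule in \cref{alg:kpr-modified}) play an essential role. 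As noted just before \cref{thm:kpr-2cuts}, the resulting construction will differ in detail from the minor in \cite{KPR}, since here the minor is used to certify a lower bound on the separation probability rather than an upper bound on the cluster diameter.
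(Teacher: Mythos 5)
Your high-level strategy---assume $u,v$ survive together in a component $G_3$ and exhibit a $K_{3,3}$ minor contradicting planarity---is exactly the paper's. However, the concrete construction you sketch has genuine gaps and differs from the one that actually works. First, your proposed supernodes are not valid minor branch sets: the ``$T_1$ upper-cut fringe'' and ``lower-cut fringe'' are sets of vertices at prescribed BFS levels, and a level set of a BFS tree is not a connected subgraph in general, so it cannot be contracted to a single node. Second, your three $u$--$v$ paths are not where you claim: the $T_1$-tree path from $u$ to $v$ passes through their nearest common ancestor in $T_1$, which can lie arbitrarily far above the upper cut (it can be $r_1$ itself), so that path need not lie in the slab nor survive into $G_2$; the same objection applies to the $T_2$-tree path, which goes through the nearest common ancestor in $T_2$, not through $r_2$. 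Third, joining $u$, $v$, and an interior vertex of $Q$ each to a lower-cut fringe requires \emph{descending} paths in a BFS tree, which need not exist (any of these vertices can be a leaf). And in any case three internally disjoint $u$--$v$ paths only form a theta graph, which is planar, so that intermediate step yields no contradiction by itself.

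The paper's construction sidesteps all of this by using only ancestor-type shortest paths from three hubs $r_1$, $r_2$, $w$ to three terminals $u$, $v$, $z$, where $w$ is a vertex of the surviving path at distance at least $d/2-1$ from both $u$ and $v$, and $z$ is a carefully chosen auxiliary vertex on the $T_2$-path from $r_2$ to $w$, at distance $9\hat{d}$ from $w$, selected precisely so that $z\notin G_3$. Each of the nine hub-to-terminal paths is split into a far portion (assigned to the hub's branch set $A_1,A_2,A_3$) and a near portion of length $4\hat{d}$ (assigned to the terminal's branch set $A_4,A_5,A_6$), and pairwise disjointness is then verified by distance and level arithmetic driven by the sandwich events $E_1,E_2,L_1,L_2$. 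Your proposal is missing the vertex $z$, missing the near/far splitting of the paths, and, as you yourself concede, leaves the disjointness verification---which is the actual content of the lemma---unresolved.
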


We prove \cref{lem:kpr-2cuts-main} by contradiction. Suppose not and that $u,v$ lie in the same connected component after the two phases. Let $G_3$ be this connected component. Our approach will be to  exhibit a $K_{3,3}$ minor in the graph, contradicting the planarity of $G$. This construction is the technical heart of the proof, and we present in \cref{app:twocuts}.

\section{Individual Fairness without (COMP)}
\label{sec:rand-radius}
We now prove \cref{thm:generallb2} via \cref{alg:kpr-rand-radius}. The key idea is to run \cref{alg:kpr} with the parameter drawn from a distribution depending on $R$ instead of the parameter being $R$ directly. 

\begin{algorithm}[htbp]
	\caption{KPR algorithm with random diameter}\label{alg:kpr-rand-radius}
	\begin{algorithmic}[1]
 \item[\textbf{Input:}] Integer $R$, $\alpha\in \mathbb{R}_{\geq 0}$, planar graph $G=(V,E)$
		\STATE Sample $r\in (0,R]$
		according to density $\frac{(1+\alpha)r^{\alpha}}{R^{1+\alpha}}.$
		\STATE Run \Cref{alg:kpr} with the parameter $\lceil r\rceil $ in place of $R$.
	\end{algorithmic}
\end{algorithm}

\begin{lemma} \Cref{alg:kpr-randwts} produces connected components of diameter $O(R)$ and for every $u,v\in V(G)$, satisfies	
\begin{alignat*}{3}	
\frac{\rho_{uv}}{43}& &&\leq \Pr[u,v \text{ are separated}]  \leq 3\rho_{uv}\log(\frac{1}{\rho_{uv}})
\intertext{if $\alpha=0$ and if $\alpha>0$, satisfies}
\left(\frac{\rho_{uv}}{43}\right)^{1+\alpha}& &&\leq \Pr[u,v \text{ are separated}]\leq 3\rho_{uv}\left(1+\frac{1}{\alpha}\right).
\end{alignat*} 
\end{lemma}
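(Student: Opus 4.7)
The plan is to prove the diameter bound, the upper bound on separation probability, and the lower bound separately; each reduces to combining \cref{lem:KPR} pointwise in $r$ with a direct integration against the density $\tfrac{(1+\alpha)r^{\alpha}}{R^{1+\alpha}}$ on $(0,R]$. Throughout let $d=d(u,v)$. Since \cref{alg:kpr-rand-radius} calls \cref{alg:kpr} with integer parameter $\lceil r\rceil\leq R+1$, \cref{lem:KPR} immediately gives clusters that are connected components of diameter at most $43\lceil r\rceil = O(R)$.

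For the upper bound I would condition on $r$. By \cref{lem:KPR}, $\Pr[(u,v)\text{ separated}\mid r]\leq\min(1,\,3d/\lceil r\rceil)\leq\min(1,\,3d/r)$, which I then integrate after splitting at $r=d$ (below which the pointwise bound is vacuous). For $\alpha=0$ the uniform density gives $\tfrac{1}{R}(d+3d\ln(R/d))=\rho_{uv}(1+3\ln(1/\rho_{uv}))$, which matches the claimed $O(\rho_{uv}\log(1/\rho_{uv}))$ once $\rho_{uv}$ is bounded away from $1$. For $\alpha>0$, a direct computation of $\int_{d}^{R}\tfrac{(1+\alpha)r^{\alpha-1}\cdot 3d}{R^{1+\alpha}}\,dr$ together with the $[0,d]$-piece yields $\rho_{uv}^{1+\alpha}+\tfrac{3(1+\alpha)}{\alpha}(\rho_{uv}-\rho_{uv}^{1+\alpha})$; since $3(1+\alpha)/\alpha>1$ the $\rho_{uv}^{1+\alpha}$ contributions combine to a nonpositive amount, leaving the stated bound $3\rho_{uv}(1+1/\alpha)$.

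For the lower bound I would use the diameter guarantee in the reverse direction: whenever $43\lceil r\rceil<d$ the pair $(u,v)$ must lie in different components, so it suffices to estimate $\Pr[43\lceil r\rceil<d]=\Pr[r\leq\lfloor(d-1)/43\rfloor]$. Evaluating this under the density yields $(\lfloor(d-1)/43\rfloor/R)^{1+\alpha}$ for $\alpha>0$ and $\lfloor(d-1)/43\rfloor/R$ for $\alpha=0$, which up to an additive $O(1/R)$ slack from the floor operation matches the claimed $(\rho_{uv}/43)^{1+\alpha}$ and $\rho_{uv}/43$ respectively; the small-$d$ boundary regime is absorbed into the leading constant.

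The main delicate step is the $\alpha=0$ upper bound, where the $\log$ factor arises precisely because the uniform density places too much mass on small $r$ for which the KPR bound is trivial; collecting the two regimes into the claimed $3\rho_{uv}\log(1/\rho_{uv})$ requires a short case check on the size of $\rho_{uv}$. The $\alpha>0$ case is then a direct parametric generalization, and the lower bound is a one-line consequence of the pointwise diameter guarantee.
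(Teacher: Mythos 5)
Your overall approach is exactly the paper's: condition on the sampled $r$, integrate the KPR separation bound against the density for the upper bound, and use the KPR diameter bound to argue sure separation for the lower bound. The one substantive slip is the split point in the upper-bound integral. You split at $r=d$, which yields $\rho_{uv}+3\rho_{uv}\ln(1/\rho_{uv})$ for $\alpha=0$; this is $O(\rho_{uv}\log(1/\rho_{uv}))$ but strictly exceeds the claimed bound $3\rho_{uv}\log(1/\rho_{uv})$ by the additive $\rho_{uv}$ term, as you yourself hedge. The paper splits at $r=3d$, which is the correct threshold at which the pointwise KPR bound $3d/r$ stops being vacuous; this gives
\[
\int_0^{3d}\frac{dr}{R}+\int_{3d}^{R}\frac{3d}{r}\cdot\frac{dr}{R}=3\rho_{uv}+3\rho_{uv}\ln\!\left(\frac{1}{3\rho_{uv}}\right)=3\rho_{uv}\ln\!\left(\frac{1}{\rho_{uv}}\right)-3\rho_{uv}(\ln 3-1)\le 3\rho_{uv}\log\!\left(\frac{1}{\rho_{uv}}\right),
\]
where the final inequality uses $\ln 3>1$. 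That $\ln 3$ slack is precisely what absorbs the additive term, so the lemma's exact constant is recoverable only with the $3d$ split; your $d$ split cannot be patched by a ``case check on $\rho_{uv}$.'' The analogous issue in your $\alpha>0$ computation is harmless because the $\rho^{1+\alpha}$ terms cancel favorably there, as you correctly observe. On the lower bound, your more careful accounting for the ceiling in $\lceil r\rceil$ is in fact more precise than the paper's (which silently integrates $\Pr[43r<d]$), and the resulting $O(1/R)$ slack is an artifact the paper also implicitly ignores, so I would not count that against you.
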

\begin{proof}
The diameter bound follows directly from \cref{lem:KPR} since the largest parameter with which we run \cref{alg:kpr} is $R$. Similarly, the connectivity guarantee follows since \cref{alg:kpr} finds connected components.
	
For the lower bound, observe that if $43r<d(u,v)$, then they are separated with probability 1 from the statement of \cref{lem:KPR}. This happens with probability 
\begin{align*}
\int_0^{\frac{d(u,v)}{43}} \frac{(1+\alpha)r^{\alpha}}{R^{1+\alpha}} dr =\left(\frac{d(u,v)}{43R}\right)^{1+\alpha}.
\end{align*}

For the upper bound, we consider them to be separated with probability 1 if $r<3d(u,v)$. If $r\geq 3d(u,v)$, then  they are separated with probability $\frac{3d(u,v)}{r}$ from \cref{lem:KPR}.
This gives us an overall probability of
\begin{align*}
&\int_0^{3d(u,v)} \frac{(1+\alpha)r^{\alpha}}{R^{1+\alpha}} dr+\int_{3d(u,v)}^{R} \frac{3d(u,v)}{r}\frac{(1+\alpha)r^{\alpha}}{R^{1+\alpha}} dr \\
&=\left(\frac{3d(u,v)}{R}\right)^{1+\alpha}
+\frac{3(1+\alpha)d(u,v)}{R^{1+\alpha}}\int_{3d(u,v)}^{R} r^{\alpha-1} dr.
\end{align*}
If $\alpha>0$, the above expression is
\begin{align*}
&\left(3\rho\right)^{1+\alpha}
+\frac{3\rho(1+\alpha)}{\alpha R^{\alpha}}\left(
R^{\alpha} - (3d(u,v))^{\alpha} \right)\leq\frac{3\rho(1+\alpha)}{\alpha},
\end{align*}
 where $\rho = \rho_{uv}$. If $\alpha=0$, the probability is $3\rho +\log(\frac{1}{3\rho}) \leq 3\rho \log(\frac{1}{\rho})$,
completing the proof.
\end{proof}

To obtain the guarantee in \cref{thm:generallb2}, we modify \cref{alg:kpr-rand-radius} as follows:
\begin{enumerate}
\item For constant $\alpha > 0$, we run \cref{alg:kpr} with probability $1-\alpha$ and \cref{alg:kpr-rand-radius} with probability $\alpha$. This yields $g(\rho) = \Omega(\alpha \rho^{1+\alpha})$ and $f(\rho) \le 3 \rho$.
\item When $\alpha  =0$, we run \cref{alg:kpr} with probability $1-\frac{1}{\log R}$ and otherwise run \cref{alg:kpr-rand-radius}. This again yields $g(\rho) = \Omega(\rho/\log R)$ and $f(\rho) = O(\rho)$.
\end{enumerate}

We note that though \cref{alg:kpr-rand-radius} produces $O(n/R)$ components in expectation, it does not satisfy (COMP) since when $r = o(R)$, the maximum diameter is $O(r) = o(R)$.
\section{Individual Fairness without (CON)}
\label{sec:equiv0}

We now prove \cref{thm:equiv0}. Recall that an embedding is a function $h$ that maps the vertices $v$ of $G$ to $s$-dimensional points $h(v)$ (for some $s$) with the guarantee that the $\ell_1$ distance between $h(u)$ and $h(v)$ is at least $d(u,v)$, and at most $\alpha \cdot d(u,v)$. The parameter $\alpha \ge 1$ is termed the {\em distortion}. The algorithm for embedding planar graphs into $\ell_1$~\cite{Rao-L1} has the following guarantee:

\begin{lemma}[\cite{Rao-L1}]
\label{lem:rao}
Given a planar graph with diameter $\Delta$, there is an embedding $h$ into Euclidean space such that for any pair of vertices $(u,v)$:
$$ d(u,v) \le \norm{h(u) - h(v)}_2 \le O(\sqrt{\log \Delta}) \cdot d(u,v).$$
\end{lemma}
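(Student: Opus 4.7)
The plan is to build the LDD from Rao's $\ell_2$ embedding by cutting along a random Gaussian projection, in the spirit of locality-sensitive hashing for $\ell_2$; a coarse preliminary decomposition is layered on top to enforce the $O(R)$ diameter. The algorithm I would use is: (i) apply KPR (\cref{alg:kpr}) with parameter $R$ to partition $G$ into connected clusters $C_1, C_2, \ldots$ of graph diameter $\le 43R$; (ii) invoke \cref{lem:rao} on the subgraph induced by each $C_i$ (which has diameter $O(R)$) to obtain an embedding $h_i : C_i \to \mathbb{R}^{s_i}$ with distortion $\alpha = O(\sqrt{\log R})$ with respect to the intra-cluster metric $d_{C_i}$; (iii) sample a Gaussian direction $g$ with i.i.d.\ $N(0,1)$ entries and a uniform offset $\tau \in [0, L]$ with $L := c \cdot R\sqrt{\log R}$ for a small absolute constant $c$; (iv) refine each $C_i$ by placing each vertex $v$ into the sub-cluster indexed by $\lfloor (\langle h_i(v), g\rangle + \tau)/L\rfloor$. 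Every output piece lies inside some $C_i$, so its graph diameter is $O(R)$ and the LDD definition is satisfied (connectivity and compression are intentionally sacrificed, consistent with the theorem statement).

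For the lower bound $g(\rho) = \Omega(\rho/\sqrt{\log R})$, fix $(u,v)$ with $\rho_{uv} \le 1/6$ (the other range is trivial, since KPR alone already separates with constant probability). By \cref{lem:KPR}, KPR places $u,v$ in the same $C_i$ with probability at least $1/2$. Conditional on this, the non-contractive property of Rao's embedding gives $\delta := \|h_i(u) - h_i(v)\|_2 \ge d_{C_i}(u,v) \ge d(u,v)$. The projection $X := \langle h_i(u) - h_i(v), g\rangle$ is then distributed as $N(0, \delta^2)$, and cutting the projected line into length-$L$ intervals with a uniform random offset separates $(u,v)$ with probability $\mathbb{E}[\min(|X|/L, 1)] \ge c_0 \min(\delta/L, 1) \ge c_0 \cdot d(u,v)/L = \Omega(\rho_{uv}/\sqrt{\log R})$ by the standard anti-concentration of the folded Gaussian.

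For the upper bound $f(\rho) = O(\rho)$, I would union-bound the KPR-separation probability (at most $3\rho_{uv}$ by \cref{lem:KPR}) with the probability that $(u,v)$ land in the same $C_i$ but are split by the projection, bounded by $\mathbb{E}[|X|]/L \le \alpha \cdot d_{C_i}(u,v)/L$. The main obstacle is that intra-cluster Rao controls only $d_{C_i}$, which can be as large as $O(R)$ even when $d(u,v)$ is small; naively this yields only an $O(1)$ bound on projection separation rather than the required $O(\rho_{uv})$. I plan to address this by strengthening Step (i) to a padded decomposition for planar graphs with exponential padding, i.e.\ $\Pr[B(u, r) \not\subseteq C(u)] = O(r/R)$: with probability $1 - O(\rho_{uv})$ the ball $B(u, d(u,v))$ lies entirely inside $C_i$, forcing $d_{C_i}(u,v) = d(u,v)$ and hence projection separation $O(\alpha \cdot d(u,v)/L) = O(\rho_{uv})$; the complementary event already has probability $O(\rho_{uv})$ and is absorbed by the union bound. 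The remaining ingredients --- the $O(R)$ diameter, the lower bound via Gaussian anti-concentration, and the non-contractive property of Rao --- are routine consequences of \cref{lem:KPR,lem:rao} and elementary properties of 1D Gaussians.
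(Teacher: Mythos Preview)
Your proposal is not a proof of \cref{lem:rao} itself --- that lemma is quoted from \cite{Rao-L1} and the paper does not re-prove it --- but rather a proof of \cref{thm:equiv0}, which the paper establishes in \cref{sec:equiv0} using \cref{lem:rao} as a black box. Read that way, your construction (KPR, then a per-cluster Rao embedding, then a random Gaussian projection chopped into length-$\Theta(R\sqrt{\log R})$ intervals with a uniform random offset) is exactly the paper's \cref{alg:embed}, and your lower-bound analysis via Gaussian anti-concentration matches the paper's one-line argument.

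The one place you go beyond the paper is the upper bound: you correctly flag that Rao's guarantee on a cluster $C$ controls $d_C(u,v)$, not $d(u,v)$, a point the paper's justification elides. Your padded-decomposition fix works, but you do not need to strengthen Step~(i). The KPR bound in \cref{lem:KPR} is itself proved by bounding the probability that some edge on a fixed shortest $u$--$v$ path is cut; when no such edge is cut the entire path lies in $C$, forcing $d_C(u,v)=d(u,v)$. Hence already $\Pr[d_C(u,v)>d(u,v)]\le 3\rho_{uv}$, and your union bound goes through with plain KPR in Step~(i).
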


\begin{algorithm}[htbp]
	\caption{LDD via Euclidean Embedding}\label{alg:embed}
	\hspace*{2pt} \textbf{Input}: Integer $R$, planar graph $G=(V,E)$
	\begin{algorithmic}[1]
            \STATE \label{step-kpr1} Run \cref{alg:kpr} to obtain clusters of diameter $O(R)$.
		\FOR{every connected component $C$ found}
                \STATE Embed $C$ into a Euclidean metric via the algorithm 
                in~\cite{Rao-L1}. Let $h(v)$ be the embedding of $v \in V$ and let 
                the dimension be $\gamma$.
                \STATE Let $\vec{x}$ be a $\gamma$-dimensional vector whose coordinates 
                are {\em i.i.d.} $\mathcal{N}(0,1)$.
                \STATE Let $q(v) = h(v) \cdot \vec{x}$.
                \STATE Choose $\theta \in [0,R \cdot \sqrt{\log R}]$ uniformly at random.
                \STATE \label{step7} For $\ell \in \mathbb{Z}$, place all vertices $v$ s.t. 
                $q(v) \in [\theta + \ell \cdot R\sqrt{\log R}, \theta + (\ell+1) R\sqrt{\log R})$ in sub-partition 
                $C(\ell)$.
                \STATE Output the set $\{C(\ell)\}$ of sub-partitions.
            \ENDFOR
	\end{algorithmic}
\end{algorithm}

The algorithm for showing \cref{thm:equiv0} is shown in \cref{alg:embed}.
Let $d'(u,v) = \norm{h(u) - h(v)}_2$. Note that $\E[\abs{q(v)-q(u)}] = d'(u,v)$, so that 
$$ \Pr[(u,v) \mbox{ are separated in \cref{step7} }] = \frac{d'(u,v)}{R \sqrt{\log R}}.$$
Observe that $(u,v)$ are separated in \cref{step-kpr1} with probability $O(\rho_{uv})$ or  by using \cref{lem:rao} and observing that $\Delta = O(R)$, in \cref{step7} with probability $O(\rho_{uv})$. This shows $f(\rho) = O(\rho)$. The lower bound follows by observing that $(u,v)$ are separated in \cref{step7} with probability at least $\rho/\sqrt{\log R}$ by the above equation.
The algorithm does not satisfy (CON) since any two
vertices have a non-zero probability of being in the same cluster because
of the unbounded support of the normal
distribution.
The algorithm does not satisfy (COMP) either since, for any graph, there is
a non-zero probability of separating
the vertices in $n$ clusters.
The proof of \cref{thm:equiv0} now follows.

\paragraph{General Metrics.} The above algorithm also extends to general metric spaces if we replace \cref{step-kpr1} with the low diameter decomposition procedure in \cite{fakcharoenphol04tree-embedding,miller13ldd-expo}. This guarantees probability of separation $f(\rho) = O(\rho \cdot \log n)$. Now we embed the metric into Euclidean space suffering distortion $O(\log n)$ \cite{Bourgain}, and finally choose $\theta \in [0,R]$ at random before performing \cref{step-kpr1}. This guarantees $g(\rho) = \rho$ and $f(\rho) = O(\rho \cdot \log n)$. 

\section{Empirical Results on Precinct Maps}
\label{sec:implement1}

In this section, we run \cref{alg:kpr-randwts} on precinct data obtained from the MGGG Redistricting Lab \cite{mggg-states}. This is naturally modeled as a planar graph  where the vertices correspond to precincts of a state, and edges denote adjacency of precincts. We ran experiments on the states of North Carolina (NC), Pennsylvania (PA), and Maryland\footnote{The MD graph from \citep{mggg-states} is disconnected. It was modified to connect precincts that shared a water border.} (MD). The sizes of the graphs are listed in \cref{tab:graph}.

\begin{table}[htbp]
\centering
\begin{tabular}{@{}cccc@{}}
\toprule
State          & Nodes & Edges & Diameter\\ \midrule
North Carolina & 2692                 &       7593   &68       \\ \midrule
Maryland       &    1809              &         4718     & 66   \\ 
\midrule
Pennsylvania   &   9255               &        25721 & 89         \\ \bottomrule
\end{tabular}
\caption{Sizes of the graphs from \citet{mggg-states}.}
\label{tab:graph}
\end{table}

\begin{figure*}[tb!]
\centering
\begin{subfigure}[t]{0.3\textwidth}
\includegraphics[width=\textwidth]{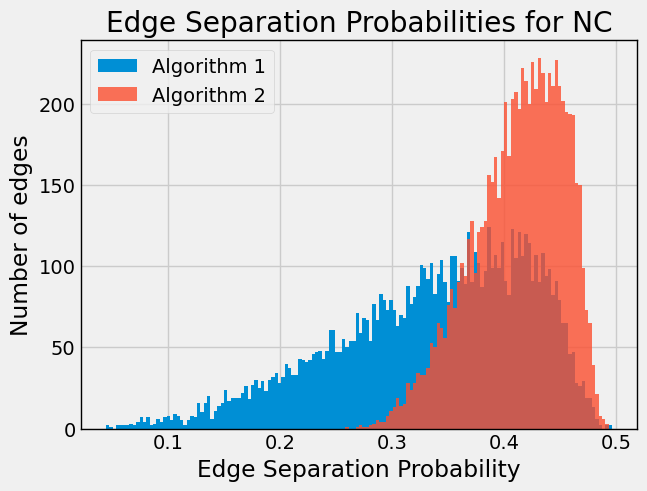}
\caption{\label{fig:gaussian2}}
\end{subfigure}
\begin{subfigure}[t]{0.3\textwidth}
\includegraphics[width=\textwidth]{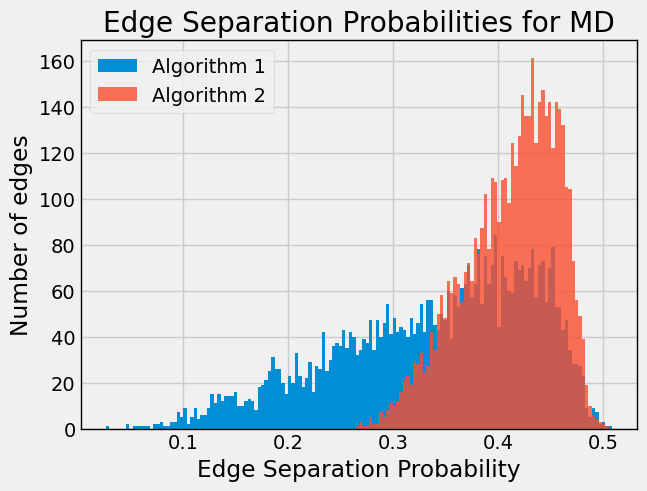}
\caption{\label{fig:gaussianMD} }
\end{subfigure}
\begin{subfigure}[t]{0.3\textwidth}
\includegraphics[width=\textwidth]{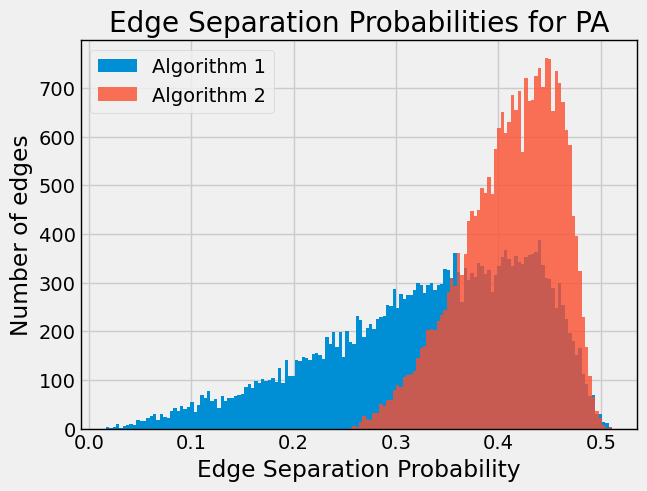}
\caption{\label{fig:gaussianPA}}
\end{subfigure}
\caption{\label{fig:gaussian} Histograms of edge separation probabilities across 3000 runs of \cref{alg:kpr-randwts} on NC, MD, and PA.
}
\end{figure*}

We implemented \cref{alg:kpr-randwts}  with parameter $R=5$
and compared it with \cref{alg:kpr} with the same $R$.
We present the results for individual fairness, as well as the number and diameter of the clusters below.
We omit a detailed runtime analysis of the algorithms as they exclusively entail Breadth-First Search  traversals.

We focus on \cref{alg:kpr-randwts} since the real-world graphs have relatively small distances, and \cref{alg:kpr-modified,alg:kpr-rand-radius} provide weaker guarantees in this regime.

\paragraph{Individual Fairness.}
For \cref{alg:kpr-randwts}, for any edge $(u,v) \in E$, \cref{thm:alg-randwts} yields $ \Pr[(u,v) \mbox{ separated}] \in \left(\frac{1}{2R},  \frac{3}{R} \right) = (0.1, 0.6)$, which bounds individual fairness  by a factor of $6$ for this algorithm.  Empirically, we find that \cref{alg:kpr-randwts} outperforms \cref{alg:kpr} significantly on this metric. We plot the histograms of the edge separation probabilities in \cref{fig:gaussian}, showing clearly that \cref{alg:kpr-randwts} is more individually fair than \cref{alg:kpr}.
Additionally, we obtain a stronger guarantee in the lower bound at virtually no cost to the upper bound.

In our experiments, we choose the roots for the BFS trees  uniformly at random. Our results show that even with such a strengthened version of  \cref{alg:kpr}, it is more unfair than \cref{alg:kpr-randwts}. This is in line with our result showing that randomness cannot help \cref{alg:kpr} (see \cref{app:additional-impossibility}).

\begin{figure}[htbp]
    \centering
    \includegraphics[width=0.6\linewidth]{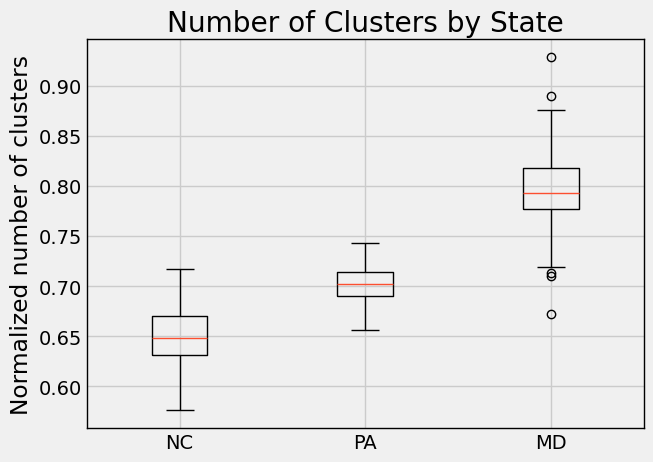}
\caption{Box plot of the number of clusters across 200 runs of \cref{alg:kpr-randwts} on NC, MD, and PA. The \textit{normalized} number of clusters refers to the number of clusters as a fraction of the baseline $\frac{n}{R}$.
}\label{fig:numclusters}
\end{figure}

\paragraph{Number and diameter of clusters.}
For \cref{alg:kpr-randwts}, we can adapt \cref{lem:number} to show that the expected number of clusters is at most $\frac{6n}{R}$. We plot the number of clusters for $200$ runs in  \Cref{fig:numclusters}, and note that the empirically observed value is consistently smaller than $n/R$.

 We next plot the maximum diameter of a cluster in each of the $200$ runs of \cref{alg:kpr-randwts} in \cref{fig:maxdiam}. Although the theoretical guarantee  on the diameter of the largest cluster is $43R = 215$, as stated in \Cref{lem:KPR}, the largest diameter of any cluster is empirically much smaller. 


\begin{figure}[htbp]
\centering
    \includegraphics[width=0.6\linewidth]{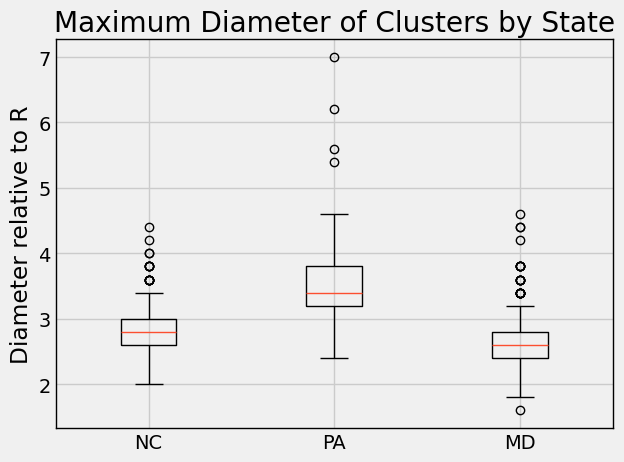}
\caption{Box plot of the maximum diameters of clusters for the states of NC, MD, and PA across 200 clusterings.
The diameters are shown as a fraction
of $R$.
\label{fig:maxdiam}}
\end{figure}

\section{Conclusion}
\label{sec:conclusion}
We briefly mention some extensions. Analogous to \citet{KPR}, our results naturally extend to graphs excluding a $K_{r,r}$ minor, with the bounds worsening as $r$ increases.  Similarly, for general metric spaces, we can plug existing low diameter decompositions into \cref{alg:kpr-rand-radius} or \cref{alg:embed}, losing an additional $O(\log n)$ factor in $f(\rho)$ compared to the bounds in \cref{tab:my-table}; (see \cref{sec:equiv0}). 
It would be interesting to define analogs for (CON) and (COMP) for general metrics and study their trade-offs with individual fairness, analogous to our results for planar graphs in \cref{sec:alg-randwts,sec:generallb1}. Exploring whether our bounds in \cref{tab:my-table} are improvable for planar graphs would also be an interesting direction. Our work makes edge separation probabilities proportional; defining some notion of Pareto-optimality in conjunction with this would be interesting.

Our techniques can be viewed as randomly compressing a graph while preserving fairness among compressed pairs. This can be used as a pre-processing step to solve more complex optimization problems on the entire graph more efficiently. As an example, in Congressional redistricting, we may want to partition the graph into $k$ parts each with an equal population, while preserving individual fairness. Though no worst-case fairness guarantees are possible for fairness in this setting, one can design efficient heuristics to optimize fairness for the compressed graph, and this is an avenue for future work. Similarly, exploring individual fairness for other optimization problems beyond graph clustering would be interesting.


\paragraph{Acknowledgment.} This work is supported by NSF grant CCF-2113798.

\bibliographystyle{plainnat}
\bibliography{references}

\newpage
\appendix

\section{Lower Bound for KPR: Proof of \cref{thm:kpr-negative}}
\label{sec:kpr-negative}

\begin{figure}[htbp]
\centering
\includegraphics[width=0.25\linewidth]{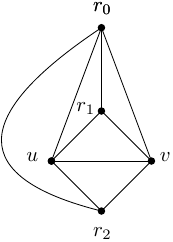}
\caption{\label{fig:counter-d1} Counterexample for $d=1$.}

\end{figure}

\subsection{Proof of \cref{thm:kpr-negative} for $d=1$}
We start with the case of $d=1$ in \cref{fig:counter-d1}.
	We present a constant-sized subgraph, but the arguments remain the same if we attach the graph to an arbitrary star graph with root $r_0$.
    We also consider adversarial choices of roots in \cref{alg:kpr}, but we show later that a similar argument works for random choices of roots.
    Let $r_0$ be the choice of the root. In all subsequent subgraphs, if $r_0$ is not present then $r_1$ is chosen as the root. If neither $r_0$ nor $r_1$ are present, $r_2$ is chosen as the root.
	
	Let $G_0$ be the initial graph and let $T_0$ be the BFS tree from $r_0$. The first observation is that $u,v,r_1,r_2$ belong to the same level of $T_0$.
    Thus, no cut can separate these 
	vertices.
	The only cut that modifies the graph cuts just below $r_0$.
	For the next phase, the component containing $u,v$ is a 4-cycle with a diagonal joining $u,v$.
	Let this graph be $G_1$. 
	
	In $G_1$, $r_1$ is chosen as the new root. The BFS tree $T_1$ from $r_1$
	consists of $r_1$ at level 0,
	$u,v$ at level 1, and $r_2$ at level 2.
	Observe that $u,v$ cannot be separated
	in $G_1$ because they are again at the same level.
	
	Subsequently, since $R>2$, we only have two possible cuts: 
	A cut that separates level 0 from levels 1 and 2, and a cut that separates level 2 from levels 0 and 1.
	In either case, the component
	containing $u,v$ is a 3-cycle with
	$u,v$, and one of $r_1,r_2$.
	Without loss of generality, we consider the latter case so that we get a graph $G_2$
	in which we choose $r_2$ as the
	final root.
	The BFS tree from $r_2$ in $G_2$
	again has $u,v$ at the same level. Hence, $u$ and $v$ are never separated.

\subsection{Proof of \cref{thm:kpr-negative} for $d > 1$}

	\begin{figure}[htbp]
		\centering
		\includegraphics{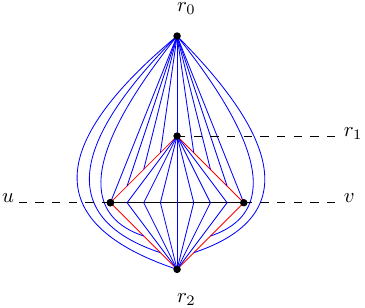}
		
	\caption{\label{fig:counter}Counterexample for $d\geq 2$. Each blue line is a path of length $d$.}
\end{figure}

We now prove the statement for general $d\geq 2$. We describe the construction  of the graph we use in \cref{fig:counter}.
	
	\begin{itemize}
		\item Each blue line in the graph is a path
		of length $d$.
		\item The straight line path between $u,v$
		that does not go through $r_1,r_2$ is of length $d$.
		Each vertex in this path is connected to $r_1$, $r_2$ via
		paths of length $d$.
		\item $u,v,r_1,r_2$ are all at distance $d$ from
		$r_0$. 
		\item The red paths between $u,v$ and $r_1,r_2$
		are paths of length $d$. Each vertex 
		in this path is connected to $r_0$
		via a path of length $d$. This is possible
		to do in a planar graph, as shown in the figure.	
	\end{itemize}
	The construction contains $\leq 6d^2+d\leq 9d^2$ vertices. We can arbitrarily attach a star graph with root $r_0$
	to increase the number of vertices.
	However, the number of vertices
	$n$ must satisfy $n\geq 9d^2$.

	Let $r_0$ be the choice of the root.
	In all subsequent subgraphs, if $r_0$ is not present then $r_1$ is chosen as the root. If neither $r_0$ nor $r_1$ are present, $r_2$
	is chosen as the root.
	
	Let $G_0$ be the initial graph and let $T_0$ be the BFS Tree from $r_0$.	
	Observe that by construction, all
	the vertices on the red paths
	at level $d$ in $T_0$. Thus, they must
	all be in the same subgraph as $u,v$ for the second phase.
	Let $G_1$ be the component containing
	$u,v$ in the second phase.
	We argue that $G_1$ is roughly of two `types', depending
	on where the cut is made in the
	first phase:
	\begin{description}
	    \item[Case 1.] The cut is made above level $d$. This ensures that
		all vertices at levels $\geq d$ 
		are in $G_1$.
		Let the resulting graph be $G_1^a$.
		Note that $r_0\not\in V(G_1^a)$.
		\item[Case 2.] The cut is made below
		level $d$.
		This ensures that all vertices
		at levels $\leq d$ are in $G_1$.
		Let the resulting graph be $G_1^b$.
	\end{description}
		
	Note that the $T_0$ has at most $2d$ levels since  all vertices of the graph are within $2d$ levels of $r_0$.
	Furthermore, no cut can remove a vertex both below level $d$ and above level $d$ from $G_1$ because $R>2d$.
	We now consider these two cases,
	ignoring the case where the graph
	is not modified because that can only
	decrease the probability that $u,v$
	are separated.
	The next two lemmas complete the proof of \cref{thm:kpr-negative}.

	

	\begin{figure}[htbp]
		\centering
		\begin{subfigure}[b]{0.49\linewidth}
			\centering
			\includegraphics{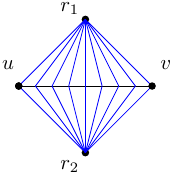}
			\caption{Graph $G_1^a$.}
			\label{fig:phase1a}
		\end{subfigure}
		\begin{subfigure}[b]{0.49\linewidth}
			\centering
			\includegraphics{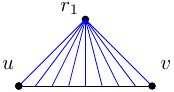}
			\caption{Graph $G_2^a$.}
			\label{fig:phase2a}
		\end{subfigure}
		\caption{Case 1 in the proof of \cref{thm:kpr-negative}.}
	\end{figure}

	\begin{lemma}
		$u$ and $v$ cannot be separated
		by the final two iterations ($i = 2,3$) of \cref{alg:kpr}
		run on $G_1^a$.
	\end{lemma}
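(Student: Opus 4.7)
The plan is to exploit a graph automorphism of $G_1^a$ that swaps $u \leftrightarrow v$ while fixing the roots $r_1, r_2$. By the construction in Figure~\ref{fig:counter}, the subgraph obtained by deleting all vertices strictly above level $d$ of the $r_0$-rooted BFS is invariant under the involution $\phi$ transposing $u$ with $v$, fixing $r_1, r_2$ and every midpoint of the straight-line $u$-$v$ path, and reflecting each length-$d$ ``red'' path from $u$ onto its twin from $v$. I would first check that Case~1 removes exactly the vertices that $\phi$ acts on trivially outside $G_1^a$ (namely $r_0$ and its spokes), so $\phi$ restricts to an automorphism of $G_1^a$; in particular $\{r_1, r_2\} \subseteq V(G_1^a)$.

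For phase 2 ($i=2$), the rooting rule picks $r_1$. Since $\phi$ fixes $r_1$, BFS distances from $r_1$ are $\phi$-invariant, so $d_1(r_1,u) = d_1(r_1,v) = d$ and $u, v$ sit on the same level. The decisive observation is that every midpoint $w$ of the direct $u$-$v$ path is also at distance exactly $d$ from $r_1$ (the length-$d$ connector guaranteed by the construction is realized, and any detour through $u$ or $v$ has length $d+1$), so $w$ lies on level $d$ too. Therefore each edge of the direct $u$-$v$ path is intra-level, and the level-cut rule in \cref{alg:kpr} (which only deletes edges between two consecutive BFS levels) cannot touch these edges. Consequently $u$ and $v$ remain connected, and I define $G_2^a$ to be the component containing them.

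For phase 3 ($i=3$), a single level cut can detach at most one of $r_1, r_2$ from $G_2^a$: in the $r_1$-rooted BFS, $r_1$ is on level $0$ and $r_2$ on level $2d$, so cutting at any single level $k$ with $0 \le k < 2d$ keeps exactly one of them on the same side as $u, v$ (on level $d$), and for $k \ge 2d$ no cut is realized inside the tree since $R > 2d$. The rooting rule then picks the surviving root $r \in \{r_1, r_2\}$. Since the phase-2 cut removes a $\phi$-invariant edge set, $\phi$ descends to an automorphism of $G_2^a$ that fixes $r$. Running the phase-2 argument with $r$ in place of $r_1$ — using either the construction-given length-$d$ paths from each $w$ to $r_1$, or, if $r = r_2$, the mirror length-$d$ paths from $w$ to $r_2$ — the midpoints of the $u$-$v$ path once again sit at level $d$ of the new BFS, every edge of the $u$-$v$ path is intra-level, and no level cut can separate $u$ from $v$.

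The main obstacle I anticipate is verifying that these auxiliary length-$d$ paths from each midpoint $w$ to the phase-3 root actually \emph{survive} the phase-2 cut, since otherwise the ``horizontality'' of the $u$-$v$ path in the $r$-rooted BFS could fail. This reduces to a brief case analysis on the cut level $k$: for $k < d$ the auxiliary routes to $r_2$ (which is the surviving root) lie entirely above the cut with $u,v$; for $k \ge d$ the auxiliary routes to $r_1$ (the surviving root here) lie entirely below the cut with $u,v$; the bound $R > 2d$ ensures no second interior level is cut in the same phase, so no route is fragmented by compound cuts.
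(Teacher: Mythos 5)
Your proof is correct and takes essentially the same route as the paper. The decisive fact in both is that every vertex on the direct $u$--$v$ path sits at BFS level $d$ from whichever of $r_1, r_2$ survives as the root, so no single level cut of \cref{alg:kpr} can separate $u$ from $v$ in either remaining phase; the paper packages the $u \leftrightarrow v$ symmetry as a verbal ``wlog'' rather than your explicit automorphism $\phi$, and handles survival of the auxiliary length-$d$ connectors with the same $R > 2d$ observation that forces at most one non-vacuous cut per phase.
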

	
\begin{proof}
		In this case, the relevant part of the graph is shown in 
	\cref{fig:phase1a}.
	Any vertex not shown is a vertex 
	above level $d$ in $G_0$,
	and does not lie in any path from $r_1$
	to any of the vertices in $G_1^a$.
	We ignore them for ease of exposition,
	but it can be verified that they do not affect our arguments, specifically because they do not affect the positions of $u$ and $v$ in the BFS trees or their connectivity.
	
	Now, $r_1$ is chosen as the root.
	Let $G_2^a$ be the component
	containing $u,v$ in the next phase.
	Like before, only two kinds of cuts can modify the graph.
	\begin{enumerate}
		\item If the cut is made above level $d$,
		$r_2$ lies in $G_2^a$ and $r_1$ is not in $G_2^a$.
		\item If the cut is made below level $d$,
		$r_1$ lies in $G_2^a$ and $r_2$
		is not in $G_2^a$.
	\end{enumerate}
	
	Because of symmetry, we can assume wlog that
	the cut is made below level $d$.
	Regardless of where the cut is, the vertices and edges in \cref{fig:phase2a} are preserved in $G_2^a$.	
	There may be some extra vertices that can safely be ignored again. We again choose
	$r_1$ as the root. Observe that every
	vertex in the straight-line path (colored red) from $u$ to $v$ is at distance $d$
	from $r_1$. Thus, no cut can separate the vertices in this path,
	and $u,v$ must necessarily remain connected after any cut.
\end{proof}

	\begin{figure}[htbp]
		\centering
		\begin{subfigure}[b]{0.49\linewidth}
			\centering
			\includegraphics{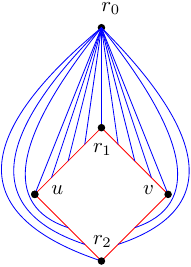}
			\caption{Graph $G_1^b$.}
			\label{fig:phase1b}
		\end{subfigure}
		\begin{subfigure}[b]{0.49\linewidth}
			\centering
			\includegraphics{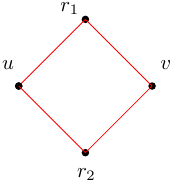}
			\caption{Graph $G_2^b$.}
			\label{fig:phase2b}
		\end{subfigure}
		\caption{Case 2  in the proof of \cref{thm:kpr-negative}.}
	\end{figure}
	
	\begin{lemma}
	$u$ and $v$ cannot be separated
		by the final two iterations ($i = 2,3$) of \cref{alg:kpr}
		run on $G_2^a$.
	\end{lemma}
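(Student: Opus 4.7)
The plan is to show that, starting from the subgraph $G_2^a$ constructed in the proof of the previous lemma, neither of the remaining cuts performed by \cref{alg:kpr} (the iterations indexed by $i=2,3$ as worded in the statement) can place $u$ and $v$ into distinct clusters. I will lean on the structural fact already recorded in the previous lemma's proof: after the adversarial cut in Case 1, the subgraph $G_2^a$ depicted in \cref{fig:phase2a} retains the straight-line red path joining $u$ and $v$, and every vertex along this path lies at graph-distance exactly $d$ from $r_1$ in $G_2^a$.

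First, I would invoke the root-selection convention stated in the construction: since $r_0 \notin V(G_2^a)$ but $r_1 \in V(G_2^a)$, \cref{alg:kpr} BFS-roots at $r_1$ whenever it processes the component of $G_2^a$ containing $u$ and $v$. This places every vertex of the red path (including $u$ and $v$ themselves) at level exactly $d$ in the BFS tree. Next, I would appeal to the paper's formal definition of a "level-$\ell$ edge": an edge is at level $\ell$ only when its two endpoints sit at levels $\ell$ and $\ell+1$. Since the red path's edges join vertices that are both at level $d$, these edges are at no level at all, and therefore the KPR cut-step (which removes only level-edges congruent to the sampled residue class $k \bmod R$) cannot delete any of them, regardless of the random draw $k \in \{0,1,\ldots,R-1\}$. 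Consequently, the whole red path survives iteration $i=2$ intact, so $u$ and $v$ end this iteration in the same component.

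For iteration $i=3$, I would argue identically. Let $G_3^a$ be the component containing $u$ and $v$ after iteration $i=2$; by the preceding argument $G_3^a$ still contains the full red path. By the root-selection convention, BFS in $G_3^a$ roots either at $r_1$ if $r_1$ survived (by symmetry with the previous step), or at $r_2$ otherwise; in the former case the same distance-$d$ property applies verbatim, and in the latter case the symmetric construction in \cref{fig:counter} (where $r_2$ plays the role dual to $r_1$) again forces every red-path vertex to a common BFS level. Thus in either case the red-path edges remain horizontal (same-level) and are never cut, so $u$ and $v$ must once more end up together.

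The main obstacle I foresee is a bookkeeping one rather than a conceptual one: I need to verify that cuts made in earlier iterations do not shorten the $r_1$-to-path distance by accidentally preserving shortcut edges through levels above $d$. Since $R > 2d$ and the first-phase cut lies either strictly above level $d$ or strictly below level $d$, at most one side of the path-attachment pattern is affected per iteration, and the figure-preservation statement in the previous lemma's proof ("the vertices and edges in \cref{fig:phase2a} are preserved in $G_2^a$") already rules out such shortcuts by direct inspection. I would therefore close the proof by pointing back to that figure and invoking it explicitly, so that the distance-$d$ claim used in both paragraphs above needs no re-derivation.
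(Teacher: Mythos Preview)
There is a genuine mismatch here, caused by a typo in the lemma statement: the paper's own proof of this lemma is about Case~2 of the construction, i.e.\ it starts from $G_1^b$ (the component obtained when the phase-1 cut is made \emph{below} level $d$) and analyzes the graph $G_2^b$ of \cref{fig:phase2b}. You took the printed ``$G_2^a$'' at face value and essentially reproduced the phase-3 portion of Case~1 --- the ``every red-path vertex sits at level exactly $d$ from $r_1$'' observation --- which is already the content of the \emph{previous} lemma's proof. Consequently your argument never touches what this lemma is actually meant to establish.

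The paper's argument for Case~2 uses a different mechanism from the same-level trick. In $G_2^b$ the root for phase~3 is $r_1$, and the key structural fact is that there are \emph{two disjoint} $u$--$v$ paths: one lying entirely at levels $\le d$ in the BFS tree from $r_1$ (the path through $r_1$), and one lying entirely at levels $\ge d$ (the path through $r_2$). Because $R>2d$, a single KPR cut can remove vertices on only one side of level $d$, so at least one of the two paths survives and $u,v$ stay connected. Your same-level argument does not transfer to this case: in the Case-2 graphs there is no $u$--$v$ path all of whose vertices share a common BFS level, so the horizontal-edge reasoning simply does not apply. The missing idea is precisely this two-path redundancy across level $d$.
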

	
\begin{proof}
		The relevant part of the graph for phase 2 is of the form in
	\cref{fig:phase1b}.
	There are several paths from $r_1$ or $r_2$ to $u$ or $v$ that may or may not be present in $G_1^b$ but are not pictured.
	For our arguments, we only need to observe that the \emph{shortest} paths from $r_1$ and $r_2$ to $u$
	and $v$ are preserved in $G_1^b$.

	In $G_1^b$, making cuts below level $d$
	does not affect any of the relevant $u,v$ paths.
	Thus, we assume that the
	cut is made above level $d$, giving us the graph $G_2^b$.
	The relevant vertices are pictured in
	\cref{fig:phase2b}.
	Like before, we can restrict ourselves
	to this graph because the remaining
	edges do not affect the BFS trees that
	we construct.
	For the next phase,
	we choose $r_1$ as the root.
	The key observation is that
	there is a path from $u$ to $v$
	entirely above level $d$ in this tree through $r_1$, and there is a disjoint
	path from $u$ to $v$ entirely below
	level $d$ through $r_2$.
	We now reuse the
	observation that any cut
	can either remove vertices
	below level $d$ or vertices
	above level $d$, but not both.
	Thus, $u$ and $v$ remain connected
	regardless of the cut.
\end{proof}

\subsection{Additional Impossibility Results}
\label{app:additional-impossibility}

\paragraph{\cref{alg:kpr} with additional iteration.}
The proof of \cref{thm:kpr-negative}
describes a pair $u,v$ that remain
connected after 3 phases of KPR.
One might wonder if it is
possible to do better with additional
phases. Indeed, it turns out that
we can obtain a non-zero probability
of separation. This forms
the basis of \cref{alg:kpr-modified}.
However, we also observe that simple
modifications to \cref{alg:kpr} can only give a very
weak lower bound on the probability
of separation.

Let KPR$_+$ denote \cref{alg:kpr} with one additional phase.
That is, we add one more iteration
to step~\ref{alg1-step:phase}.

\begin{corollary}
For any $n\in \mathbb{N}$ and $1\leq d\leq \frac{\sqrt{n}}{3}$, there exists a planar graph $G_d$ and vertices $u,v\in V(G)$ with $d(u,v)=d$ such that KPR$_+$ with adversarial choices of roots in each component and parameter $R>2d$, separates $u,v$ with probability at most $8\rho_{uv}^4$.
\end{corollary}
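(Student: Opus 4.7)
My plan is to extend the construction from the proof of Theorem \ref{thm:kpr-negative} to produce a planar graph $G_d$ that forces KPR$_+$ (four phases of KPR) to separate $u,v$ only with very low probability. The extension adds one more root $r_3$, together with additional paths of length $d$ to the key vertices of the original three-phase counterexample, so that the graph provides an extra layer of redundancy to absorb the fourth phase. I expect planarity to follow from an embedding argument analogous to the one in Figure~\ref{fig:counter}, and the total vertex count should remain $O(d^2)$, consistent with the hypothesis $d \leq \sqrt{n}/3$.

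The analytic idea is as follows. In each phase of KPR$_+$, the cut level is drawn uniformly at random from $\{0, 1, \ldots, R-1\}$. In the enhanced construction, only cuts at levels within a window of width $O(d)$ around the critical level of $u, v$ in the current BFS tree can potentially disrupt the relevant subgraph; every other cut either leaves the relevant subgraph untouched or removes only redundant structure, preserving a path from $u$ to $v$. Consequently, the probability of a ``harmful'' cut in any single phase is at most $O(\rho_{uv})$.

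The key structural claim is that $u$ and $v$ can be separated only if a harmful cut occurs in each of the four phases; any single ``harmless'' cut leaves enough redundancy in the graph to guarantee $u,v$ remain connected through the remaining phases, in exact parallel with the two-case ($G_1^a$ vs.\ $G_1^b$) analysis of Theorem~\ref{thm:kpr-negative}. Since cut choices are independent across phases, the joint probability of harmful cuts in all four phases is at most $O(\rho_{uv}^4)$. The constant $8$ comes from a careful accounting: the first phase admits a single harmful window (contributing a factor of $\rho_{uv}$), while each of the remaining three phases admits two harmful windows (``above'' or ``below'' the critical level, contributing a factor of $(2\rho_{uv})^3$), yielding $8\rho_{uv}^4$.

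The main obstacle will be the case analysis tracking the intermediate graph structure across the $2^3 = 8$ possible sequences of harmful-cut directions in the first three phases. For each such sequence I need to confirm (i) that the intermediate subgraph after any harmless cuts still has $u,v$ joined by multiple short paths with the appropriate ``sandwich'' structure, and (ii) that separating $u,v$ in the final phase truly requires one of the two harmful-cut directions. I expect this analysis to closely mirror the $G_1^a, G_1^b, G_2^a, G_2^b$ enumeration in the proof of Theorem~\ref{thm:kpr-negative}, but with one additional layer of recursion to handle the fourth phase.
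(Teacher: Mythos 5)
There is a gap: your proof hinges on a new, augmented construction (an extra root $r_3$ plus additional length-$d$ paths) that you never actually specify or verify. You only ``expect'' planarity to hold and the $2^3$-case analysis to go through; neither is checked, and for a counterexample-style result the construction \emph{is} the proof. The paper's argument shows no new graph is needed: it reuses the exact graph $G_d$ from \cref{thm:kpr-negative} and observes that if, in any of the first three phases, the cut lands beyond the first $2d$ levels of the current BFS tree, then the relevant subgraph is untouched and the remaining three phases are precisely an instance of \cref{thm:kpr-negative}, so $u,v$ are never separated. Hence separation requires all three of the first cuts to land in a window of width $2d$, which has probability $(2d/R)^3 = 8\rho_{uv}^3$, and in the fourth phase $u$ and $v$ are joined by a path of length $d$ (in both $G_2^a$ and $G_2^b$), so they are at most $d$ levels apart and separated with probability at most $\rho_{uv}$, giving $8\rho_{uv}^4$.

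Your per-phase accounting is also off even at the level of the intended decomposition: you assign a window of width $d$ to the first phase and width $2d$ to each of the last three, whereas the correct split is width $2d$ for each of the \emph{first three} phases (in the first BFS tree $u,v$ sit at the same level and cannot be cut apart at all; the ``harmful'' event is only that the cut modifies the graph, which requires landing among the first $2d$ levels) and width $d$ for the \emph{last} phase (the standard KPR single-phase bound once $u,v$ are at most $d$ levels apart). The product coincidentally matches $8\rho_{uv}^4$, but the justification for each factor does not hold as you stated it. Finally, note that if your augmented graph really did ``absorb the fourth phase'' via redundancy, the natural conclusion would be separation probability zero after four phases, not $8\rho_{uv}^4$; the design goal of your construction is inconsistent with the bound you are trying to prove.
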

\begin{proof}
By running through the proof of \cref{thm:kpr-negative} again,
we can see that if, in any of the phases, the cut is made beyond the first $2d$ levels of the BFS Tree, then $u,v$ will not be separated even in 4 phases. For example, if in the first phase, the cut is made beyond level $2d$ of the BFS Tree, then $u,v$ cannot be separated in the remaining three phases because of \cref{thm:kpr-negative}. Similarly, if this happens in the second or third phases, $u,v$ will never be separated in the remaining phases.
	
Thus, we assume that for the first three phases, the cut is within the first $2d$ levels of the BFS tree. This happens with probability $\left(\frac{2d}{R}\right)^{3}=8\rho_{uv}^3$. The final observation is that $u,v$ are connected by a path of length $d$ in the fourth phase regardless of whether the graph in the third phase is $G_2^a$ or $G_2^b$. Thus, they are at most $d$ levels apart in any BFS tree and can be separated by probability at most $\rho_{uv}$. This gives us an overall probability of separation of at most $8\rho_{uv}^4$.
\end{proof}

\paragraph{Random choices of roots.}
The original algorithm of~\cite{KPR} chooses the nodes arbitrarily. A natural question is whether considering
adversarial	choices of roots is too  pessimistic and whether the `bad' case in \cref{thm:kpr-negative} can be avoided
by a simple scheme such as choosing the root uniformly at random. 

Unfortunately, randomization does not provide any strength beyond possibly providing a $\frac{1}{\poly(n)}$ lower bound. Recall that in the proof of \cref{thm:kpr-negative}, all the adversary needs is to enforce a preference ordering between the roots:  If $r_0$ is present in a component, it must be chosen as the root. If $r_0$ is not present but $r_1$ is, then it must be chosen. Finally, if neither $r_0$ nor $r_1$ is present, $r_2$ should be chosen as the root. We argue that such a preference order can be enforced even in the presence
of uniform sampling of roots.

\begin{corollary}\label{remark:uniform-vs-adversary}
For any $n\in \mathbb{N}$ and $1\leq d\leq n^{\frac{1}{8}}$, there exists a planar graph $G_d$ and vertices $u,v\in V(G)$ with $d(u,v)=d$ such that \cref{alg:kpr} with uniformly random choices of roots in each component and parameter $R > 2d$ separates $u,v$ with probability only $O\left({n^{-\frac{1}{4}}}\right)$.	
\end{corollary}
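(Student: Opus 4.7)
The plan is to augment the adversarial graph $G_d$ from the proof of \cref{thm:kpr-negative} so that a uniformly random root choice imitates the adversarial preference with high probability. Concretely, attach three pendant ``blobs'' of leaves $B_0, B_1, B_2$ to the vertices $r_0, r_1, r_2$, of sizes $|B_0| = n - \Theta(n^{3/4})$, $|B_1| = \Theta(n^{3/4})$, and $|B_2| = \Theta(n^{1/2})$. Because leaves are pendants, this preserves planarity, does not create any new path between non-leaf vertices (so $d(u,v)$ stays equal to $d$), and leaves the structural part of size $S = O(d^2) \le O(n^{1/4})$ unchanged. By the symmetry of $G_d$ with respect to $u$ and $v$, every leaf attached to $r_i$ inherits equidistance from $u$ and $v$.

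I would then define the good event $\mathcal{G}$ that in each of the three phases, the uniformly random root of the component containing $u, v$ lies in the ``equidistant'' set $\{r_0, r_1, r_2\} \cup B_0 \cup B_1 \cup B_2$, i.e.\ is not a structural vertex of $G_d$. Under $\mathcal{G}$, the case analysis of the proof of \cref{thm:kpr-negative} applies with only cosmetic changes: BFS from any equidistant root places $u, v$ at the same level, and using a leaf of $r_i$ in place of $r_i$ only shifts the tree down by one level without creating any new path between $u$ and $v$ (leaves cannot participate in the alternate paths depicted in \cref{fig:phase2a,fig:phase2b}). Hence, on $\mathcal{G}$, the pair $u, v$ stays connected, and so $\Pr[u, v \text{ separated}] \le \Pr[\mathcal{G}^c]$.

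To bound $\Pr[\mathcal{G}^c]$, I would argue by induction on the phase that the component containing $u, v$ always retains at least one full blob, and that its size is at least $|B_2| = \Theta(n^{1/2})$. Each blob sits at a single BFS level relative to the current root, so a single cut per phase detaches at most one blob from $u, v$'s side; the geometric decay $|B_0| \gg |B_1| \gg |B_2| \gg S$ then guarantees that after any sequence of at most three cuts, a blob of size $\Omega(n^{1/2})$ remains in $u, v$'s component. Thus the probability in any given phase that the uniformly random root is one of the at most $S = O(n^{1/4})$ structural vertices is at most $S / n^{1/2} = O(n^{-1/4})$, and a union bound over the three phases yields $\Pr[\mathcal{G}^c] = O(n^{-1/4})$, completing the proof.

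The main obstacle is the bookkeeping that establishes blob preservation: one must enumerate the possible cut locations $k \bmod R$ in each phase's BFS tree and verify that at least one of $B_0, B_1, B_2$ ends up on the same side of the cut as $u, v$. This amounts to extending the Case~1/Case~2 dichotomy of \cref{thm:kpr-negative} to the three possible root locations in successive phases and checking that, in each branch, the smallest surviving blob is at least $\Omega(n^{1/2})$, dominating the $O(n^{1/4})$ structural vertices by the necessary factor of $n^{1/4}$.
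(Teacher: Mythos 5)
Your construction — attaching leaf blobs of sizes $n-\Theta(n^{3/4})$, $\Theta(n^{3/4})$, $\Theta(n^{1/2})$ to $r_0, r_1, r_2$ respectively — is essentially the paper's: it takes the $m$-vertex gadget from \cref{thm:kpr-negative}, adds star graphs of sizes $m^4, m^3, m^2$ at $r_0, r_1, r_2$, and sets $n = \Theta(m^4)$, giving the same $d \le n^{1/8}$ constraint and the same $O(1/m) = O(n^{-1/4})$ scaling. So the high-level approach is the same.

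The gap is in your definition of the good event $\mathcal{G}$ and the claim that under $\mathcal{G}$ ``the case analysis of the proof of \cref{thm:kpr-negative} applies with only cosmetic changes.'' You require only that each chosen root is equidistant from $u,v$ (i.e.\ lies in $\{r_0,r_1,r_2\}\cup B_0\cup B_1\cup B_2$). This guarantees that $u$ and $v$ sit at the same level in each BFS tree and thus are not separated within that phase, but the case analysis of \cref{thm:kpr-negative} needs more: it traces the shape of the surviving component ($G_1^a$, $G_1^b$, $G_2^a$, $G_2^b$) under the specific assumption that the root is $r_0$ whenever $r_0$ is present, else $r_1$, else $r_2$. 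If in phase~1 you happen to pick a leaf of $r_2$ while $r_0$ is still available, the BFS levels and the slice that survives look nothing like $G_1^a$ or $G_1^b$, and the theorem's Case~1/Case~2 dichotomy does not transfer. The paper instead defines the good event as exactly this preference order being respected (not merely ``an equidistant root''), precisely because the theorem's inductive structure is anchored to that order. Your ancillary claim that ``a single cut per phase detaches at most one blob'' is also incorrect: $B_1$ and $B_2$ sit at the same BFS level from $r_0$, so a cut at level $d$ removes both at once. The blob-survival invariant you invoke therefore needs a case-by-case verification that mirrors the theorem's proof — which is effectively what requiring the preference order buys you for free. To repair the proof, tighten $\mathcal{G}$ to ``the root is $r_i$ or a leaf of $r_i$ where $r_i$ is the highest-priority special vertex present in the component,'' and then the phase-by-phase probability of a wrong choice is bounded by the ratio of (lower-priority blob sizes plus $m$ structural vertices) to the dominant blob size, giving $\le 1/m$ per phase by the geometric separation.
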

\begin{proof}
We take the construction in \cref{thm:kpr-negative}. Suppose that it has $m$ vertices. We add star graphs with $m^4$, $m^3$, and $m^2$ nodes with roots $r_0,r_1,r_2$ respectively. Let the total number of nodes now be $n=O(m^4)$. Note that choosing a leaf attached to $r_i$ is equivalent to choosing $r_i$ for the purposes of the proof since the BFS tree from that node is similar to that of the BFS tree from $r_i$.
		
We now follow the proof of \cref{thm:kpr-negative} and argue that the roots are chosen as required with high probability.
In the first phase, the probability that $r_0$ or a leaf attached to it  is not chosen is $\frac{m^3+m^2+m}{m^4+m^3+m^2+m}\leq \frac{1}{m}$. In the second phase, if the graph was not modified in the first phase, the same calculation above applies. Similarly, in $G_1^b$, the probability that $r_0$ is not chosen is again at most $\frac{1}{m}$. In $G_1^a$, the probability that $r_1$ is not chosen is
at most $\frac{m^2+m}{m^3+m^2}\leq \frac{1}{m}$. A similar calculation can be done for both graphs $G_2^a$ and $G_2^b$ in the third phase. By union bounds, the proof is complete.
\end{proof}


\section{Completing Proof of \cref{thm:kpr-2cuts}: Proof of \Cref{lem:kpr-2cuts-main}}
\label{app:twocuts}

At a high level, our proof outline is similar to that for showing the bounded diameter of clusters in \cref{alg:kpr} in \cite{goemans-lecturenotes},
where the overall goal is to show
a proof by contradiction by exhibiting
a $K_{3,3}$ minor.
We also borrow some of their notation,
though the meaning of the notation is slightly different in our context, as are the details of the construction and analysis. 
For instance, our supernodes that
form the $K_{3,3}$ minor are
easily defined with respect to a few special vertices, 
as opposed to those in \cite{KPR}, which involve many
auxiliary nodes.
We also only need to look at two BFS trees instead of three, as in \cite{KPR}, although we make more cuts.

Another important distinction is that
our analysis inherently requires
the randomness of \cref{alg:kpr-modified}.
For \cref{alg:kpr}, the proof
that the clusters have a small diameter
only hinges on cutting the BFS
tree into slices of width $R$.
In our case, the trees we analyze
possess the necessary properties
only if we condition on the events $E_1,E_2,L_1$ and $L_2$.

\subsection{Special Vertices} 
We first show some simple properties of the graphs $G_1, G_2, G_3$. The first claim follows from the definition of the BFS tree $T_1$, and the occurrence of event $E_1\cap E_2\cap L_1\cap L_2$.

\begin{figure}
    \centering\includegraphics{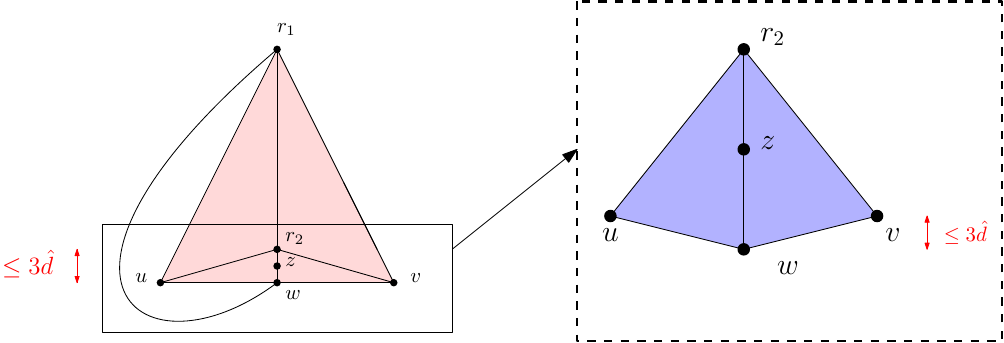}
\caption{ \label{fig:twocuts2}The relative positions of the special vertices in $T_1$ (red) and $T_2$ (blue inset). All of $u,v,w,z,r_2$ are within
$3\dhat$ levels of each other in $T_1$ by virtue of them being in $G_2$. In $T_2$ ,
$u,v,w$ are within $3\dhat$ levels of each other by virtue of them being in $G_3$.
}
   
\end{figure}
 
\begin{claim}\label{claim:levels}
For any $x\in V(G_2),	\abs{\ell_1(x)-\ell_1(u)}$ and $\abs{\ell_1(x)-\ell_1(v)}$ are both at most $2\dhat$.
\end{claim}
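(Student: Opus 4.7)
The plan is to show that $V(G_2)$ is confined to a narrow horizontal slab of $T_1$, sandwiched between the two cuts made in phase $i=1$, and then translate this slab inclusion into the claimed level bound. I would carry this out in three steps.

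First, I would unpack the events. Under $L_1$, writing $\mu = \min(\ell_1(u), \ell_1(v))$ and $M = \max(\ell_1(u), \ell_1(v))$, we have $M - \mu \leq \dhat - 1$. Under $E_1$, phase $i=1$ makes a cut at a tree level $\ell_- \equiv k_1 \bmod R$ with $\ell_- \in [\mu - \dhat,\ \mu]$ and a cut at a level $\ell_+ \equiv k_1 + k_2 \bmod R$ with $\ell_+ \in [M,\ M + \dhat]$. The computation of $\Pr[E_1]$ in \cref{sec:generallb1} effectively restricts $k_1$ so that $M - k_1 \in [\dhat, 2\dhat]$, hence $\ell_- = k_1 \leq M - \dhat \leq \mu - 1$, placing $\ell_-$ strictly below both $u$ and $v$.

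Second, I would invoke the BFS-tree property that every edge of $G_1$ spans equal or consecutive $T_1$-levels. Consequently, removing all $G_1$-edges whose endpoints lie at levels $\ell$ and $\ell+1$ disconnects the vertices at levels $\leq \ell$ from those at levels $\geq \ell+1$. Applying this to the cuts at $\ell_-$ and $\ell_+$ forces the component $G_2$ containing both $u$ and $v$ to satisfy $V(G_2) \subseteq \{x : \ell_1(x) \in [\ell_- + 1,\ \ell_+]\}$; any further cuts of phase $i=1$ lying strictly between $\ell_-$ and $\ell_+$ can only shrink $G_2$ and preserve this containment.

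Finally, I would combine with $L_1$. Every $x \in V(G_2)$ satisfies $\ell_1(x) \in [\mu - \dhat + 1,\ M + \dhat]$, so if $\ell_1(u) = \mu$ then $\ell_1(x) - \ell_1(u) \in [-\dhat,\ (M-\mu) + \dhat] \subseteq [-\dhat,\ 2\dhat - 1]$, while if $\ell_1(u) = M$ then $\ell_1(x) - \ell_1(u) \in [-(M-\mu) - \dhat + 1,\ \dhat] \subseteq [-(2\dhat - 2),\ \dhat]$; either way $|\ell_1(x) - \ell_1(u)| \leq 2\dhat$, and the symmetric estimate for $v$ is identical. The main subtlety I expect to highlight is the strict inequality $\ell_- < \mu$, since without it the lower cut could itself separate $u$ from $v$ when they sit on adjacent $T_1$-levels and the slab picture would collapse; this strictness is exactly what the $[\dhat, 2\dhat]$ range on $M - k_1$ in the $\Pr[E_1]$ calculation guarantees, so I would make that link explicit in the formal write-up.
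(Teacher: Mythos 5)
Your argument is correct and is exactly the reasoning the paper intends: the paper gives no explicit proof, merely asserting that the claim "follows from the definition of the BFS tree $T_1$ and the occurrence of $E_1\cap L_1$," and your slab argument (both cuts of phase $i=1$ confine the component $G_2$ containing $u,v$ to levels between $\min(\ell_1(u),\ell_1(v))-\dhat$ and $\max(\ell_1(u),\ell_1(v))+\dhat$, which together with $L_1$ gives the $2\dhat$ bound) is the standard way to fill in that assertion. Your observation about the strict inequality for the lower cut, guaranteed by the $[\dhat,2\dhat]$ restriction in the $\Pr[E_1]$ computation, is a worthwhile point to make explicit, since it is what ensures both $u$ and $v$ lie strictly inside the slab rather than on its boundary.
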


\newcommand{\distz}{9\dhat}

Similarly, the next claim follows since $L_2$ happens and $u,v$ are still connected in $G_3$.

\begin{claim}
\label{claim:levels-w}
For any $x\in V(G_3)$, $\abs{\ell_2(x)-\ell_2(u)}$ and $\abs{\ell_2(x)-\ell_2(v)}$ are at most $2\dhat$.
\end{claim}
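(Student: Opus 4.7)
The plan is to mirror the argument that establishes \cref{claim:levels}, replacing $(T_1, G_2)$ with $(T_2, G_3)$ and the events $E_1, L_1$ with $E_2, L_2$. Conditioned on event $E_2$, the cuts made in iteration $i=2$ of \cref{alg:kpr-modified} include one at a level $\ell^{-} \equiv k_1 \mod R$ satisfying $\min(\ell_2(u),\ell_2(v)) - \dhat \leq \ell^{-} \leq \min(\ell_2(u),\ell_2(v))$, together with one at a level $\ell^{+} \equiv k_1+k_2 \mod R$ satisfying $\max(\ell_2(u),\ell_2(v)) \leq \ell^{+} \leq \max(\ell_2(u),\ell_2(v))+\dhat$. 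These two cuts sandwich the slice of $T_2$ containing $u,v$, which under our conditioning coincides with $V(G_3)$.

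First I would show that every $x \in V(G_3)$ has $\ell_2(x) \in [\ell^{-} + 1,\, \ell^{+}]$: all edges of $T_2$ crossing either cut are deleted, so $x$'s component in the post-cut graph lives in this level range. Other periodic cuts from the same iteration (at $\ell^{-} \pm R$ or $\ell^{+} \pm R$) can only shrink this interval, and if any of them fell strictly between $\ell_2(u)$ and $\ell_2(v)$ they would already have separated $u$ from $v$, contradicting $\{u,v\} \subseteq V(G_3)$. Then, invoking $L_2$, which gives $\max(\ell_2(u),\ell_2(v)) - \min(\ell_2(u),\ell_2(v)) < \dhat$, a short case split on whether $\ell_2(u)$ is the minimum or the maximum of $\{\ell_2(u),\ell_2(v)\}$ bounds both $\ell_2(x)-\ell_2(u)$ and $\ell_2(u)-\ell_2(x)$ by $(\max-\min)+\dhat < 2\dhat$, and the same holds for $v$ by symmetry.

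The step I expect to be most delicate is the boundary accounting at the cut levels: if $\ell^{-} = \min(\ell_2(u),\ell_2(v))$ or $\ell^{+} = \max(\ell_2(u),\ell_2(v))$ while $\ell_2(u) \neq \ell_2(v)$, then that very cut separates $u$ from $v$. One therefore needs to use the conditioning $\{u,v\} \subseteq V(G_3)$ to push $\ell^{-}$ strictly below $\min(\ell_2(u),\ell_2(v))$ in the nontrivial case, which is what keeps the $2\dhat$ bound tight. Once this case analysis is dispatched, the rest is mechanical arithmetic that parallels the proof of \cref{claim:levels}.
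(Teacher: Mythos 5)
Your proof is correct and follows the same route the paper intends: the paper justifies this claim in one line as a consequence of the sandwich event $E_2$ together with $L_2$ and the connectivity of $u,v$ in $G_3$, and your write-up simply fills in those details (all of $V(G_3)$ lies in the slab between the two cuts of iteration $i=2$, which by $E_2$ sit within $\dhat$ of $\min(\ell_2(u),\ell_2(v))$ and $\max(\ell_2(u),\ell_2(v))$, and $L_2$ bounds $|\ell_2(u)-\ell_2(v)|<\dhat$). Your boundary discussion is a harmless extra precaution; the argument goes through.
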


Since the events $L_1$ and $L_2$ happen, and since $d_i(u,v)\geq d$,
and we defined $\dhat=\frac{d}{\const}$,
we have:

\begin{claim}\label{claim:level-at-least}
For $i\in \{1,2\}$, we have: $\ell_i(u)$ and $\ell_i(v)$ are at least $28\dhat$.
\end{claim}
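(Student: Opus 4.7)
The plan is to combine two elementary facts: (i) deleting edges cannot decrease graph distances, so whenever $(u,v)$ remain in a common component, $d_i(u,v) \ge d_G(u,v) = d$; this holds because $G_i$ is obtained from $G$ only by removing edges in the earlier phases of \cref{alg:kpr-modified}. And (ii) the triangle inequality in $G_i$ via the root $r_i$ gives $d_i(u,v) \le \ell_i(u) + \ell_i(v)$, since by definition $\ell_i(x) = d_i(r_i, x)$.

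Chaining these two inequalities yields $\ell_i(u) + \ell_i(v) \ge d = \const\,\dhat$. Event $L_i$ further provides the opposite-direction control $|\ell_i(u) - \ell_i(v)| < \dhat$. Assuming without loss of generality that $\ell_i(u) \le \ell_i(v)$, I would then compute
\[
2\,\ell_i(u) \;=\; \ell_i(u) + \ell_i(v) \;-\; |\ell_i(u) - \ell_i(v)| \;>\; \const\,\dhat - \dhat \;=\; (\const - 1)\,\dhat \;=\; 58\,\dhat,
\]
so that $\ell_i(u) > 29\,\dhat$, and by symmetry the same bound applies to $\ell_i(v)$. In particular both levels are at least $28\,\dhat$, as claimed.

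The main obstacle is essentially nonexistent: the claim is a one-line arithmetic consequence of conditioning on $L_i$ together with monotonicity of distances under edge deletion. The only subtlety worth flagging is that one must use $d_i(u,v)$, the distance in the current subgraph, rather than $d_G(u,v)$, in order to legitimately apply the triangle inequality through $r_i \in V(G_i)$; this is valid precisely because edge removals can only inflate distances, which is exactly why $d_i(u,v) \ge d_G(u,v) = d$ holds. No use of planarity, of the event $E_i$, or of the specific root-selection rule is needed for this claim; those enter only in the subsequent $K_{3,3}$-minor argument that constitutes \cref{lem:kpr-2cuts-main}.
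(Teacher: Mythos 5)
Your proof is correct and fills in exactly the arithmetic that the paper's one-line justification leaves implicit: $d_i(u,v)\ge d$ since edge deletion only increases distances, the triangle inequality through $r_i$ gives $\ell_i(u)+\ell_i(v)\ge d = \const\dhat$, and subtracting the $L_i$ bound $|\ell_i(u)-\ell_i(v)|<\dhat$ yields the claim; this is the same reasoning the paper intends.
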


We now find two special vertices, $w$ and $z$. Let $P$ be the shortest path between $u,v$ in $G_3$. Let $w\in P$ be such that $d_2(u,w)\geq \frac{d}{2}-1>28\dhat$ and $d_2(v,w)\geq \frac{d}{2}-1>28\dhat$.  By \cref{claim:levels-w} and \cref{claim:level-at-least}, we have $\ell_2(w)\geq 25\dhat$. Consider the path from $r_2$ to $w$ in $T_2$.  Let $z$ be a vertex at distance $\distz$ from $w$ along this path. Note that $z$ does not belong to $G_3$. Since $ \ell_2(w)\geq 25\dhat$, we know that $z$ exists.  The construction of these vertices is illustrated in \cref{fig:twocuts2}.

Using \cref{claim:levels-w} with $x = w$ and noting that $\ell_2(z) = \ell_2(w) - \distz$, we obtain: 

\begin{claim}
\label{claim:levels-z}
$\abs{\ell_2(z)-\ell_2(u)}$ and $\abs{\ell_2(z)-\ell_2(v)}$ are at most $11\dhat$.
\end{claim}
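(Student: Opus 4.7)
The plan is to prove this by a direct triangle-inequality computation, leveraging the level bound established in Claim~\ref{claim:levels-w}. The vertex $z$ was constructed to sit on the unique $r_2$-to-$w$ path in the BFS tree $T_2$, at tree-distance exactly $\distz = 9\dhat$ from $w$. Since $T_2$ is a BFS tree, moving along such a root-to-node path toward the root strictly decreases the level by one per edge, so $\ell_2(z) = \ell_2(w) - 9\dhat$. This is the single structural observation the proof rests on.

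Given that identity, I would write
\[
|\ell_2(z)-\ell_2(u)| \;=\; \bigl|(\ell_2(w)-9\dhat) - \ell_2(u)\bigr| \;\le\; |\ell_2(w)-\ell_2(u)| + 9\dhat,
\]
and then invoke Claim~\ref{claim:levels-w} (applied with $x=w$, which lies in $V(G_3)$ by construction as a vertex on the shortest $u$--$v$ path inside $G_3$) to bound $|\ell_2(w)-\ell_2(u)| \le 2\dhat$. This gives $|\ell_2(z)-\ell_2(u)| \le 11\dhat$. The bound for $v$ is symmetric: apply the same decomposition with $v$ in place of $u$, again using $|\ell_2(w)-\ell_2(v)| \le 2\dhat$ from Claim~\ref{claim:levels-w}.

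There is essentially no obstacle here; the only point worth double-checking is that $w$ is indeed a vertex to which Claim~\ref{claim:levels-w} applies, i.e.\ that $w \in V(G_3)$. This is immediate from how $w$ was chosen (a vertex on the shortest $u$--$v$ path $P$ inside $G_3$). Note that $z$ itself need not lie in $V(G_3)$ (in fact, it does not), so Claim~\ref{claim:levels-w} cannot be invoked on $z$ directly; the triangle inequality through $w$ is what makes the argument go through.
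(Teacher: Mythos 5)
Your proof is correct and matches the paper's argument exactly: the paper also derives the bound from the identity $\ell_2(z) = \ell_2(w) - 9\dhat$ combined with Claim~\ref{claim:levels-w} applied at $x=w$, then finishes by the triangle inequality.
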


\subsection{Construction of $K_{3,3}$ Minor} 
First, the following is a straightforward property of BFS trees that we will use repeatedly.
	
\begin{claim}\label{claim:shortest-path-implies-partition}
For $i\in \{1,2\}$, consider a shortest path $Q$ in $G_i$ from $r_i$ to any vertex $x$. For any $x'\in Q$, if both $x,x'\in G_{i+1}$, then every vertex between $x,x'$ in $Q$ also lies in $G_{i+1}$.
\end{claim}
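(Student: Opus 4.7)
The plan is to exploit two structural facts: (i) a shortest path from $r_i$ to $x$ passes through vertices whose levels increase by exactly one at each step, and (ii) every cut made by \cref{alg:kpr-modified} in iteration $i$ is ``horizontal'' with respect to the BFS levels from $r_i$, so each connected component of $G_i\setminus F$ (after the iteration) occupies a contiguous slice of levels. Combining these gives the claim almost immediately.

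First, I would record the level fact. Because $Q$ is a shortest path from $r_i$, any prefix of $Q$ is itself a shortest path from $r_i$ to its endpoint, so the $j$-th vertex of $Q$ has BFS level exactly $j$. In particular, writing the vertices of $Q$ in order as $r_i=y_0,y_1,\ldots,y_{\ell_i(x)}=x$, we have $\ell_i(y_j)=j$, and if $x'=y_{j^*}$ then $\ell_i(x')=j^*\le \ell_i(x)$. Consequently, the vertices strictly between $x'$ and $x$ on $Q$ are precisely $y_{j^*+1},\ldots,y_{\ell_i(x)-1}$, and their levels fill out the interval $\bigl(\ell_i(x'),\ell_i(x)\bigr)$ contiguously.

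Second, I would describe how the cuts interact with levels. In iteration $i$ of \cref{alg:kpr-modified}, the cut edges are exactly those inter-level edges whose upper endpoint lies at a level $\ell$ with $\ell\equiv k_1\pmod R$ or $\ell\equiv k_1+k_2\pmod R$. Thus every surviving edge of $G_i\setminus F$ either stays within one level or crosses between two consecutive levels $\ell,\ell+1$ with $\ell+1$ not congruent to either of the cut values. As a result, any connected component $H$ of $G_i\setminus F$ is contained in a maximal ``slice'' $[a,b]$ of levels bounded above and below by cut positions (or by $0$), and no surviving edge connects $H$ to a vertex outside that slice.

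Third, I would finish the argument. Since $x,x'\in G_{i+1}$ and $G_{i+1}$ is a single component of $G_i\setminus F$, both $\ell_i(x)$ and $\ell_i(x')$ lie in the same slice $[a,b]$; by monotonicity, so do all intermediate levels on $Q$. Now take any $y$ strictly between $x'$ and $x$ on $Q$. The sub-path of $Q$ from $x'$ to $y$ consists only of edges between consecutive levels inside $[a,b]$, none of which are cut, so this sub-path survives in $G_i\setminus F$. Hence $y$ is connected to $x'$, and therefore lies in the same component $G_{i+1}$, completing the proof.

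The only subtlety I anticipate is the precise identification of which inter-level edges are removed (tree edges vs.\ non-tree edges), but the definition in \cref{sec:kpr} makes clear that \emph{every} edge between levels $\ell$ and $\ell+1$ with $\ell$ matching one of the cut values is removed, so the ``horizontal slice'' picture applies uniformly. Once that is set, the proof is essentially a one-line monotonicity argument and I do not foresee any real obstacle.
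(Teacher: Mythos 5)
Your proof is correct, and it supplies exactly the argument the paper leaves implicit (the paper states this claim as a ``straightforward property of BFS trees'' without proof): levels increase by one along a root-to-$x$ shortest path, the cuts in iteration $i$ remove precisely the edges crossing the chosen level boundaries so each surviving component sits inside one contiguous level slice, and hence the subpath of $Q$ between $x'$ and $x$ survives intact and stays connected to $x'$ in $G_{i+1}$. The only nitpick is the phrase ``upper endpoint'' when identifying which inter-level edges are cut (the paper's convention is that an edge at level $\ell$ joins levels $\ell$ and $\ell+1$), but this does not affect the slice picture or the conclusion.
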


In the narrative below, we assume $x\in\{u,v,z\}$. Define $P_1(x)$ as the shortest path from $r_1$ to $x$ in $G_1$.  Further, define
\begin{align*}
P_1^-(x)&=\{x'\in P_1(x) \mid x'\in G_1\setminus G_2\}\\
P_1^+(x)&=\{x'\in P_1(x) \mid x'\in G_2\}.
\end{align*}


Using \cref{claim:shortest-path-implies-partition}, for $x\in \{u,v,z\}$, $P_1^-(x)$ and $P_1^+(x)$ are a partition of $P_1(x)$ into two subpaths. Similarly, for $x\in \{u,v,z\}$, we define $P_2(x)$ as the shortest path from $r_2$ to $x$ in $G_2$. Let
	
\begin{align*}
P_2^-(x)&=\{x'\in P_2(x) \mid \ell_{2}(x')< \ell_2(x)-4\dhat \}\\
P_2^+(x)&=\{x'\in P_2(x) \mid \ell_2(x')\geq \ell_2(x)-4\dhat\}.
\end{align*}

That is, we partition the path into two subpaths based on the distance from $r_2$. The $4\dhat$ vertices closest to $x$ along the path lie in $P_2^+(x)$. Again, by \cref{claim:shortest-path-implies-partition}, these form sub-paths of $P_2(x)$.

Recall that we had assumed for contradiction the existence of a path $P$ from $u$ to $v$ that survived $E_1 \cap E_2$, and that passes through $w$. For $y\in \{u,v\}$, we define $P_w(y)$ as the subpath of $P$ from $w$ to $x$ in $G_2$. Define $P_w(z)$ as the path from $w$ to $z$ in $T_2$. Recall that this is part of the shortest path from $r_2$ to $w$ in $G_2$. 

We again partition the first $4\dhat$ vertices of these three paths from $x \in \{u,v,z\}$ as $P_w^+(x)$ and the remaining part as $P_w^{-}(x)$. Since $P_w(u)$ and $P_w(v)$ are shortest paths in $G_3$, we have
\begin{align*}
P_w^-(x)&=\{x'\in P_w(x) \mid d_{3}(x',w)< d_3(x,w)-4\dhat \}\\
P_w^+(x)&=\{x'\in P_w(x) \mid d_3(x',w)\geq d_3(x,w)-4\dhat\}.
\end{align*}
for $x\in \{u,v\}$. Similarly, since $P_w(z)$ is a shortest path in $G_2$, we have
\begin{align*}
P_w^-(z)&=\{x'\in P_w(z) \mid d_{2}(x',w)< d_2(z,w)-4\dhat \}\\
P_w^+(z)&=\{x'\in P_w(z) \mid d_2(x',w)\geq d_2(z,w)-4\dhat\}.
\end{align*}

We have the following useful properties of these partitions.

\begin{claim}\label{claim:plus-is-not-far}
For $x\in \{u,v,z\}$ and every $y\in \{1,2,w\}$, every vertex in $P_y^+(x)$
is within distance $4\dhat$ of $x$ in $G_2$.
\end{claim}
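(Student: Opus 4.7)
The plan is to prove Claim \ref{claim:plus-is-not-far} by a case analysis on $y \in \{1, 2, w\}$, since $P_y^+(x)$ is defined quite differently in each case. The easy cases are $y = 2$ and $y = w$, where the relevant distance bound follows essentially from the definition. The only case that requires work is $y = 1$, where we must use the sandwich event $E_1$ together with $L_1$ to control how much of the BFS tree $T_1$ survives in $G_2$.

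For $y = 2$, note that $P_2(x)$ is a shortest path in $G_2$ from $r_2$ to $x$, so for any $x' \in P_2^+(x)$ the sub-path from $x'$ to $x$ along $P_2(x)$ lies entirely in $G_2$ and has length exactly $\ell_2(x) - \ell_2(x') \le 4\dhat$; hence $d_2(x, x') \le 4\dhat$. For $y = w$ and $x \in \{u,v\}$: $P_w(x)$ is a subpath of the shortest $u$-$v$ path $P$ in $G_3$, and the prefix $P_w^+(x)$ has length at most $4\dhat$ in $G_3$; since $V(G_3) \subseteq V(G_2)$ and the edges of $P$ lie in $G_2$, this gives $d_2(x, x') \le d_3(x, x') \le 4\dhat$. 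For $y = w$ and $x = z$: $P_w(z)$ is the prefix of the shortest $r_2$-$w$ path in $T_2$ ending at $z$, which lies in $G_2$, and $P_w^+(z)$ is chosen so that the sub-path from $x'$ to $z$ along $P_w(z)$ has length at most $4\dhat$, again giving $d_2(x', z) \le 4\dhat$.

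The main case is $y = 1$. Here the plan is to show that the levels of all vertices of $V(G_2)$, measured in $T_1$, lie in a window of width strictly less than $3\dhat$. Since $G_2$ is the component of $G_1\setminus F$ containing $u$ and $v$ after the two cuts of phase~$i=1$, and using the definition of ``edge at level $\ell$'' from Section~\ref{sec:kpr}, $V(G_2)$ consists exactly of vertices whose $T_1$-levels lie strictly between the two chosen cut levels. Event $E_1$ forces the lower cut level to be at least $\min(\ell_1(u), \ell_1(v)) - \dhat$ and the upper cut level to be at most $\max(\ell_1(u), \ell_1(v)) + \dhat$; combined with $L_1$ (which gives $|\ell_1(u) - \ell_1(v)| < \dhat$), the width of this window is strictly less than $3\dhat$. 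Therefore for any $x \in V(G_2) \cap \{u,v,z\}$ and any $x' \in P_1^+(x)$, the segment of $P_1(x)$ between $x'$ and $x$ has length $|\ell_1(x) - \ell_1(x')| < 3\dhat$. By Claim~\ref{claim:shortest-path-implies-partition}, every vertex of this segment lies in $V(G_2)$, and consecutive vertices, which differ in $T_1$-level by exactly one, are therefore connected by edges of $G_2$ (none of them being cut edges). So this segment is a walk in $G_2$ from $x$ to $x'$, yielding $d_2(x, x') < 3\dhat < 4\dhat$.

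The only delicate point in the plan is bookkeeping the ``off-by-one'' in the $y=1$ case: one must verify that the endpoints of the window defined by the two cuts of $E_1$ indeed include levels $\ell_1(u)$ and $\ell_1(v)$ (so that $u, v \in V(G_2)$ as assumed), and that a path of vertices all strictly inside this window traverses only edges that were not removed. Once this is set up correctly, the bound $3\dhat < 4\dhat$ follows directly and the claim is established.
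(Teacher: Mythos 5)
Your proof is correct and follows essentially the same route as the paper: the cases $y=2$ and $y=w$ are definitional (using that distances only grow from $G_2$ to $G_3$), and the case $y=1$ combines a bound on $T_1$-levels of vertices of $G_2$ with \cref{claim:shortest-path-implies-partition} to show the sub-path of $P_1(x)$ survives in $G_2$. The only cosmetic difference is that you re-derive the level bound directly from $E_1\cap L_1$ (getting a window of width $<3\dhat$), whereas the paper invokes \cref{claim:levels} and a triangle inequality through $u$ to get $\abs{\ell_1(x')-\ell_1(x)}\le 4\dhat$; both suffice.
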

\begin{proof}
This is true by definition for  $P_2^+(x)$. For $P_3^+(x)$, this also follows from the definition because the distance between pairs can only increase from $G_2$ to $G_3$. Consider some vertex $x'\in P_1^+(x)$. Applying \cref{claim:levels} on both $x'$ and $x$, we get that $\abs{\ell_1(x')-\ell_1(x)}\leq 4\dhat$. But since $P_1^+(x)$ is a shortest path from $r_1$ to $x$,	this means that the path from $x$ to $x'$ can be at most $4\dhat$ long since $\ell_1(x')$ increases linearly along the path from $r_1$  to $x$. Since this entire path survives in $G_2$ (from \cref{claim:shortest-path-implies-partition}), it must also mean that the distance between $x'$ and $x$ in $G_2$ must also be at most $4\dhat$.
\end{proof}

We finally describe the $K_{3,3}$ minor. We set

\begin{itemize}
\item $A_1=P_1^-(u)\cup P_1^-(v)\cup P_1^-(z)$
\item $A_2=P_2^-(u)\cup P_2^-(v)\cup P_2^-(z)$
\item $A_3=P_w^-(u)\cup P_w^-(v)\cup P_w^-(z)$
\item $A_4=P_1^+(u)\cup P_2^+(u)\cup P_w^+(u)$
\item $A_5=P_1^+(v)\cup P_2^+(v)\cup P_w^+(v)$
\item $A_6=P_1^+(z)\cup P_2^+(z)\cup P_w^+(z)$
\end{itemize}
	
The following lemma will complete the proof of \cref{lem:kpr-2cuts-main}.

\begin{lemma}
 \label{lem:K33}
 After contracting each $A_i$, $(\{A_1,A_2,A_3\},\{A_4,A_5,A_6\})$ form a $K_{3,3}$ graph.
 \end{lemma}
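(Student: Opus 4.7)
The plan for \cref{lem:K33} is to verify three conditions that together exhibit $K_{3,3}$ as a minor of $G$, contradicting the planarity assumption and hence completing the proof of \cref{lem:kpr-2cuts-main} by contradiction. The conditions are: (i) each of the six branch sets $A_1, \ldots, A_6$ is connected; (ii) the six branch sets are pairwise disjoint; and (iii) for each of the nine bipartite pairs $(A_i, A_j)$ with $i \in \{1,2,3\}$ and $j \in \{4,5,6\}$, some edge of $G$ has one endpoint in $A_i$ and the other in $A_j$.

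Conditions (i) and (iii) are mostly bookkeeping. For (i), each of $A_1, A_2, A_3$ is a union of three sub-paths that share a common hub vertex. $A_1$ is connected through $r_1$, which lies in $G_1 \setminus G_2$ (event $E_1$ places its upper cut well above $r_1$'s level) and therefore sits on every $P_1^-(x)$. $A_2$ is connected through $r_2$, which at level $0$ falls below the $\ell_2(x) - 4\dhat$ cutoff and so lies on every $P_2^-(x)$. $A_3$ is connected through $w$, which by construction lies on every $P_w^-(x)$. Each of $A_4, A_5, A_6$ is a union of three paths meeting at the common endpoint $u$, $v$, or $z$ respectively. For (iii), each path $P_y(x)$ (with $y \in \{1,2,w\}$ and $x \in \{u,v,z\}$) is partitioned into $P_y^-(x)$ inside the appropriate $A_i$ with $i \in \{1,2,3\}$ and $P_y^+(x)$ inside the $A_j$ corresponding to $x$, and the edge of $P_y(x)$ crossing this partition supplies the required $G$-edge between those two branch sets, giving all nine bipartite edges.

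The main obstacle is (ii). The pair $A_1$ against any other $A_j$ is immediate since $A_1 \subseteq G_1 \setminus G_2$ while every other set lies in $G_2$. For the three pairs among $\{A_4, A_5, A_6\}$, I would use \cref{claim:plus-is-not-far} to contain each set in a $G_2$-ball of radius $4\dhat$ around its hub, and then combine \cref{claim:levels,claim:levels-w,claim:level-at-least,claim:levels-z} with $d(u,v) = \const\dhat$ and $\distz = 9\dhat$ to separate these balls via level differences in the BFS trees. For pairs $(A_i, A_j)$ with $i \in \{2,3\}$ and $j \in \{4,5,6\}$, the $4\dhat$ threshold in the $P_y^-$ definitions forces $A_i$'s vertices strictly outside the $4\dhat$-ball around each of $u, v, z$. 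The subtlest case is $A_2 \cap A_3 = \emptyset$: $A_2$'s vertices sit on $T_2$-ancestor paths at levels below $\min_x \ell_2(x) - 4\dhat$, whereas $A_3$'s vertices lie either on the $G_3$ path $P$ between $u$ and $v$ (whose $T_2$-levels stay within $2\dhat$ of $\ell_2(u)$ by \cref{claim:levels-w}) or within $G_2$-distance $5\dhat$ of $w$ along the $r_2$-to-$w$ BFS path; in both cases $A_3$'s levels stay strictly above $A_2$'s, with the constant $\const$ providing the slack for the arithmetic. Once (i)--(iii) are all verified, contracting each $A_i$ yields the forbidden $K_{3,3}$ minor in the planar graph $G$, completing the proof.
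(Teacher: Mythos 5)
Your proposal follows essentially the same route as the paper: the same six branch sets, the same three-part verification (connectivity via the hub vertices $r_1,r_2,w$ and $u,v,z$; the nine bipartite edges from the crossing edge of each partitioned path; disjointness as the main effort), and the same tools (\cref{claim:plus-is-not-far} plus the level claims). One caveat: your sentence asserting that ``the $4\hat{d}$ threshold in the $P_y^-$ definitions forces $A_i$'s vertices strictly outside the $4\hat{d}$-ball around each of $u,v,z$'' overstates what the definitions give you. The threshold directly excludes a vertex of $P_2^-(u)$ only from the ball around $u$ itself (since $P_2(u)$ is a shortest path from $r_2$); whether a vertex of $P_2^-(v)$ or $P_w^-(v)$ can land in, say, $P_1^+(u)$ is not definitional, and handling these cross-terms is where most of the paper's case analysis lives --- e.g.\ a vertex of $P_2^-(v)$ close to $u$ forces $d_2(u,v)\le 9\hat{d}$ via $L_2$, and a vertex of $P_w^-(v)$ in $P_1^+(u)$ would shortcut the supposedly shortest path $P$ in $G_3$. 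You gesture at the right level/distance contradictions elsewhere in the sketch, so this is a matter of unexpanded detail rather than a wrong step, but those cases do each need their own argument rather than following from the $P_y^-$ definitions alone.
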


\subsection{Proof of \cref{lem:K33}}
The rest of the section is devoted to proving this lemma. First, it can be seen that each $A_i$ is a connected	subgraph. 	
It is also clear that the required edges for the contracted graph to be a $K_{3,3}$ are present.
We can think of representing
the supernodes $A_1$ through $A_6$
by the vertices $r_1,r_2,w,u,v$, and $z$
respectively. The path between
$r_1$ and $u$ is partitioned
between $A_1$ and $A_4$, and hence, the
supernodes have an edge between them.
Similarly, the path between $r_2$
and $u$ is partitioned between $A_2$
and $A_4$, and the path between $w$
and $u$ is partitioned between $A_3$
and $A_4$.

It remains to be argued that each $A_i$ is disjoint from the others. The next set of claims will show that, hence completing the proof of \cref{lem:K33}. 
	
\begin{claim}\label{lem:a1-disj-aj}
$A_1$ is disjoint from every other $A_j$.
\end{claim}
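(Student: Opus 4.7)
The plan is to reduce the claim to a simple containment observation: every vertex of $A_1$ lies outside $V(G_2)$, whereas every vertex of each other $A_j$ lies inside $V(G_2)$. Once both facts are in place, disjointness is immediate from $A_1 \cap V(G_2) = \emptyset$ and $A_j \subseteq V(G_2)$.

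First I would unpack the definition of $A_1 = P_1^-(u) \cup P_1^-(v) \cup P_1^-(z)$. By construction, $P_1^-(x) \subseteq V(G_1) \setminus V(G_2)$ for each $x \in \{u,v,z\}$, so every vertex in $A_1$ fails to belong to $V(G_2)$. This is just reading off the definition of the ``minus'' portion of the $r_1$-to-$x$ path, with no further work needed.

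Next I would verify, piece by piece, that $A_j \subseteq V(G_2)$ for $j \in \{2,3,4,5,6\}$. For $A_2$, each $P_2^\pm(x)$ is a subset of the shortest path $P_2(x)$ in $G_2$, hence sits in $V(G_2)$. For $A_4$, $A_5$, $A_6$, the $P_1^+(x)$ pieces are in $V(G_2)$ by the very definition of $P_1^+$; the $P_2^+(x)$ pieces are again subsets of $P_2(x) \subseteq V(G_2)$; and the $P_w^+(x)$ pieces also lie in $V(G_2)$ (for $x \in \{u,v\}$ they sit inside the surviving path $P \subseteq V(G_3) \subseteq V(G_2)$, while for $x = z$ they sit on the shortest $r_2$-to-$w$ path in $T_2$, which is entirely in $V(G_2)$). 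The same reasoning handles $A_3$, whose components $P_w^-(u), P_w^-(v), P_w^-(z)$ live in $V(G_3)$ or on the $T_2$ path from $r_2$ to $w$.

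The main obstacle, if one can call it that, is really just bookkeeping: one must remember the hierarchy $G_1 \supseteq G_2 \supseteq G_3$ maintained across phases, and that $T_2$ is a BFS tree built inside $G_2$. There is no geometric content specific to this claim; disjointness of $A_1$ from the other supernodes is forced by the way the ``minus'' portions were defined (vertices outside $G_2$) versus the ``plus'' portions (vertices inside $G_2$). The genuinely delicate disjointness arguments will come in the subsequent claims, where one must separate supernodes $A_2,\ldots,A_6$ that all live in $V(G_2)$; those will require the level bounds of \cref{claim:levels}, \cref{claim:levels-w}, \cref{claim:levels-z}, together with \cref{claim:plus-is-not-far}, whereas the present claim is essentially a warm-up.
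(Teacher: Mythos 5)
Your proof is correct and takes exactly the same route as the paper, which simply observes that $A_1$ is the only supernode contained in $G_1\setminus G_2$ while all others lie in $G_2$; you have merely spelled out the containment checks that the paper leaves implicit.
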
	
\begin{proof}
This follows because $A_1$ is the only set that lies in $G_1\setminus G_2$.
The others all lie in $G_2$.	
\end{proof}

\begin{claim}\label{lem:a4-a5-a6-disj}
$A_4,A_5,A_6$ are all mutually disjoint.
\end{claim}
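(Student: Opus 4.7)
The plan is to observe that each of $A_4, A_5, A_6$ is contained in a small $G_2$-ball around the ``center'' vertex $u, v, z$ respectively, and then to combine this with the fact that $u, v, z$ are pairwise far apart in $G_2$.

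First, I would argue that every vertex of $A_4$ lies at $G_2$-distance at most $4\dhat$ from $u$; the analogous statement for $A_5$ (around $v$) and for $A_6$ (around $z$) will follow symmetrically. For the $P_1^+(\cdot)$ piece this is exactly \cref{claim:plus-is-not-far}. For $P_2^+(\cdot)$, the definition places a vertex $x'$ within $4\dhat$ BFS-levels of the endpoint along the shortest path $P_2(\cdot)$ in $G_2$, which directly yields $d_2(x',x) \le 4\dhat$. For $P_w^+(u)$ and $P_w^+(v)$, the definition gives $d_3(x',x) \le 4\dhat$, and since $G_3$ is a subgraph of $G_2$, also $d_2(x',x) \le 4\dhat$. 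For $P_w^+(z)$, the subpath from $w$ to $z$ in $T_2$ has length exactly $\distz = 9\dhat$, and a vertex $x'$ on it with $d_2(x',w) \ge d_2(z,w) - 4\dhat$ lies within $G_2$-distance $4\dhat$ of $z$ along this tree path.

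Next, I would lower bound the pairwise $G_2$-distances among $u, v, z$. Since $G_2$ is a subgraph of $G$, $d_2(u,v) \ge d(u,v) = d \ge \const\,\dhat$, which far exceeds $8\dhat$. By the construction of $w$ together with \cref{claim:level-at-least}, $d_2(u,w) \ge d/2 - 1 > 28\dhat$, and $z$ lies on the $T_2$-path from $r_2$ to $w$ at tree-distance $\distz = 9\dhat$ from $w$, so $d_2(w,z) \le 9\dhat$. The triangle inequality then yields
\[
d_2(u,z) \;\ge\; d_2(u,w) - d_2(w,z) \;>\; 28\dhat - 9\dhat \;=\; 19\dhat \;>\; 8\dhat,
\]
and the same argument applied to $v$ gives $d_2(v,z) > 19\dhat$.

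The conclusion is then immediate by triangle inequality: if some $x'$ lay in both $A_4$ and $A_5$, the first paragraph would give $d_2(u,x'), d_2(v,x') \le 4\dhat$ and hence $d_2(u,v) \le 8\dhat$, contradicting the second. The same argument rules out $A_4 \cap A_6$ and $A_5 \cap A_6$. The only mildly subtle point will be the bookkeeping in the first step, since the three ``$+$'' pieces making up each $A_i$ are originally defined with respect to distances (or tree paths) in three different graphs $G_1, G_2, G_3$; once each is translated to a $4\dhat$-bound in the common metric $G_2$, the distance contradictions are routine.
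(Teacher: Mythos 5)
Your proof is correct and follows essentially the same route as the paper: it bounds each of $A_4,A_5,A_6$ inside a $4\dhat$-ball (in $G_2$) around $u,v,z$ respectively via \cref{claim:plus-is-not-far}, and then derives a contradiction from the pairwise $G_2$-distances among $u,v,z$ (the paper phrases the $A_4\cap A_6$ case as a bound on $d_2(u,w)$ through $x$ and $z$ rather than first lower-bounding $d_2(u,z)$, but these are the same triangle inequality). The only difference is that you re-derive parts of \cref{claim:plus-is-not-far} inline, which the paper proves once as a standalone claim.
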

\begin{proof}
Assume for a contradiction that there is some $x\in A_4\cap A_5$. Then $d_2(u,v)\leq d_2(u,x)+d_2(x,v)\leq 8\dhat$ from \cref{claim:plus-is-not-far} which is a contradiction since $d_2(u,v)\geq d$.
Instead, if there is some $x\in A_4\cap A_6$, then we have $d_2(u,w)\leq d_2(u,x)+d_2(x,z)+d_2(z,w)\leq 13\dhat$ which is a contradiction since $d_2(u,w)\geq \frac{d}{2}$. Similarly if there is some $x\in A_5\cap A_6$.	
\end{proof}
 
\begin{claim}\label{lem:a2-disj-a4a5a6}
$A_2$ is disjoint from $A_4,A_5, A_6$.
\end{claim}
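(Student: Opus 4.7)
The plan is to argue by contradiction, handling all three intersections uniformly by case analysis on a carefully chosen pair of ``anchor'' vertices. Let $y'$ denote the anchor of $A_j$: $y' = u$ for $j=4$, $y' = v$ for $j=5$, and $y' = z$ for $j=6$. Suppose $x \in A_2 \cap A_j$. From $x \in A_j$ and \cref{claim:plus-is-not-far}, we get $d_2(x, y') \leq 4\dhat$, so $\ell_2(x) \geq \ell_2(y') - 4\dhat$. From $x \in A_2$, there is some $y \in \{u, v, z\}$ with $x \in P_2^-(y)$; by definition $x$ lies on the tree path $P_2(y) \subseteq T_2$ at level $\ell_2(x) < \ell_2(y) - 4\dhat$, so $d_2(x, y) = \ell_2(y) - \ell_2(x) > 4\dhat$.

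I would then split on the pair $(y, y')$. When $y = y'$, the inequalities $d_2(x, y) > 4\dhat$ and $d_2(x, y') \leq 4\dhat$ are already contradictory. When $y, y' \in \{u, v\}$ with $y \neq y'$, a single triangle inequality $d_2(u, v) \leq d_2(u, x) + d_2(x, v)$, together with $\ell_2(y) - \ell_2(x) \leq \ell_2(y) - \ell_2(y') + 4\dhat$ and the bound $|\ell_2(u) - \ell_2(v)| < \dhat$ from event $L_2$, yields $d_2(u, v) \leq 9\dhat$, contradicting $d_2(u, v) \geq d = \const\dhat$.

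When $y = z$ and $y' \in \{u, v\}$, I would use the tree structure: since $x \in P_2(z)$ is an ancestor of $z$ in $T_2$, we have $\ell_2(x) \leq \ell_2(z)$. Combined with $\ell_2(z) = \ell_2(w) - 9\dhat$, the bound $|\ell_2(w) - \ell_2(y')| \leq 2\dhat$ from \cref{claim:levels-w}, and $\ell_2(x) \geq \ell_2(y') - 4\dhat$, this yields $\ell_2(y') - 4\dhat \leq \ell_2(x) \leq \ell_2(z) \leq \ell_2(y') - 7\dhat$, a contradiction.

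The most delicate case, which I expect to be the main obstacle, is $y \in \{u, v\}$ and $y' = z$: here level considerations alone do not force a contradiction, since $x$ could plausibly sit within $4\dhat$ of $z$ in $G_2$ while being more than $4\dhat$ below $y$ on $P_2(y)$. The idea is to route the triangle through $w$: $d_2(y, w) \leq d_2(y, x) + d_2(x, z) + d_2(z, w)$. The middle term is at most $4\dhat$ by \cref{claim:plus-is-not-far}, the last is exactly $9\dhat$ by the construction of $z$ on $P_2(w)$, and the first is $d_2(y, x) = \ell_2(y) - \ell_2(x) \leq \ell_2(y) - \ell_2(z) + 4\dhat \leq 15\dhat$ using $\ell_2(y) - \ell_2(z) \leq 11\dhat$ from \cref{claim:levels-z}. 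Summing gives $d_2(y, w) \leq 28\dhat$, contradicting the defining property $d_2(u, w), d_2(v, w) > 28\dhat$ of $w$. This last step is exactly where the choice of the midpoint $w$, made so that its $d_2$-distances to both $u$ and $v$ are large, becomes essential; the level-based arguments used for the other cases cannot substitute for it.
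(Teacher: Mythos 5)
Your proof is correct and follows essentially the same route as the paper's: \cref{claim:plus-is-not-far} bounds $d_2(x,\cdot)$ from the $A_j$ side, the definition of $P_2^-$ bounds it from the $A_2$ side, and the contradictions come from $L_2$, \cref{claim:levels-w}, \cref{claim:levels-z}, and triangle inequalities routed through $z$ and $w$ against $d_2(u,v)\ge d$ and $d_2(u,w),d_2(v,w)>28\dhat$. The only (harmless) deviation is the case $x\in P_2^-(z)\cap A_4$ (or $A_5$), where you replace the paper's triangle inequality through $z$ with a direct level computation using $\ell_2(z)=\ell_2(w)-9\dhat$; both yield the same contradiction.
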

\begin{proof}
We first prove this for $A_4$. The proof is symmetric for $A_5$.
Assume for a contradiction
that there is some $x\in A_2\cap A_4$.
We have 3 cases depending
on what part of $A_2$
that
$x$ lies in.
\begin{itemize}
\item $x\in P_2^-(u)$.
$P_2^-(u)$ is disjoint from $P_2^+(u) $ by definition
so $x$ cannot belong to it.  By definition of $P_2^-(u)$ and because $P_2(u)$ is a shortest path from $r_2$ to $u$, we have	$d_2(x,u)> 4\dhat$. However, this means that it cannot belong to $P_1^+(u)$ or $P_w^+(u)$ because it would contradict \cref{claim:plus-is-not-far}.
			
\item
	$x\in P_2^-(v)$.
 Since $x\in A_4$, \Cref{claim:plus-is-not-far} implies
			that $d_2(x,u)\leq 4\dhat$
			which implies $\abs{\ell_2(x)-\ell_2(u)}\leq 4\dhat$.
	But
since event $L_2$
happened, this then implies that
	$\abs{\ell_2(x)-\ell_2(v)}\leq 5\dhat$.
    Since $x$ lies on the shortest path
    from $r_2$ to $v$, this implies that
    $d_2(x,v)\leq 5\dhat$.
    Thus, we have $d_2(u,v)\leq d_2(x,v)+d_2(x,u)\leq 9\dhat$ which is a contradiction.
    
    \item $x\in P_2^-(z)$.
This is similar to the case of $P_2^-(v)$. 
     Since $x\in A_4$, \Cref{claim:plus-is-not-far} implies
    that $d_2(x,u)\leq 4\dhat$
    which also implies $\abs{\ell_2(x)-\ell_2(u)}\leq 4\dhat$.
    But \cref{claim:levels-z}
    then implies that
    $\abs{\ell_2(x)-\ell_2(z)}\leq 15\dhat$.
    Since $x$ lies on the shortest path
    from $r_2$ to $z$, this implies that
    $d_2(x,z)\leq 15\dhat$.
    Thus, we have $d_2(u,w)\leq d_2(x,z)+d_2(x,u)\leq 19\dhat$ which is a contradiction since
    we defined $w$ with
    $d_2(u,w)\geq27\dhat$.
\end{itemize}

		We now prove it for $A_6$. The proof is similar.
  Assume for a contradiction
that $x\in A_2\cap A_6$.
We again have 3 cases depending
on what part of $A_2$
that
$x$ lies in.
		\begin{itemize}
			\item $x
   \in P_2^-(z)$ is disjoint from $P_2^+(z) $
			by definition.
			By definition
			of $P_2^-(z)$ and because $P_2(z)$
			is a shortest path from $r_2$ to $z$,
			we have	$d_2(x,z)> 4\dhat$.
			However, this means that it cannot belong
			to $P_1^+(z)$ or $P_w^+(z)$
			because it would
			contradict \cref{claim:plus-is-not-far}.
			
			\item $x\in P_2^-(v)$.
			Since it
			also belongs to $A_6$, \Cref{claim:plus-is-not-far} implies
			that $d_2(x,z)\leq 4\dhat$
			which also implies $\abs{\ell_2(x)-\ell_2(z)}\leq 4\dhat$.
			But \cref{claim:levels-z}
			then implies that
			$\abs{\ell_2(x)-\ell_2(v)}\leq 15\dhat$.
			Since $x$ lies on the shortest path
			from $r_2$ to $v$, this implies that
			$d_2(x,v)\leq 15\dhat$.
			Thus, we have $d_2(w,v)\leq d_2(x,v)+d_2(x,z)+d_2(z,w)\leq 28\dhat$ which is a contradiction
            to $d_2(u,w)> 28\dhat$.
			
			\item $x\in P_2^-(u)$. This is symmetric to the case of
			 $x\in P_2^-(v)$.
		\end{itemize}	
  This completes the proof.
	\end{proof}

	\begin{claim}\label{lem:a2-disj-a3}
		$A_2$ is disjoint from
		$A_3$.
	\end{claim}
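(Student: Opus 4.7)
The plan is to mimic the nine-case analysis of \cref{lem:a2-disj-a4a5a6}: assume toward contradiction that $x \in P_2^-(y_2) \cap P_w^-(y_3)$ for some $y_2, y_3 \in \{u, v, z\}$, and rule out each of the nine resulting combinations. The two workhorse identities are the shortest-path statements: since $x$ lies on the $T_2$ shortest path from $r_2$ to $y_2$, membership in $P_2^-(y_2)$ forces $d_2(x, y_2) = \ell_2(y_2) - \ell_2(x) > 4\dhat$; and since $P_w(z)$ is a segment of the $T_2$ shortest path from $r_2$ to $w$, membership in $P_w^-(z)$ forces $d_2(x, w) = \ell_2(w) - \ell_2(x) < 5\dhat$.

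When $y_3 \in \{u, v\}$, we have $x \in P_w(y_3) \subseteq G_3$, so \cref{claim:levels-w} gives $\ell_2(x) \ge \ell_2(u) - 2\dhat$ and $\ell_2(x) \ge \ell_2(v) - 2\dhat$. These lower bounds contradict $\ell_2(x) < \ell_2(y_2) - 4\dhat$ directly when $y_2 \in \{u, v\}$; for $y_2 = z$ we combine the definition $\ell_2(z) = \ell_2(w) - 9\dhat$ with \cref{claim:levels-w} applied at $w$ (yielding $\ell_2(w) \le \ell_2(u) + 2\dhat$) to push $\ell_2(x) < \ell_2(u) - 11\dhat$, again contradicting \cref{claim:levels-w}.

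The remaining cases have $y_3 = z$. For $y_2 = z$, the bounds $\ell_2(x) < \ell_2(z) - 4\dhat = \ell_2(w) - 13\dhat$ and $\ell_2(x) > \ell_2(w) - 5\dhat$ clash immediately. The main obstacle is the sub-case $y_2 \in \{u, v\}$: level arithmetic alone is insufficient here because \cref{claim:levels-w} only forces $|\ell_2(w) - \ell_2(y_2)| \le 2\dhat$, so the windows $\ell_2(x) < \ell_2(y_2) - 4\dhat$ and $\ell_2(x) > \ell_2(w) - 5\dhat$ genuinely overlap. The fix is to switch from levels to distances and chain the triangle inequality:
\[
d_2(y_2, w) \;\le\; d_2(y_2, x) + d_2(x, w) \;<\; \bigl(\ell_2(y_2) - \ell_2(x)\bigr) + 5\dhat \;\le\; 7\dhat + 5\dhat \;=\; 12\dhat,
\]
using $\ell_2(y_2) - \ell_2(x) < 7\dhat$, which follows from $\ell_2(x) > \ell_2(w) - 5\dhat$ combined with $\ell_2(y_2) \le \ell_2(w) + 2\dhat$. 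This contradicts the midpoint property $d_2(y_2, w) \ge 28\dhat$ built into the choice of $w$.

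With all nine cases ruled out, we obtain $A_2 \cap A_3 = \emptyset$, which supplies the last disjointness required to complete \cref{lem:K33}, and hence yields the $K_{3,3}$ minor contradicting planarity in \cref{lem:kpr-2cuts-main}.
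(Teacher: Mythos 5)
Your proposal is correct and follows essentially the same approach as the paper's proof: both argue by cases, exploit level arithmetic via \cref{claim:levels-w}, use the offset $\ell_2(z)=\ell_2(w)-9\dhat$, and for the residual case where level bounds alone overlap, fall back on the triangle inequality to contradict the midpoint property $d_2(u,w), d_2(v,w) \ge 28\dhat$ of $w$. The only difference is organizational (you enumerate nine $(y_2,y_3)$ cases; the paper groups by the $A_2$ piece and then considers $A_3$ memberships inside), and your bounds in the hard case are slightly tighter ($12\dhat$ versus the paper's $20\dhat$), but both suffice.
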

	\begin{proof}
		This primarily follows by observing
		the levels of the vertices in $G_2$.
        Assume for a contradiction
        that there is some $x\in A_2\cap A_3$.
        We have 3 cases depending on what
        part of $A_2$ that $x$
        lies in.

\begin{itemize} 
		\item $x\in P_2^-(z)$.
        Note that $x$ cannot also lie in $P_w^-(z)$
		because $P_2(z)$ and $P_w(z)$ are partitions of
		$P_2(w)$.
        We also have
		$\ell_2(x)<\ell_2(z)-4\dhat=\ell_2(w)-13\dhat$ by definition. By applying \cref{claim:levels-w} on $w$ and combining this
		with the previous inequality, we get
		$\ell_2(x)<\ell_2(u)-11\dhat$.
        This implies that
        $x$ is not in
        $G_3$.
		If $x$ also lied in $P_w^-(u)$
		or $P_w^-(v)$, this would contradict
		\cref{claim:levels-w} because $P_w(u)$
		and $P_w(v)$ are paths in $G_3$.

      \item $x\in P_2^-(u)$. By definition,
		we have
		$\ell_2(x)<\ell_2(u)-4\dhat$.
		If $x$ also lied in $P_w^-(u)$
		or $P_w^-(v)$, this would contradict
		\cref{claim:levels-w} because $P_w(u)$
		and $P_w(v)$ are paths in $G_3$.
  
        If $x\in P_w^-(z)$,
		then $d_2(x,w)\leq \distz$ because
		$P_w(z)$ is only that long. Since $P_w(z)$
		is part of the shortest path from $r_2$ to $w$,
		we also have $\abs{\ell_2(x)-\ell_2(w)}\leq \distz$.
		Applying \cref{claim:levels-w} on $w$,
		we get
		$\abs{\ell_2(x)-\ell_2(u)}\leq 11\dhat$.
		But since $x$ also lies on the shortest
		path from $r_2$ to $u$, we also get
		$d_2(x,u)\leq 11\dhat$.
		Thus, we get $d_2(w,u)\leq d_2(w,x)+d_2(x,u)\leq 20\dhat$ which is a contradiction.
        \item $x\in P_2^-(v)$
        follows similarly by symmetry.
\end{itemize}
	\end{proof}
 
	\begin{claim}\label{lem:a3-disj-a4a5a6}
		$A_3$ is disjoint from $A_4,A_5,A_6$.
	\end{claim}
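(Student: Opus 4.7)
The plan is to follow the case-analysis template used in the proofs of \cref{lem:a2-disj-a4a5a6,lem:a2-disj-a3}. Assume for contradiction $x \in A_3 \cap A_j$ for some $j \in \{4, 5, 6\}$; by the symmetry swapping $u$ and $v$ it suffices to treat $j = 4$ and $j = 6$. Each decomposes into nine subcases indexed by which of $P_w^-(u), P_w^-(v), P_w^-(z)$ contains $x$ on the $A_3$ side, and which of the three ``$+$''-pieces contains $x$ on the $A_j$ side.

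The subcases fall into three types. The diagonal cases $P_w^-(x_0) \cap P_w^+(x_0)$ are immediate by the partition definition. The cross-$w$ cases $P_w^-(y) \cap P_w^+(x_0)$ for $y \ne x_0$ reduce to noting that $P_w(u), P_w(v), P_w(z)$ pairwise share only the vertex $w$ (the first two are subpaths of the shortest $u$-$v$ path $P$ in $G_3$ meeting at $w$; $P_w(z)$ is a branch of $T_2$ out of $w$), combined with $w \notin P_w^+(x_0)$ since the relevant distance from $w$ to $x_0$ ($\geq 28\dhat$ in $G_3$ for $x_0 \in \{u, v\}$, or $= 9\dhat$ in $G_2$ for $x_0 = z$) exceeds the $4\dhat$ threshold. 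The remaining ``cross-tree'' cases $P_w^-(y) \cap P_i^+(x_0)$ for $i \in \{1, 2\}$ form the technical core; here I combine the bound $d_2(x, x_0) \leq 4\dhat$ from \cref{claim:plus-is-not-far}, the $T_2$-level bounds from \cref{claim:levels-w,claim:levels-z}, and the large lower bounds $d_3(u, v) \geq 59\dhat$, $d_3(u, w), d_3(v, w) > 28\dhat$, and $d_2(z, w) = 9\dhat$, applying triangle inequalities in $G_2$ or $G_3$ in the style of the proof of \cref{lem:a2-disj-a3}. For $j = 6$, one must also be careful that $P_w^-(z) \not\subseteq V(G_3)$, so one uses \cref{claim:levels-z} rather than \cref{claim:levels-w} to pin the $T_2$-level of $x$ before chaining.

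The main obstacle is the same-endpoint sub-family $P_w^-(x_0) \cap P_i^+(x_0)$ (e.g.\ $P_w^-(u) \cap P_1^+(u)$), where the naive level bounds from the two sides merely overlap rather than contradict. The hard part will be to upgrade the short $G_2$-path from $x$ to $x_0$ provided by \cref{claim:plus-is-not-far} into a short $G_3$-path: the plan is to invoke \cref{claim:shortest-path-implies-partition} on the $G_1 \to G_2$ transition to obtain such a path of length at most $2\dhat$ inside $G_2$ (the sharper constant coming from applying \cref{claim:levels} in $T_1$ to $x \in V(G_3) \subseteq V(G_2)$), and then argue via the BFS property in $T_2$ together with the $E_2$-pinning of the slice boundaries $(\ell_1^*, \ell_2^*]$ that this path in fact lives inside the slice, and hence inside $G_3$. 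This would give $d_3(x, x_0) \leq 2\dhat$, contradicting $d_3(x, x_0) > 4\dhat$ from $P_w^-(x_0)$. The fiddly aspect is handling the extreme boundary placements of $\ell_1^*, \ell_2^*$; this is ultimately why the constant $\const = 59$ in \cref{thm:kpr-2cuts} is chosen this large, giving enough slack for the slice-containment argument to close.
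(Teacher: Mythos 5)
Your case skeleton matches the paper's, and the diagonal and most cross cases go through as you describe. But two steps do not close as written. First, for the cross-$w$ subcases involving $z$ you assert that $P_w(u)$, $P_w(v)$, $P_w(z)$ pairwise share only the vertex $w$. That is true for $P_w(u)$ and $P_w(v)$ (they partition the shortest $u$--$v$ path $P$), but there is no reason $P_w(z)$ --- the branch of $T_2$ from $w$ toward $r_2$ --- avoids $P$; the paper does not claim this, and instead kills e.g.\ $x\in P_w^-(z)\cap P_w^+(u)$ by a distance count: $d_2(u,w)\le d_2(u,x)+d_2(x,w)\le 4\dhat+\distz=13\dhat$, contradicting $d_2(u,w)\ge 28\dhat$. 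You have the ingredients for this but invoke the wrong (unproved) fact.

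Second, and more seriously, your mechanism for the crux subcase $P_w^-(x_0)\cap P_i^+(x_0)$ does not work as described. You propose to show the short $G_2$-segment from $x$ to $x_0$ lies in the surviving slice of $T_2$ using only the $E_2$-pinning of the cut levels. But under $E_2$ the lower cut may sit exactly at $\min(\ell_2(u),\ell_2(v))$, so a vertex even one $\ell_2$-level below $x_0$ need not be in $G_3$; level proximity to $x_0$ alone cannot certify slice membership, and no choice of the constant $\const$ fixes this. The paper's route is both simpler and correct: since $x\in P_w^-(x_0)\subseteq P\subseteq G_3$ \emph{and} $x_0\in G_3$, and $P_2^+(x_0)$ is a terminal segment of a shortest path from $r_2$ in $G_2$ (so $\ell_2$ is monotone along it and the intermediate levels are sandwiched between $\ell_2(x)$ and $\ell_2(x_0)$, both inside the slice), \cref{claim:shortest-path-implies-partition} applied to the $G_2\to G_3$ transition puts the whole segment in $G_3$, whence $d_3(x,x_0)\le 4\dhat$ by \cref{claim:plus-is-not-far}, contradicting $d_3(x,x_0)>4\dhat$ from the definition of $P_w^-(x_0)$. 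The key fact you are missing is that \emph{both} endpoints of the segment are already known to lie in $G_3$; once you use that, no boundary-placement analysis is needed. Your closing speculation that $\const=59$ is sized for a slice-containment argument is also off --- the slack is consumed by the triangle-inequality contradictions such as $d_2(u,w)\ge 28\dhat$.
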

 	\begin{proof}
		We first show that
		$A_3$ is disjoint from $A_4$.
		The proof that it is disjoint from $A_5$
		follows by symmetry.
		Assume for a contradiction that there is some
$x\in A_3\cap A_4$.
  We have 3 cases depending on
  what part of $A_3$ that $x$ lies in.
		\begin{itemize}
			\item $x\in P_w^{-}(u)$. By definition, it is disjoint from $P_w^{+}(u)$.
			Suppose that $x$
			also lied in $P_1^+(u)$ or $P_2^+(u)$.
			This means that $x\in V(G_3)$.
			By \cref{claim:shortest-path-implies-partition},
			this means that every vertex between
			$x$ and $u$ in $P_1^+(u)$ (or $P_2^+(u)$) also
			lies in $G_3$ and hence, $d_3(u,x)\leq 4\dhat$ by \cref{claim:plus-is-not-far}.
			This is a contradiction because
			$x\in P_w^-(u)$ implies that
			$d_3(x,u)>4\dhat$
			by definition.
			
			\item $x\in P_w^-(v)$. It must be disjoint
			from $P_w^+(u)$ because they are both disjoint
			subsets of the path $P_w(u)\cup P_w(v)$
			from $u$ to $v$.
			Suppose $x$ also lied in
			$P_1^+(u)$ or $P_2^+(u)$.
			By \cref{claim:shortest-path-implies-partition},
			this means that every vertex between
			$x$ and $u$ in $P_1^+(u)$ (or $P_2^+(u)$) also
			lies in $G_3$ and hence, $d_3(u,x)\leq 4\dhat$ by \cref{claim:plus-is-not-far}.
			This contradicts the fact that $P_w(u)\cup P_w(v)$ was a shortest path in $G_3$
			because the path $u\to x\to v$
			is strictly shorter, because $d_3(u,x)\leq 4\dhat<\frac{d}{2}\leq \abs{P_w(u)}$
			and $x\to v$ is a subpath of $P_w(v)$.
			\item $x\in P_w^-(z)$.
			Suppose $x$ also lied in $A_4$.
			By \cref{claim:plus-is-not-far},
			$d_2(u,x)\leq 4\dhat$.
			Thus, we get $d_2(u,w)\leq d_2(u,x)+d_w(x,w)\leq 13\dhat$
			since $x$ lies on $P_w(z)$ which has length
			$\distz$. This is a contradiction.
		\end{itemize}
		
    Now, we show that $A_3$ is disjoint from 
		$A_6$.
  Again, suppose there is some $x\in A_3\cap A_6$. We have 3 cases.
		\begin{itemize}
			\item $x\in P_w^{-}(z)$. By definition, it is disjoint from $P_w^{+}(z)$.
			We have $d_2(x,z)>4\dhat$
			by definition.
			If $x$
			also lied in $P_1^+(z)$ or $P_2^+(z)$,
			then this contradicts \cref{claim:plus-is-not-far}.
			
			\item $x\in P_w^-(u)$.
			If $x\in P_w^-(u)$,
			then $x\in V(G_3)$ and hence
			\cref{claim:levels-w} implies that
			$\abs{\ell_2(x)-\ell_2(u)}\leq2\dhat$.
			But if $x\in A_6$, then \cref{claim:plus-is-not-far}
			implies that $d_2(x,z)\leq 4\dhat\implies
			\abs{\ell_2(z)-\ell_2(x)}\leq 4\dhat$.
			We also know that
			$\ell_2(z)=\ell_2(w)-\distz$.
			Thus, it must be 
			$\abs{\ell_2(w)-\ell_2(x)}\geq 5\dhat$.
			Applying \cref{claim:levels-w}
			on $w$ gives us $\abs{\ell_2(x)-\ell_2(w)}\geq 3\dhat$.
			This contradicts the earlier inequality.
   \item  
			 $x\in P_w^-(v)$ follows by symmetry.
		\end{itemize}
  This completes the proof.
	\end{proof}

These claims complete the proof of \cref{lem:K33}, hence that of \cref{lem:kpr-2cuts-main}, and hence \cref{thm:kpr-2cuts}.



\end{document}